\documentclass[10pt, letter, a4paper]{./lib/IEEEtran}
\usepackage{amsmath}
\usepackage{mathtools}
\usepackage{array}
\usepackage{cite}
\usepackage{algpseudocode,algorithm,algpseudocode}
\usepackage{float}
\usepackage{tabularx}
\usepackage{mathrsfs} 
\usepackage{tikz}
\usetikzlibrary{chains, arrows,shapes,positioning,arrows,decorations.markings,calc}
\usepackage{pgfplots}
\pgfplotsset{compat=newest}
\usepgfplotslibrary{groupplots}
\usepgfplotslibrary{fillbetween}
\usepackage{gensymb}
\usepackage{soul}	
\usepackage{xcolor}
\usepackage[shortcuts,acronym]{glossaries}
\makeglossaries
\usepackage{fixmath}
\usepackage{amssymb}
\usepackage{bm}
\usepackage{bbm}
\usepackage{pbox}
\usepackage{subfig}
\usepackage{multirow}
\usepackage{tabularx}
\pgfplotsset{compat=newest,
	/pgfplots/ybar legend/.style={
		/pgfplots/legend image code/.code={%
			\draw[##1,/tikz/.cd,bar width=3pt,yshift=-0.2em,bar shift=0pt]
			plot coordinates {(0cm,0.8em)};},
	},}

\usepackage{amsthm}
\theoremstyle{plain}
\newtheorem{theorem}{Theorem}
\newtheorem{lemma}{Lemma}
\newtheorem{approximation}{Approximation}
\newtheorem{remark}{Remark}

\usepackage{caption}
\pagenumbering{gobble}
\setlength{\intextsep}{0.1\baselineskip}
\setlength{\floatsep}{0.1\baselineskip}
\setlength{\textfloatsep}{0.1\baselineskip} 


\makeatletter
\def\therule{\makebox[\algorithmicindent][l]{\hspace*{.5em}\vrule height .75\baselineskip depth .25\baselineskip}}%

\newtoks\therules
\therules={}
\def\appendto#1#2{\expandafter#1\expandafter{\the#1#2}}
\def\gobblefirst#1{
	#1\expandafter\expandafter\expandafter{\expandafter\@gobble\the#1}}%
\def\LState{\State\unskip\the\therules}
\def\pushindent{\appendto\therules\therule}
\def\popindent{\gobblefirst\therules}
\def\printindent{\unskip\the\therules}
\def\printandpush{\printindent\pushindent}
\def\popandprint{\popindent\printindent}

\algdef{SE}[WHILE]{While}{EndWhile}[1]
{\printandpush\algorithmicwhile\ #1\ \algorithmicdo}
{\popandprint\algorithmicend\ \algorithmicwhile}%
\algdef{SE}[FOR]{For}{EndFor}[1]
{\printandpush\algorithmicfor\ #1\ \algorithmicdo}
{\popandprint\algorithmicend\ \algorithmicfor}%
\algdef{S}[FOR]{ForAll}[1]
{\printindent\algorithmicforall\ #1\ \algorithmicdo}%
\algdef{SE}[LOOP]{Loop}{EndLoop}
{\printandpush\algorithmicloop}
{\popandprint\algorithmicend\ \algorithmicloop}%
\algdef{SE}[REPEAT]{Repeat}{Until}
{\printandpush\algorithmicrepeat}[1]
{\popandprint\algorithmicuntil\ #1}%
\algdef{SE}[IF]{If}{EndIf}[1]
{\printandpush\algorithmicif\ #1\ \algorithmicthen}
{\popandprint\algorithmicend\ \algorithmicif}%
\algdef{C}[IF]{IF}{ElsIf}[1]
{\popandprint\pushindent\algorithmicelse\ \algorithmicif\ #1\ \algorithmicthen}%
\algdef{Ce}[ELSE]{IF}{Else}{EndIf}
{\popandprint\pushindent\algorithmicelse}%
\algdef{SE}[PROCEDURE]{Procedure}{EndProcedure}[2]
{\printandpush\algorithmicprocedure\ \textproc{#1}\ifthenelse{\equal{#2}{}}{}{(#2)}}%
{\popandprint\algorithmicend\ \algorithmicprocedure}%
\algdef{SE}[FUNCTION]{Function}{EndFunction}[2]
{\printandpush\algorithmicfunction\ \textproc{#1}\ifthenelse{\equal{#2}{}}{}{(#2)}}%
{\popandprint\algorithmicend\ \algorithmicfunction}%

\makeatother

\algrenewcommand\algorithmicprocedure{\textbf{Input}}
\algrenewcommand\algorithmicreturn{\textbf{Output:}}

\begin{document}
\bstctlcite{IEEEexample:BSTcontrol}

\title{A Spatiotemporal Model for Peak AoI in Uplink IoT Networks: Time Vs Event-triggered Traffic}
	
	\author{Mustafa~Emara,~\IEEEmembership{Student Member,~IEEE,}
	Hesham~ElSawy,~\IEEEmembership{Senior Member,~IEEE,}
	Gerhard~Bauch,~\IEEEmembership{Fellow,~IEEE}
	\thanks{M. Emara is with the Germany standards R\&D team, Next Generation and Standards, Intel Deutschland GmbH and  the Institute of Communications, Hamburg University of Technology, Hamburg, 21073 Germany (e-mail: mustafa.emara@intel.com)}
	\thanks{H. ElSawy is with the Electrical Engineering Department, King Fahd University of Petroleum and Minerals, 31261 Dhahran, Saudi Arabia (email: hesham.elsawy@kfupm.edu.sa).}
	\thanks{G. Bauch is with the Institute of Communications,
		Hamburg University of Technology, Hamburg, 21073 Germany (email: bauch@tuhh.de).}
	\thanks{This paper has been submitted in part to ICC 2020 \cite{Emara2020}.}}
	\maketitle
	\thispagestyle{empty}
	\maketitle
	\thispagestyle{empty}
	
\newacronym{AoI}{AoI}{age of information}
\newacronym{BS}{BS}{base station}
\newacronym{DTMC}{DTMC}{discrete time Markov chain}
\newacronym{ET}{ET}{event-triggered}
\newacronym{5G}{5G}{fifth generation}
\newacronym{FCFS}{FCFS}{first come first serve}
\newacronym{IoT}{IoT}{Internet of Things}
\newacronym{KPI}{KPI}{key performance indicator}
\newacronym{LT}{LT}{Laplace transform}
\newacronym{LTE}{LTE}{long term evolution}
\newacronym{LCFS}{LCFS}{last come first serve}
\newacronym{MAM}{MAM}{matrix analytic method}
\newacronym{MTC}{MTC}{machine type communication}
\newacronym{MAC}{MAC}{medium access control}

\newacronym{NB-IoT}{NB-IoT}{narrowband IoT}
\newacronym{PPP}{PPP}{Poisson point processes}
\newacronym{PP}{PP}{point processes}
\newacronym{PDF}{PDF}{probability density function}
\newacronym{PAoI}{PAoI}{Peak AoI}
\newacronym{QoS}{QoS}{quality of service}
\newacronym{QCI}{QCI}{QoS class identifier}
\newacronym{QBD}{QBD}{quasi-birth-death}
\newacronym{RAT}{RAT}{radio access technology}
\newacronym{SINR}{SINR}{signal to interference noise ratio}
\newacronym{SIR}{SIR}{signal to interference ratio}
\newacronym{3GPP}{3GPP}{Third generation partnership project}
\newacronym{TSN}{TSN}{time senstive networking}
\newacronym{TSP}{TSP}{transmission success probability}
\newacronym{TT}{TT}{time-triggered}
\newacronym{URLLC}{URLLC}{ultra reliable low latency communication}
\newacronym{WA}{WA}{weighted allocation}
	\thispagestyle{empty}
\begin{abstract}
Timely message delivery is a key enabler for Internet of Things (IoT) and cyber-physical systems to support wide range of context-dependent applications. Conventional time-related metrics (e.g. delay and jitter) fails to characterize the timeliness of the system update. Age of information (AoI) is a time-evolving metric that accounts for the packet inter-arrival and waiting times to assess the freshness of information. In the foreseen large-scale IoT networks, mutual interference imposes a delicate relation between traffic generation patterns and transmission delays. To this end, we provide a spatiotemporal framework that captures the peak AoI (PAoI) for large scale IoT uplink network under time-triggered (TT) and event triggered (ET) traffic. Tools from stochastic geometry and queueing theory are utilized to account for the macroscopic and microscopic network scales. Simulations are conducted to validate the proposed mathematical framework and assess the effect of traffic load on PAoI. The results unveil a counter-intuitive superiority of the ET traffic over the TT in terms of PAoI, which is due to the involved temporal interference correlations. Insights regarding the network stability frontiers and the location-dependent performance are presented. Key design recommendations regarding the traffic load and decoding thresholds are highlighted.
\end{abstract}
\begin{IEEEkeywords}
Age of information, spatiotemporal models, Internet of Things,  queueing theory, stochastic geometry
\end{IEEEkeywords}
%

\section{Introduction}\label{section:introduction}

The exploding growth in network traffic is creating data transfer, monitoring, timing, and scaling challenges \cite{3GPP2018}. The timeliness and retainability of continuous updates of nodes within a system are overarching requirements among different technology segments, such as vehicular, industrial \ac{IoT}, and cellular \cite{Kim2012}. This implies continuous information update about the real-time states between a given source and its targeted destination \cite{NGMNA2016}. Focusing on \ac{IoT} and its underlying architecture, that include among others, advanced software components, ubiquitous sensors, autonomous actuators, and a communications network infrastructure \cite{Fuqaha2015, Soltanmohammadi2016}. This allows the devices to communicate with proximate devices and learn from their surrounding environment. One key characterization of \ac{IoT}  is the traffic generated by the \ac{IoT} devices, which governs many of the system key performance indicators. Therefore, it is important to provide a mathematical framework that can characterize the information freshness within large scale uplink \ac{IoT} networks under different  traffic models. 

\subsection{Background}

\ac{IoT} traffic can be categorized into \ac{TT} and \ac{ET} traffic, with typically sporadic and short payload packets \cite{Metzger2019}. \ac{TT} events generate periodic traffic as in vehicular communications, smart grids and wireless sensor networks \cite{Palattella2016}. As an example, one may consider a smart monitoring application where devices send timely-based updates to the network server, comprising the uncoordinated time-triggered (i.e., periodic/deterministic) traffic. In such segments, a central entity collects status updates from multiple nodes (e.g., sensors, vehicles and monitors) through wireless channels. On the other hand, \ac{ET} traffic arises in scenarios where multiple devices transmit based on detected events \cite{Gupta2017}. Such scenarios can be observed as an example in a given area where power outage occurs. Thousands of devices report their status before the outage occurrence. Support of \ac{IoT} network for the two considered traffic models is crucial to maintain network functionality and attain the required \acp{QoS} \cite{3GPP2018vert}. Throughout this work, we address the critical challenge of how to maintain timely status updates over all the connected nodes within an \ac{IoT} uplink network under the two variants of traffic models, namely \ac{TT} and \ac{ET} traffic. 

To characterize the freshness of information at the receiver, we adopt the proposed \ac{AoI} metric in \cite{Kaul2012}, which has received increasing attention in the past years. The age of information accumulates the transmission delay in addition to time elapsed between successive system updates  \cite{Yates2019}. Hence, the \ac{AoI} increases even when there are no packets in the system to account for the freshness of information. Compared to traditional time metrics (e.g. delay and jitter), \ac{AoI} captures the timeliness of updates in a way those traditional metrics do not \cite{Talak2018, Ceran2018}. 

To position our contribution in context, we first discuss a series of key prior works that studied the \ac{AoI} and its variants. Authors in \cite{Kaul2012} consider the system where a sensor generates and transmits update packets to its destination under a \ac{FCFS} principle and derive the expression of average \ac{AoI} for different queueing models. The work in \cite{Kaul2012} is extended to out-of-order packet delivery in \cite{Kam2013}. Last come first serve (LCFS) queue discipline, with and without service preemption, is studied and contrasted to \ac{FCFS} in \cite{Yates2019, Kaul2012b}. The \ac{AoI} is also characterized in \cite{Kaul2018} for prioritized packet delivery and in \cite{Talak2018} for deterministic traffic models. In summary, the aforementioned works consider a single sensor scenario. 

In addition, a number of works have considered the information freshness in \ac{IoT} networks with multiple sensors \cite{Pappas2015, Huang2015, Kaul2018}. In particular, authors in \cite{Pappas2015} consider that one transmitter sends status update packets generated from multiple sensors to the destination, and analyze the average \ac{AoI} for updates allowing the latest arrival to overwrite the previous queued ones. The authors in \cite{Costa2016,He2016} propose a new metric, namely \ac{PAoI}, that characterizes the maximum value of the age achieved immediately before receiving a new packet. Focusing on the \ac{PAoI}, \cite{Huang2015} analyzed the system performance by considering a general service time distribution, and optimized the update arrival rates to minimize its defined \ac{PAoI}-related system cost. In \cite{AoIDhillon}, the authors investigated the role of an unmanned aerial vehicle as a mobile relay to minimize the \ac{PAoI}. The joint effects of data preprocessing and transmission procedures on the \ac{PAoI} under Poisson traffic model was investigated in \cite{Xu2019} without considerations of the network-wide interference. 

While the aforementioned works characterize the \ac{AoI} at the microscopic device level, they overlook the macroscopic impact of aggregate network interference between multiple devices. In the foreseen massively loaded \ac{IoT} networks, the mutual interference between the active transmitters, trying to utilize the set of finite resources, might hinder timely updates of a given link of interest \cite{Ayoub2018}. In the context of large-scale networks, stochastic geometry is a mathematical tool that is is employed to characterize performance when accounting for mutual interference within the network \cite{Andrews2011, Elsawy_tutorial, Haenggi2012}. However, the commonly adopted full-buffer assumption hinders the evaluation of temporal based metrics such as delay and \ac{AoI}. To overcome such limitation, spatiotemporal models are developed. In particular, stochastic geometry and queueing theory are jointly utilized to characterize both the macroscopic network-wide interference and microscopic per-device queue evolution \cite{Zhong2017, Gharbieh2018, Yang2019, Chisci2019, YangAoI2019}. Capitalizing on such spatiotemporal models, delay and \ac{AoI} can be characterized and assessed in large scale massive networks. For instance, lower and upper bounds for the average \ac{AoI} are proposed under a stochastic geometry framework in \cite{Hu2018}. Additionally, \ac{AoI} under a spatiotemporal framework has recently been investigated in \cite{YangAoI2019}, where the authors investigated different scheduling techniques to optimize the \ac{PAoI} under a spatiotemporal framework. 

In summary, the \ac{AoI} in large-scale IoT networks with \ac{TT} traffic has not been studied yet. Furthermore, to the best of the authors knowledge, \ac{AoI} for uplink traffic in large-scale IoT networks is a still an open research problem. To this end, characterizing the \ac{AoI} leads to informed insights on how to enhance the performance of time-critical applications.  
 
\subsection{Contributions}
Throughout this work,\footnote{A simplified version of this work is presented in part in \cite{Emara2020}, which is limited to \ac{ET} traffic only.} we provide a mathematical framework to characterize the information freshness via the \ac{PAoI}. When compared to the average \ac{AoI}, \ac{PAoI} is considered throughout this work because it is more suited to provision \ac{QoS} and for min-max network design objectives \cite{Huang2015, YangAoI2019}. The macroscopic and microscopic scales of an uplink large scale \ac{IoT} network are  addressed through the proposed framework. For the macroscopic aspect, stochastic geometry is utilized to characterize for the mutual interference among active devices (i.e., position dependent). In addition, queueing models from queueing theory are adopted to account for the microscopic queue evolution at each device under the \ac{TT} and \ac{ET} traffic models. To track the queue status at a given time stamp, a \ac{DTMC} is utilized for each device. Expressions for the distribution of the \ac{TSP} are derived, which entails the effect of the considered traffic model. In addition, the \ac{TT} traffic is modeled via an absorbing Markov chain that mimics the duty cycle of the generated packets. In summary, the main contributions of this paper compared to the previously stated works are summarized as follows:
\begin{itemize}
	\item Develop a novel and tractable mathematical framework  that characterizes the spatiotemporal interactions under \ac{TT} and \ac{ET} traffic models; 
	\item Employ a DTMC at every IoT device to track the temporal dynamics for the \ac{TT} and \ac{ET} traffic models;
	\item Integrate the developed \acp{DTMC} with stochastic geometry framework to evaluate the \ac{PAoI} in large-scale IoT networks;
	\item Assess the \ac{PAoI} for the \ac{TT} and \ac{ET} traffic models; and
	\item Showcase the Pareto frontiers that characterize the network's stability regions.
\end{itemize}

\subsection{Notation and Organization}
In this work, the following notation will be adopted. Upper-case and lower-case boldface letters ($\mathbf{A}$, $\mathbf{a}$) represent matrices and vectors, respectively. $\mathbf{1}_m$ and $\mathbf{\mathcal{I}}_m$ denote, respectively, an all ones vector and matrix of dimension $m \times m$. An identity matrix of dimension $m$ is represented via $\mathbf{I}_{m}$. Over the bar operation depicts the complement operator (i.e., $\bar{v} = 1-v$). Furthermore, $\mathbbm{1}_{\{z\}}$ represents the indicator function which equals 1 if the expression $z$ is true and 0 otherwise. The probability of an event and its expectation are given by $\mathbb{P}\{\cdot\}$ and $\mathbb{E}\{\cdot\}$, respectively. 

The rest of the paper is organized as follows. Section \ref{sec:system_model} presents the system model, the underlying physical and \ac{MAC} parameters, and the \ac{PAoI} evaluation. Section \ref{sec:sg_analysis} discusses the location-dependent characterization of the network-wide interference for the \ac{TT} and \ac{ET} traffic models. The \ac{TT} and \ac{ET}  queueing models along with the microscopic intra-device interactions are provided in Section \ref{sec:QT_anaylsis}. In Section \ref{sec:simulation_results}, various simulation results and observations are discussed. Finally, Section \ref{sec:Conclusion} summarizes the work and draw final conclusions.
	\thispagestyle{empty}
\section{System Model}\label{sec:system_model}
\subsection{Spatial \& Physical Layer Parameters}
An uplink cellular network is considered in this work, where the \acp{BS} are deployed based on a \ac{PPP} $\mathrm{\Psi}$ with spatial intensity $\lambda$ BS/km$^2$. The \ac{IoT} devices follow an independent \ac{PPP} $\mathrm{\Phi}$, such that  within the Voronoi cell of every BS $b_i\in\mathrm{\Phi}$, a device is dropped uniformly and independently. All devices and \acp{BS} are equipped with single antennas. Let $r$ be the distance between a device and its serving \ac{BS} and $\eta > 2$ be the path-loss exponent, an unbounded path-loss propagation model is considered such that the signal power attenuates at the rate $r^{-\eta}$. Multi-path Rayleigh fading is assumed to characterize the small-scale fading. Additionally, $h$ and $g$ denote the intended and interference channel power gains, and are exponentially distributed with unit power gain. Spatial and temporal independence is assumed for all the channel gains. Fractional path-loss inversion power control is considered at the devices with compensation factor $\epsilon$. Accordingly, the transmit power of a device positioned $r$ meters is given by $\rho r^{\eta\epsilon}$, where $\rho$ is a power control parameter to adjust the average received power at the serving BS \cite{ElSawy2014}. In this work, a fixed, yet arbitrary network realization of the network is considered to account for the much smaller time scale of the channel fading, packet generation, and transmission when compared to the spatial network dynamics.\footnote{To analyze the location-dependent performance of the network, we consider a static network where for a generic network realization, $\mathrm{\Phi}$ and $\mathrm{\Psi}$ remain static over sufficiently large time horizon, while device activities, channel fading, and queue states vary each time slot.}

\subsection{Temporal \& MAC layer parameters}
The proposed framework studies a discretized, time slotted, and synchronized system, in which a new packet is generated at a generic device based on \ac{TT} or \ac{ET} traffic. For the \ac{TT} traffic, we consider an asynchronous homogeneous periodic packet generation with duty cycle $T$ and time-slot offset $\beta$. That is, each device in the network generates a packet (e.g., measurement or status update) periodically every $T$ time slots. However, it is not necessary that all devices in the network are synchronized to the same time slot for packet generation. Instead, it is assumed that the offset of the devices $\beta_i\in\{0,1,\cdots,T-1\}$, $\forall i \in \mathrm{\Phi}$ are  independently and uniformly distributed among the time slots within the duty cycle $T$, i.e., $\mathbb{P}\{\beta_i = \tau\}=1/T,\; \tau\in\{0,1,\cdots,T-1\}$.  For the \ac{ET} traffic, a Bernoulli traffic model is adopted, in which a new packet is generated at a generic device with an independent slot-wise arrival probability of $\alpha \in (0,1]$. 

A \ac{FCFS} discipline is considered at each device, where failed packets are persistently retransmitted till successful reception. In particular, a packet residing at a generic device is successfully decoded if the received \ac{SIR} is larger than a detection threshold $\theta$ at its serving \ac{BS}. In the case of successful decoding, an ACK is transmitted from the \ac{BS} via an error-free feedback channel so the device can drop this head of the queue packet. In the case of failed decoding, the serving \ac{BS} transmits an NACK and the packet remains at the head of the device's queue and a new transmissions is attempted in the next time slot. Error-free and negligible delay for ACK and NACK is adopted throughout this work. 
\begin{figure}
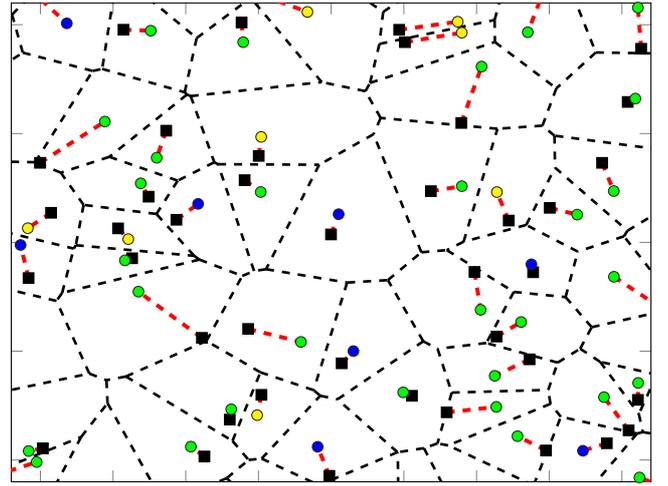

	\begin{center}
		\ifCLASSOPTIONdraftcls
		\input{system_model/figures/deployment/1col/deployment.tex}	
		\else
		\input{system_model/figures/deployment/2col/deployment.tex}	
		\fi			
		\caption{A network realization for $\lambda=10^{-6}$ $\text{BS/KM}^2$, $\theta = 1$ and $T=8$. Black squares depict the BSs while green, blue and yellow circles represent devices with empty queue, devices with the same $\beta$ and devices with residual packets attempting a retransmission due to past failed attempts, respectively. The coverage region of the \acp{BS} and their connected devices are represented by dashed black and red lines, respectively.}
		\label{fig:spatiotemporal_realization}
	\end{center}
\end{figure}

 In Fig. \ref{fig:spatiotemporal_realization}, a spatiotemporal realization of the network is shown. At a given time slot, two different states of devices can be observed i) active due to non-empty queue and ii) idle due to empty queue. Note that for the \ac{TT} traffic model, all devices with the same offset are synchronized together and become active at the same time slot. Furthermore, two devices with different offsets may become simultaneously active in case of retransmission, where the probability of simultaneous activity depends on the relative offset values between the devices and the decoding threshold $\theta$.

\subsection{Age of Information}\label{sec:AoI}
\ac{AoI} quantifies the freshness (i.e., timeliness) of information transmitted by the devices within the network \cite{Kaul2012}. For any link within the considered time slotted system, the metric $\Delta_o(t)$ tracks the \ac{AoI} evolution with time as shown in Fig. \ref{fig:AoI}. Assume that the $o$-th packet is generated at time $Y_o(t)$, then $\Delta_o(t+1)$ is computed recursively as
\begin{equation}
	\Delta_o(t+1)=\left\{\begin{array}{ll}{\Delta_o(t)+1,} & {\text { transmission failure, }} \\ {t-Y_o(t)+1,} & {\text { otherwise }}\end{array}\right.
\end{equation}
Through this paper, we consider the peak \ac{AoI}, termed through the subsequent sections \ac{PAoI}, which is defined as the value of age resulted immediately prior to receiving the $i$-th update \cite{Huang2015}. The increased focus on the \ac{PAoI} stems from the guaranteed system performance insights it unveils. In addition, the minimization of the \ac{PAoI} may be required for time critical applications.  \cite{Xu2019}. To this end, conditioned on a fixed, yet generic spatial realization, the spatially averaged \ac{PAoI},\footnotetext{It is noteworthy to mention that the considered PAoI in this work incorporates temporal and spatial averaing.} as observed from Fig. \ref{fig:AoI}, is computed as
\begin{equation}\label{eq:peakAoI}
	\mathbb{E}\{\Delta_p|  \mathrm{\Phi}\} = \mathbb{E}^{!}\Big\{\mathcal{I}_o + \mathcal{W}_o |  \mathrm{\Phi}, \mathrm{\Psi} \Big\},
\end{equation}
where $\mathbb{E}^{!}\{.\}$ is the reduced Palm expectation \cite{Haenggi2012}, $\mathcal{I}_o$ and $\mathcal{W}_o$ denote the inter-arrival time between consecutive packets and the waiting time of a generic packet in the queue, respectively. As observed, the evaluation of the waiting time is required to evaluate the \ac{PAoI}. The waiting time depends on, among other parameters, the adopted traffic model, queue distribution and network-wide aggregate interference. Throughout this paper, we provide a spatiotemporal mathematical framework to characterize the \ac{PAoI}.
\begin{figure}
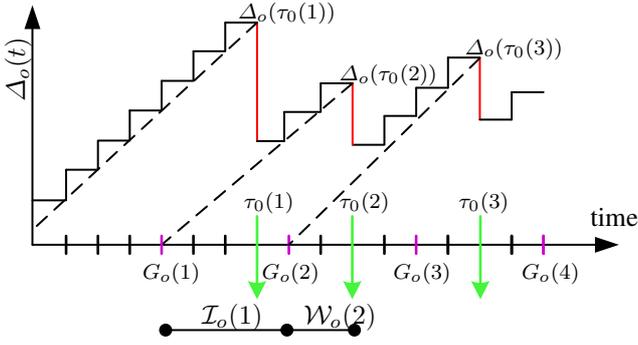

	\begin{center}
		\ifCLASSOPTIONdraftcls
		\input{system_model/figures/AoI/1col/AoI.tex}
		\else
		\input{system_model/figures/AoI/2col/AoI.tex}
		\fi			
		\caption{AoI evolution of a typical link. The time stamps $G_o(n)$ and $\tau_o(n)$ denote the time at which the $n$-th packet was generated and successfully delivered. $\mathcal{I}_o(1)$ and $\mathcal{W}_o(2)$ denote the inter-arrival time and the waiting times.}
		\label{fig:AoI}
	\end{center}
\end{figure}
	\thispagestyle{empty}
	\section{Macroscopic Large Scale Analysis}\label{sec:sg_analysis}
Through this section, the network-wide aggregate interference will be discussed for the \ac{TT} and \ac{ET} traffic models.  First we start with the \ac{TT} in Section \ref{SecTT}. Afterwards, the ET traffic is presented in Section \ref{SecET}.
\subsection{TT Traffic}\label{SecTT}

Due to uplink association, the devices \ac{PP} $\mathrm{\Phi}$ is a Poisson-Voronoi perturbed \acp{PP} with intensity ${\lambda}$ \cite{Singh2015, ElSawy_meta, Vega2016, Renzo2016}. The periodic TT traffic can be incorporated to the devices \ac{PP} via the notion of marked \acp{PP}. That is, let $\tilde{\mathrm{\Phi}}=\{x_i,\beta_i\}$ be a marked \ac{PP} with points $x_i\in \mathrm{\Phi}$ and time offset marks $\beta_i$ drawn from the uniform distribution $\mathbb{P}\{\beta_i=\tau\}=1/T\; ,\tau\in\{0,1,\cdots,T-1\}$. In addition, let $\tilde{\mathrm{\Phi}}_\tau=\{(x_i,\beta_i) \in\tilde{\mathrm{\Phi}} : \beta_i=\tau\}$ be the \ac{PP} where all the devices have identical time offset. Due to the independent and uniform distribution of the time offsets, the intensity of $\tilde{\mathrm{\Phi}}_{\tau}$ for each $\tau \in \{0,1,\cdots,T-1\}$ is  $\frac{\lambda}{T}$.  Note that, all the devices within the same $\tilde{\mathrm{\Phi}}_\tau$ have synchronized packet generation every $T$ time slots, and hence, always interfere together in their first transmission attempt. On the other hand, two devices within different sets $\tilde{\mathrm{\Phi}}_{\tau_1}$ and $\tilde{\mathrm{\Phi}}_{\tau_2}$ for $\tau_1 \neq \tau_2$ may only interfere together due to retransmissions.   A pictorial illustrations of the transmission and mutual interference of four devices in the TT traffic model is shown in Fig.~\ref{fig:interf_per}.

Focusing on a fixed, yet arbitrary, spatial realization of $\mathrm{\Psi}$ and $\tilde{\mathrm{\Phi}}$, let $(u_o,\beta_o) \in \tilde{\mathrm{\Phi}}$,  $b_o = \text{argmin}_{b\in \mathrm{\Psi}} ||u_o-b||$ and $r_o=||u_o-b_o||$ define, respectively, the location, time offset, serving BS, and association distance of a randomly selected $o$-th device, where $||.||$ is the Euclidean norm. For the ease of notation, we define the set $\tilde{\mathrm{\Phi}}_{o,\kappa}=\{r_i=||x_i-b_o|| :  (x_i, \beta_i) \in \tilde{\mathrm{\Phi}}, \beta_i =\kappa\}$ that contains the relative distances to the serving BS of the $o$-th device from all devices, with time offset $\beta=\kappa$. Due to the adopted TT packet generation and persistent transmission scheme, the SIR exhibit a regular time slot dependent pattern that is repeated every $T$ time slots. In particular, let $\ell\in \mathbb{Z}$ be an integer and $\tau\in\{0,1,\cdots,T-1\}$ be a generic time slot within the duty cycle $T$, then the SIR of the $o$-th device at the $(\tau+\ell T)$-th time slot is given by
\begin{equation}\label{eq:SIR_TT}
\text{SIR}_{o,\tau+\ell T}^T \!=\! \frac{\rho h_{o}r_o^{\eta(1-\epsilon)}}{\!\!\!\!\!\underbrace{\sum_{r_i\in \tilde{\mathrm{\Phi}}_{o,\tau}} \!\!\!\!\! P_{i} g_{i}r_i^{-\eta}}_{\text{determistic for each $\tau$}} \!\!\! + \!\!\! \underbrace{\sum_{\kappa \neq \tau} \sum_{r_m\in \tilde{\mathrm{\Phi}}_{o,\kappa}}\!\!\!\!\!\! \mathbbm{1}_{\{a^{(m)}_{\kappa}(\tau+\ell T)\}} P_{m} g_{m}r_m^{-\eta}}_{\text{ probabilistic retransmissions}}}, 
\end{equation}
where $h_o$ is the intended channel power gain, $a^{(m)}_{\kappa}(\tau+ \ell T)$ is the event that the $m$-th device with offset $\beta_m=\kappa$ has a non-empty queue at the $(\tau+\ell T)$-th time slot, $\mathbbm{1}_{\{\cdot\}}$ is an indicator function that is equal to 1 if the event ${\{\cdot\}}$ is true and zero otherwise, $P_i$ ($P_m$) and $g_i$ ($g_m$) denote the $i$-th ($m$-th)  uplink transmit power and its channel power gain, respectively. 
 
Let $p^{(m)}_{\kappa,\tau} = \mathbb{E}\left\{\mathbbm{1}_{\{a^{(m)}_{\kappa}(\tau+ \ell T)\}} | m,\kappa,\tau\right\}$ be the probability that the $m$-th device with time offset $\beta_m=\kappa$ has a non-empty queue at the $\tau$-th time slot within any cycle $\tau+\ell T$. Then the intensity of the interfering devices within the $\tau$-th time slot  is given by
\begin{equation} \label{eq:inten_TT}
\lambda_\tau =  \frac{(1+\varTheta_\tau)\lambda}{T},	
\end{equation}
where  $\varTheta_\tau \in [0,T-1]$ is given by $\varTheta_\tau= \sum_{\kappa \neq \tau} \mathbb{E}_{\tilde{\mathrm{\Phi}}_{\kappa}}\{p^{(m)}_{\kappa,\tau}\}$. Recalling that the intensity of the devices with each distinct time offset is $\frac{\lambda}{T}$, it is clear that $\varTheta_\tau $ depicts the aggregate percentiles of devices with time offsets $\kappa\neq \tau$ that are active at time slot $\tau$. At the extreme case of flawless transmissions,  $\varTheta_\tau = 0$ and $\lambda_\tau=\frac{\lambda}{T}$, where only synchronized devices with newly generated  packets mutually interfere together. On the other extreme, assuming  backlogged queues due to poor transmission success probabilities, $\varTheta_\tau =T-1$, and hence, $\lambda_\tau=\lambda$, where all devices are always active and mutually interfere together. In realistic cases,  $0 \leq \varTheta_\tau \leq T-1$, which is the focus of the current analysis. 

\begin{figure*}[t!]
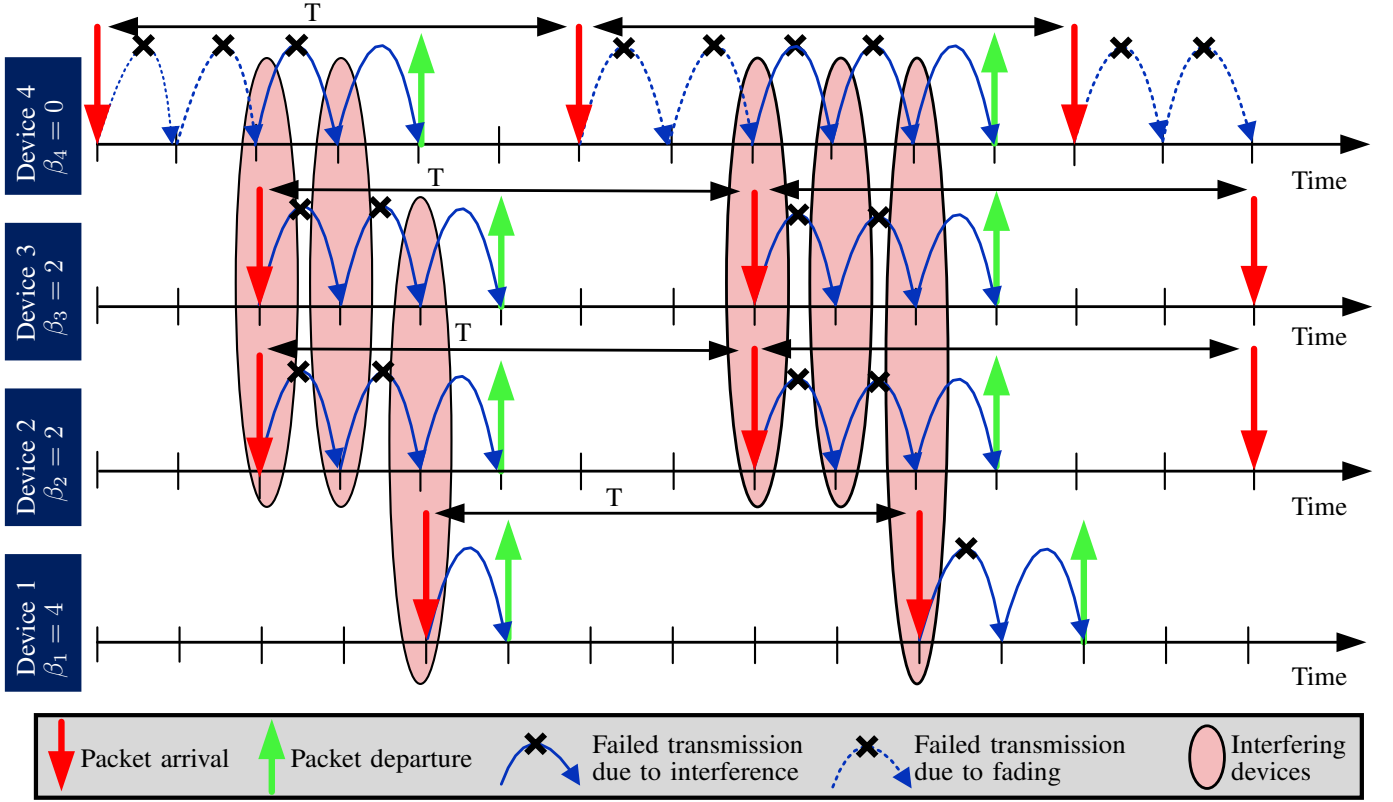

	\centering
	\ifCLASSOPTIONdraftcls
	\input{QT_anaylsis/figures/interf_per/1col/interf_per.tex}
	\vspace{-0.4in}
	\else
	\input{QT_anaylsis/figures/interf_per/2col/interf_per.tex}
	\fi 		
	\caption{Packets generation, departure process, and mutual interference between devices under TT traffic  with duty cycle $T=6$.} 
	\label{fig:interf_per}
\end{figure*}

As mentioned earlier, devices are only active when they have non-empty queues. A packet at the queue of a generic $o$-th device departs from its queue in the time slot $\tau\in \{0,1,\cdots,T-1\}$ if $\mathbb{P}\{\text{SIR}_{o,\tau}^T >\theta\}$. Since a packet is generated every $T$ slots, it is required that $\text{SIR}_{o,\tau}^T$ exceeds the threshold $\theta$ at least once for any of the time slots  $\tau \in \{0,1, \cdots,T-1\}$. Once the generated packet departs and the queue is empty, the device remains idle for the rest of the cycle until the next packet generation (cf. Fig.~\ref{fig:interf_per}). Otherwise, the departure rate is not sufficient to cope with the periodic  packet generation and packets keep accumulating in the device's queue. Such devices are never idle and are denoted hereafter as unstable devices.

As illustrated from \eqref{eq:SIR_TT} and \eqref{eq:inten_TT}, the activities of interfering devices, and consequently,  $\text{SIR}_{o,\tau}^T$ are location and time slot dependent. Due to the fixed realization of the network, the static time offsets, and the periodic generation of packets, each device experiences a location and slot-dependent pattern of $\text{SIR}_{o,\tau+\ell T}^T$ for $\tau \in \{0,1, \cdots,T-1\}$ that is repeated every cycle $\ell T,\;\forall \ell\in \mathbb{Z}$. Despite the randomness in the channel gains and the probabilistic interference of devices with different offsets, the network geometry and the periodic packet generation with static offsets have the dominating effect that highly correlates $\text{SIR}_{o,\tau+\ell T}^T$ for each $\tau$ across different cycles. Such location and time slot dependence of the SIR yields intractable analysis. Furthermore, there is no known tractable exact analysis for Poisson-Voronoi perturbed \acp{PP} \cite{Singh2015, ElSawy_meta, Vega2016, Renzo2016}. Hence, for th sake of analytical tractability, we resort to the following two approximations. 

\begin{approximation} \label{approx_DetInterf}
	The location and time slot dependent transmission success probabilities $\mathbb{P}\{\text{SIR}_{o,\tau}^T >\theta\}$ of the BSs in ${\mathrm{\Psi}}$ and  devices in $\tilde{\mathrm{\Phi}}$ are approximated by the location-dependent transmission success probabilities $\mathbb{P}\{\hat{\text{SIR}}_{o}^T >\theta\}$ where each BS in ${\mathrm{\Psi}}$ sees a fixed panorama of always active interfering devices  constituting a fixed, yet arbitrary, PPP $\hat{\mathrm{\Phi}}$ with intensity function
	\begin{equation} \label{eq:intTT2}
\lambda_T(x) =  \frac{(1+\varTheta_{T})\lambda}{T}(1-e^{-\pi\lambda x^2}) .
	\end{equation}
	where $0 \leq \varTheta_{T} \leq T-1$ is as defined in \eqref{eq:inten_TT} for a generic time slot.\footnotetext{The subscript T in $\varTheta_{T}$ depicts the TT trafic and not a specific value of the duty cycle $T$.}
\end{approximation}
\begin{remark}  
	Approximation~\ref{approx_DetInterf} can be regarded as approximating the success probability  $\mathbb{P}\{\text{SIR}_{o,\tau}^T >\theta\}$ of each device across different time slots within the same cycle $T$ by an approximate mean value $\mathbb{P}\{\hat{\text{SIR}}_{o}^T >\theta\}\approx \mathbb{E}_{\tau}\{\mathbb{P}\{{\text{SIR}}_{o},\tau\}^T >\theta\}$ to alleviate the time-slot dependence. The approximating PPP $\hat{\mathrm{\Phi}}$  is assumed to be static to account for the temporal correlations between different cycles, and hence, capture the location dependent performance of the devices. Note that the intensity function in \eqref{eq:intTT2} is sensitive to the effect of unsaturated TT traffic through the parameter $\frac{(1+\varTheta_{T})\lambda}{T}$. Furthermore, \eqref{eq:intTT2} is also sensitive to the uplink association through the factor $(1-e^{-\pi\lambda x^2})$ \cite{Singh2015, ElSawy_meta, Vega2016, Renzo2016}.  It is worth noting that the validity of such approximation is validated via independent Monte-Carlo simulations in Section \ref{sec:simulation_results}.
\end{remark}

\begin{approximation} \label{approxTT}
	The transmission powers of the interfering devices are uncorrelated.
\end{approximation}
\begin{remark}  
	Approximation~\ref{approxTT} ignores the correlations among the sizes of adjacent Voronoi cells, which lead to correlated transmission powers of devices due to the adopted fractional path-loss inversion power control scheme. Such approximation is widely utilized in the literature to maintain mathematical tractability \cite{Singh2015, ElSawy_meta, Vega2016, Renzo2016}. We further validate Approximation~\ref{approxTT}  via independent Monte-Carlo simulations in Section \ref{sec:simulation_results}.
\end{remark}
By virtue of Approximation~\ref{approx_DetInterf}, the time slot indices $\kappa$ and $\tau$ are dropped hereafter. Furthermore, exploiting Approximations~\ref{approx_DetInterf} and \ref{approxTT} along with the mapping and displacement theorems of the PPP~\cite{Haenggi2012}, the effect of the power control and path-loss can be incorporated to the intensity function of the approximating PPP. That is, the PPP of the interfering devices $\hat{\mathrm{\Phi}}$ can be mapped to a 1-D \ac{PPP} with unit transmission powers and inverse linear path-loss function. After mapping and displacement, following  \cite[Lemma 2]{ElSawy_meta}, the intensity function in \eqref{eq:intTT2} becomes  
\begin{equation} \label{eq:TT_int3}
\tilde{\lambda}_T(\omega) =\frac{2(1+\varTheta_{T})(\pi\lambda)^{1-\epsilon}\rho^{\frac{2}{\eta}}}{T\eta \omega^{1-\frac{2}{\eta}}} \gamma\Big(1+\epsilon, \pi\lambda(\omega\rho)^{\frac{2}{\eta(1-\epsilon)}}\Big).
\end{equation}
Using the intensity function in \eqref{eq:TT_int3}, the transmission success probability in the TT traffic model is defined as
\begin{align}\label{eq:PS}
P_s(\theta) &= \mathbb{P}^{ !}\left\{\text{SIR}_{o}^T>\theta | \hat{\mathrm{\Phi}}, \mathrm{\Psi} \right\}, \nonumber \\
 &= \prod_{\omega_i\in \tilde{\mathrm{\Phi}}_{T}}  \mathbb{E}^{!} \Big[ \Big(\frac{1}{1 + \frac{\theta r_o^{\eta(1-\epsilon)}}{\rho \omega_i}} \Big)\Big| \hat{\mathrm{\Phi}}, \mathrm{\Psi} \Big],
\end{align}
where $\hat{\mathrm{\Phi}}_{T} =\{ \omega_ i = \frac{r_i}{P_i},\; \forall r_i \in \hat{\mathrm{\Phi}}_o\}$, and the set $\hat{\mathrm{\Phi}}_o$ contains all relative distances from the approximating PPP $\hat{\mathrm{\Phi}}$ to the serving BS of the $o$-th device. The computation in \eqref{eq:PS} follows from the exponential distribution of $h_o$ and $h_i$. To account for the location dependent success probability, $P_s(\theta)$ is considered as a random variable across different devices. The meta distribution of the success probability models such variation across the network \cite{Haenggi_meta, Haenggi_meta2} as 
\begin{equation}
\bar{F}(\theta, \xi)=\mathbb{P}^{!}\{P_{s}(\theta)>\xi | \hat{\mathrm{\Phi}}_{o}, \mathrm{\Psi}\},
\end{equation}
where $\xi$ denotes the percentile of devices within the network that achieves an \ac{SIR} equals to $\theta$. Following \cite{ElSawy_meta, Haenggi_meta}, the meta distribution of the success probability for the \ac{TT} traffic $F_T(\theta, \xi)$ is be approximated as
\begin{align}\label{eq:metaPDF}
F_T(\theta, \xi) &\approx I_{\xi}\left(\frac{M_{1,T}	\hat{M}_T}{\left(M_{2,T}-M_{1,T}^{2}\right)}, \frac{\left(1-M_{1,T}\right)	\hat{M}_T}{\left(M_{2,T}-M_{1,T}^{2}\right)}\right), \nonumber \\
\hat{M}_T &= M_{1,T}-M_{2,T},
\end{align}
where $I_{\xi}(a, b)=\int_{0}^{\xi} t^{a-1}(1-t)^{b-1} \mathrm{d} t$ is the regularized
incomplete beta function, $M_{1,T}$ and $M_{2,T}$ are the first and second moments of $P_s(\theta)$ under the \ac{TT} traffic model. The approximate moments of $P_s(\theta)$ for the \ac{TT} traffic model are given via the following lemma.
\begin{lemma}\label{lem:meta_lemma}
	The moments of the transmission success probabilities in uplink network under TT traffic with duty cycle $T$ are approximated by ${M}_{b,T} \sim \tilde{M}_{b,T}$ as given in (\ref{eq:lemma1}), where $\gamma(a,y) = \int_{0}^{b} t^{a-1}\text{e}^{-t}dt$ is the lower incomplete gamma function. 
\end{lemma}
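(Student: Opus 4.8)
The plan is to compute the $b$-th moment $\tilde{M}_{b,T}=\mathbb{E}^{!}\{P_s(\theta)^b\}$ of the location-dependent success probability directly from the product representation in \eqref{eq:PS}. First I raise \eqref{eq:PS} to the $b$-th power and condition on the association distance $r_o$ of the typical device, which reduces the task to an expectation over the approximating interferer PPP $\hat{\mathrm{\Phi}}_T$ on $\mathbb{R}_+$ with intensity $\tilde{\lambda}_T(\cdot)$ of \eqref{eq:TT_int3}. Since $\hat{\mathrm{\Phi}}_T$ is Poisson, Slivnyak's theorem turns the reduced Palm expectation into an ordinary one, and the probability generating functional of the PPP gives
\[
\mathbb{E}^{!}\bigl\{P_s(\theta)^b \,\big|\, r_o\bigr\}=\exp\!\left(-\int_0^\infty \left[1-\left(1+\frac{\theta\, r_o^{\eta(1-\epsilon)}}{\rho\,\omega}\right)^{-b}\right]\tilde{\lambda}_T(\omega)\,\mathrm{d}\omega\right).
\]
Approximation~\ref{approxTT} (uncorrelated interferer transmit powers) is precisely what licenses treating the mapped points as a genuine PPP in this step, while Approximation~\ref{approx_DetInterf} is what supplies the static, always-active interferer panorama carrying the averaged intensity $\tilde{\lambda}_T$.

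Next I de-condition over $r_o$. Adopting the serving-distance density $f_{r_o}(r)=2\pi\lambda r\,e^{-\pi\lambda r^2}$ customarily used for the Poisson--Voronoi uplink association in \cite{Singh2015, ElSawy_meta, Vega2016, Renzo2016} yields the nested-integral expression
\[
\tilde{M}_{b,T}=\int_0^\infty 2\pi\lambda r\,e^{-\pi\lambda r^2}\exp\!\left(-\int_0^\infty \left[1-\left(1+\frac{\theta\, r^{\eta(1-\epsilon)}}{\rho\,\omega}\right)^{-b}\right]\tilde{\lambda}_T(\omega)\,\mathrm{d}\omega\right)\mathrm{d}r .
\]
I then substitute $\omega=(\theta r^{\eta(1-\epsilon)}/\rho)\,s$ in the inner integral to pull the $r$-dependence out of the rational factor; with the explicit form of $\tilde{\lambda}_T$ in \eqref{eq:TT_int3}, this substitution also turns the argument of the incomplete gamma into one proportional to $\lambda\, r^2\, s^{2/(\eta(1-\epsilon))}$, so the inner integral contributes a factor of the form $(\theta/\rho)^{2/\eta} r^{2(1-\epsilon)}$ times a remaining $s$-integral, and the $r^2$ it produces combines cleanly with the $e^{-\pi\lambda r^2}$ of the outer integral. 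Collecting the prefactor of \eqref{eq:TT_int3} and performing the elementary rearrangements delivers (\ref{eq:lemma1}), in which the lower incomplete gamma $\gamma(1+\epsilon,\cdot)$ is inherited from the uplink power-control mapping of \cite[Lemma~2]{ElSawy_meta}. Specializing to $b=1$ and $b=2$ then gives $\tilde{M}_{1,T}$ and $\tilde{M}_{2,T}$, i.e.\ the inputs $M_{1,T}$ and $M_{2,T}$ needed for the beta approximation \eqref{eq:metaPDF}.

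The step I expect to be the main obstacle is the inner $\omega$-integral: because $\tilde{\lambda}_T(\omega)$ itself carries the incomplete-gamma factor whose argument depends on $\omega$, the integral does not collapse to an elementary closed form, and one must be careful both when interchanging the order of integration after the substitution and when packaging the resulting double integral in the compact shape quoted in (\ref{eq:lemma1}) rather than as an unwieldy triple integral. A secondary but necessary point is to keep the approximation chain transparent --- Approximations~\ref{approx_DetInterf} and \ref{approxTT} make the PGFL step, and hence the whole derivation, \emph{exact on the approximating model}, and the fidelity of those approximations is what is left to the Monte-Carlo validation in Section~\ref{sec:simulation_results}.
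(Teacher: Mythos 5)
Your proposal is correct and follows essentially the same route as the paper's own (much terser) proof in Appendix~\ref{se:Appendix_A}: raise \eqref{eq:PS} to the $b$-th power, condition on $r_o$, apply the probability generating functional of the approximating PPP with intensity $\tilde{\lambda}_T(\omega)$, de-condition with $f_{r_o}(r)=2\pi\lambda r e^{-\pi\lambda r^2}$, and carry out the change of variables from \cite{ElSawy_meta} to reach \eqref{eq:lemma1}. You have simply made explicit the "mathematical operations following \cite{ElSawy_meta}" that the paper leaves implicit, and your substitution correctly produces the $z^{1-\epsilon}$ prefactor and the $\gamma(1+\epsilon, zy^{2/(\eta(1-\epsilon))})$ kernel.
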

\begin{proof}
	See Appendix \ref{se:Appendix_A}
\end{proof}

\begin{figure*}[!t]
	\normalsize
	\ifCLASSOPTIONdraftcls
	\begin{align}\label{eq:lemma1}
	\tilde{M}_{b,T} &= \int_{0}^{\infty} \exp \Bigg\{-z- \Big(\frac{\big(1+\varTheta_{T}\big)2z^{1-\epsilon}}{T\eta } \nonumber \\
	&\int_{\mathbbm{1}\{\epsilon = 1\}}^{\infty} y^{\frac{2}{\eta}-1}\big( 1-\Big(\frac{y}{y + \theta}\Big)^b\big) \gamma\Big(1+\epsilon, zy^{\frac{2}{\eta(1-\epsilon)}}\Big) dy\Big)\Bigg\} dz. 
	\end{align}
	\else
	\begin{align}\label{eq:lemma1}
	\tilde{M}_{b,T} &= \int_{0}^{\infty} \exp \Bigg\{-z- \Big(\frac{\big(1+\varTheta_{T}\big)2z^{1-\epsilon}}{T\eta } \int_{\mathbbm{1}\{\epsilon = 1\}}^{\infty} y^{\frac{2}{\eta}-1}\big( 1-\Big(\frac{y}{y + \theta}\Big)^b\big) \gamma\Big(1+\epsilon, zy^{\frac{2}{\eta(1-\epsilon)}}\Big) dy\Big)\Bigg\} dz.
	\end{align}
	\fi
	\hrulefill
\end{figure*}

\subsubsection{\textbf{Network Categorization}}\label{QosClass}
It is observed from Lemma \ref{lem:meta_lemma} that the macroscopic network-wide aggregate characterization depends on the parameter $\varTheta_{T}$. Before delving into the details of such characterization, we first discretize the meta distribution of $P_s(\theta)$ through uniform network partitioning \cite{Chisci2019}. Categorizing each devices within the network into a distinctive \ac{QoS} class is not feasible due to the continuous support of $P_s(\theta) \in[0,1]$. Consequently, the transmission success probability is quantized into $N$ \ac{QoS} classes.\footnote{The continuous random variable $P_s(\theta)$ with distribution $ f_{P_s}(\omega)$ is quantized to an equally-probable uniform random variable $\mathbf{d} = [d_1 \; d_2 \; \cdots \; d_N]$.} The network categorization process of the distribution in (\ref{eq:metaPDF}) for the $n$-th class is conducted as follows
\begin{equation}\label{eq:pmf1}
F_{P_s}(\omega_n)-F_{P_s}(\omega_{n+1})=\int_{\omega_n}^{\omega_{n+1}} f_{P_s}(\omega) d\omega=\frac{1}{N},
\end{equation}
where $n\in \{1,2,\cdots N\}$. Afterwards, the discrete probability mass function $d_n$ (i.e., $F_{P_s}(d_n) = \frac{1}{N}$) can be evaluated using the bisection method as 
\begin{equation}\label{eq:pmf2}
\int_{\omega_n}^{d_n} f_{P_s}(\omega)d\omega = \int_{d_n}^{\omega_{n+1}} f_{P_s}(\omega)d\omega.
\end{equation} 
The computation of $d_n,\; \forall n$ via (\ref{eq:pmf1}) and (\ref{eq:pmf2}) quantizes the meta distribution of $P_s(\theta)$ into $N$ equiprobable classes as shown in Fig. \ref{fig:quant_meta}. The queue's departure rate of a device belonging to the $n$-class is determined by  $d_n$. Now we are in position to characterize the TT traffic parameter  $\varTheta_{T}$. 
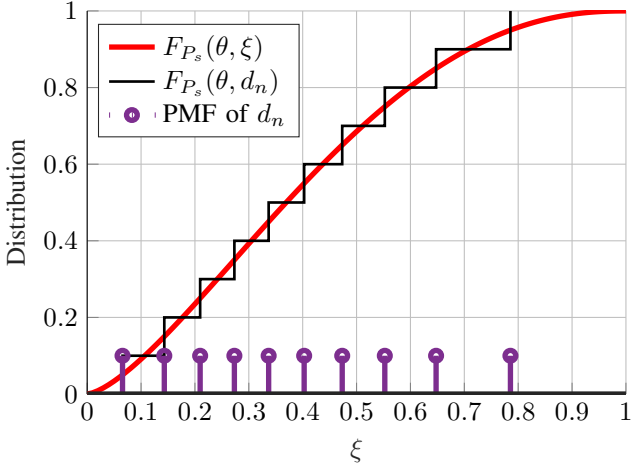
\begin{figure} 
	\centering
	\ifCLASSOPTIONdraftcls
	\definecolor{mycolor1}{rgb}{0.49020,0.18039,0.56078}%

\begin{tikzpicture}

\begin{axis}[%
width=0.6\columnwidth,
height=1.5in,
scale only axis,
xmin=0,
xmax=1,
xlabel style={font=\color{white!15!black}},
xlabel={$\xi$},
ymin=0,
ymax=1,
ytick={0,0.2,0.4,0.6,0.8,1},
xtick={0,0.1,0.2,0.3,0.4,0.5,0.6,0.7,0.8,0.9,1},
ylabel style={font=\color{white!15!black}},
ylabel={Distribution},
axis x line*=bottom,
axis y line*=left,
xmajorgrids,
ymajorgrids,
legend style={at={(0.02,0.55)}, anchor=south west, legend cell align=left, align=left, draw=white!15!black}
]
\addplot [color=red, line width=2.0pt]
table[row sep=crcr]{%
	0	0\\
	0.01	0.00316147369524374\\
	0.02	0.00882172754478877\\
	0.03	0.0160243637660381\\
	0.04	0.0244154579543246\\
	0.05	0.0337837571741215\\
	0.06	0.043982309739499\\
	0.07	0.0549000574866723\\
	0.08	0.0664485383584482\\
	0.09	0.0785546785910195\\
	0.1	0.0911564860060308\\
	0.11	0.104200290013568\\
	0.12	0.117638876499696\\
	0.13	0.131430173837579\\
	0.14	0.145536295509061\\
	0.15	0.159922822930666\\
	0.16	0.174558255558411\\
	0.17	0.18941358080871\\
	0.18	0.204461931890115\\
	0.19	0.219678311495064\\
	0.2	0.235039365739724\\
	0.21	0.250523197062651\\
	0.22	0.266109207764718\\
	0.23	0.28177796795935\\
	0.24	0.297511103195023\\
	0.25	0.313291198098395\\
	0.26	0.329101713189171\\
	0.27	0.344926912619295\\
	0.28	0.360751801045421\\
	0.29	0.376562068193998\\
	0.3	0.39234403995015\\
	0.31	0.40808463501454\\
	0.32	0.423771326340905\\
	0.33	0.439392106701285\\
	0.34	0.454935457833978\\
	0.35	0.470390322716766\\
	0.36	0.485746080579267\\
	0.37	0.500992524326858\\
	0.38	0.516119840096992\\
	0.39	0.531118588708916\\
	0.4	0.54597968880141\\
	0.41	0.560694401481357\\
	0.42	0.575254316329802\\
	0.43	0.589651338632364\\
	0.44	0.603877677718117\\
	0.45	0.617925836305845\\
	0.46	0.631788600769273\\
	0.47	0.645459032243929\\
	0.48	0.65893045850784\\
	0.49	0.672196466576648\\
	0.5	0.685250895961113\\
	0.51	0.698087832541443\\
	0.52	0.710701603018723\\
	0.53	0.723086769908891\\
	0.54	0.73523812704939\\
	0.55	0.74715069559294\\
	0.56	0.758819720466763\\
	0.57	0.770240667279302\\
	0.58	0.781409219659894\\
	0.59	0.792321277020183\\
	0.6	0.802972952729238\\
	0.61	0.813360572697503\\
	0.62	0.82348067436786\\
	0.63	0.833330006115298\\
	0.64	0.84290552706002\\
	0.65	0.852204407302355\\
	0.66	0.861224028591577\\
	0.67	0.869961985444899\\
	0.68	0.878416086737417\\
	0.69	0.886584357788906\\
	0.7	0.894465042979161\\
	0.71	0.902056608930233\\
	0.72	0.909357748301663\\
	0.73	0.91636738425389\\
	0.74	0.923084675645824\\
	0.75	0.929509023045427\\
	0.76	0.9356400756478\\
	0.77	0.941477739214249\\
	0.78	0.947022185169238\\
	0.79	0.952273861021269\\
	0.8	0.957233502310285\\
	0.81	0.961902146330657\\
	0.82	0.96628114793847\\
	0.83	0.970372197829431\\
	0.84	0.974177343776026\\
	0.85	0.977699015449496\\
	0.86	0.980940053638466\\
	0.87	0.98390374493423\\
	0.88	0.986593863317886\\
	0.89	0.989014720613508\\
	0.9	0.99117122855845\\
	0.91	0.993068976449822\\
	0.92	0.994714330249099\\
	0.93	0.996114562224595\\
	0.94	0.997278025818323\\
	0.95	0.998214400924384\\
	0.96	0.998935056169969\\
	0.97	0.999453623675835\\
	0.98	0.999787013265266\\
	0.99	0.999957554480932\\
	1	1\\
};
\addlegendentry{$F_{P_s}(\theta, \xi)$}

\addplot [color=black, line width=1.0pt]
table[row sep=crcr]{%
	-inf	0\\
	0.0655875205993652	0\\
	0.0655875205993652	0.1\\
	0.14312219619751	0.1\\
	0.14312219619751	0.2\\
	0.209662914276123	0.2\\
	0.209662914276123	0.3\\
	0.273205280303955	0.3\\
	0.273205280303955	0.4\\
	0.336818218231201	0.4\\
	0.336818218231201	0.5\\
	0.402720928192139	0.5\\
	0.402720928192139	0.6\\
	0.473352909088135	0.6\\
	0.473352909088135	0.7\\
	0.552421092987061	0.7\\
	0.552421092987061	0.8\\
	0.647600650787354	0.8\\
	0.647600650787354	0.9\\
	0.785600185394287	0.9\\
	0.785600185394287	1\\
	inf	1\\
};
\addlegendentry{$F_{P_s}(\theta, d_n)$}

\addplot[ycomb, color=mycolor1, line width=2.0pt, mark=o, mark options={solid, mycolor1}] table[row sep=crcr] {%
	0.0655875205993652	0.1\\
	0.14312219619751	0.1\\
	0.209662914276123	0.1\\
	0.273205280303955	0.1\\
	0.336818218231201	0.1\\
	0.402720928192139	0.1\\
	0.473352909088135	0.1\\
	0.552421092987061	0.1\\
	0.647600650787354	0.1\\
	0.785600185394287	0.1\\
};
\addplot[forget plot, color=white!15!black, line width=2.0pt] table[row sep=crcr] {%
	0	0\\
	1	0\\
};
\addlegendentry{$\text{PMF of }d_n$}

\end{axis}
\end{tikzpicture}%
	\else
	\definecolor{mycolor1}{rgb}{0.49020,0.18039,0.56078}%

\begin{tikzpicture}

\begin{axis}[%
width=0.8\columnwidth,
height=2in,
scale only axis,
xmin=0,
xmax=1,
xlabel style={font=\color{white!15!black}},
xlabel={$\xi$},
ymin=0,
ymax=1,
ytick={0,0.2,0.4,0.6,0.8,1},
xtick={0,0.1,0.2,0.3,0.4,0.5,0.6,0.7,0.8,0.9,1},
ylabel style={font=\color{white!15!black}},
ylabel={Distribution},
axis x line*=bottom,
axis y line*=left,
xmajorgrids,
ymajorgrids,
legend style={at={(0.02,0.67)}, anchor=south west, legend cell align=left, align=left, draw=white!15!black}
]
\addplot [color=red, line width=2.0pt]
table[row sep=crcr]{%
	0	0\\
	0.01	0.00316147369524374\\
	0.02	0.00882172754478877\\
	0.03	0.0160243637660381\\
	0.04	0.0244154579543246\\
	0.05	0.0337837571741215\\
	0.06	0.043982309739499\\
	0.07	0.0549000574866723\\
	0.08	0.0664485383584482\\
	0.09	0.0785546785910195\\
	0.1	0.0911564860060308\\
	0.11	0.104200290013568\\
	0.12	0.117638876499696\\
	0.13	0.131430173837579\\
	0.14	0.145536295509061\\
	0.15	0.159922822930666\\
	0.16	0.174558255558411\\
	0.17	0.18941358080871\\
	0.18	0.204461931890115\\
	0.19	0.219678311495064\\
	0.2	0.235039365739724\\
	0.21	0.250523197062651\\
	0.22	0.266109207764718\\
	0.23	0.28177796795935\\
	0.24	0.297511103195023\\
	0.25	0.313291198098395\\
	0.26	0.329101713189171\\
	0.27	0.344926912619295\\
	0.28	0.360751801045421\\
	0.29	0.376562068193998\\
	0.3	0.39234403995015\\
	0.31	0.40808463501454\\
	0.32	0.423771326340905\\
	0.33	0.439392106701285\\
	0.34	0.454935457833978\\
	0.35	0.470390322716766\\
	0.36	0.485746080579267\\
	0.37	0.500992524326858\\
	0.38	0.516119840096992\\
	0.39	0.531118588708916\\
	0.4	0.54597968880141\\
	0.41	0.560694401481357\\
	0.42	0.575254316329802\\
	0.43	0.589651338632364\\
	0.44	0.603877677718117\\
	0.45	0.617925836305845\\
	0.46	0.631788600769273\\
	0.47	0.645459032243929\\
	0.48	0.65893045850784\\
	0.49	0.672196466576648\\
	0.5	0.685250895961113\\
	0.51	0.698087832541443\\
	0.52	0.710701603018723\\
	0.53	0.723086769908891\\
	0.54	0.73523812704939\\
	0.55	0.74715069559294\\
	0.56	0.758819720466763\\
	0.57	0.770240667279302\\
	0.58	0.781409219659894\\
	0.59	0.792321277020183\\
	0.6	0.802972952729238\\
	0.61	0.813360572697503\\
	0.62	0.82348067436786\\
	0.63	0.833330006115298\\
	0.64	0.84290552706002\\
	0.65	0.852204407302355\\
	0.66	0.861224028591577\\
	0.67	0.869961985444899\\
	0.68	0.878416086737417\\
	0.69	0.886584357788906\\
	0.7	0.894465042979161\\
	0.71	0.902056608930233\\
	0.72	0.909357748301663\\
	0.73	0.91636738425389\\
	0.74	0.923084675645824\\
	0.75	0.929509023045427\\
	0.76	0.9356400756478\\
	0.77	0.941477739214249\\
	0.78	0.947022185169238\\
	0.79	0.952273861021269\\
	0.8	0.957233502310285\\
	0.81	0.961902146330657\\
	0.82	0.96628114793847\\
	0.83	0.970372197829431\\
	0.84	0.974177343776026\\
	0.85	0.977699015449496\\
	0.86	0.980940053638466\\
	0.87	0.98390374493423\\
	0.88	0.986593863317886\\
	0.89	0.989014720613508\\
	0.9	0.99117122855845\\
	0.91	0.993068976449822\\
	0.92	0.994714330249099\\
	0.93	0.996114562224595\\
	0.94	0.997278025818323\\
	0.95	0.998214400924384\\
	0.96	0.998935056169969\\
	0.97	0.999453623675835\\
	0.98	0.999787013265266\\
	0.99	0.999957554480932\\
	1	1\\
};
\addlegendentry{$F_{P_s}(\theta, \xi)$}

\addplot [color=black, line width=1.0pt]
table[row sep=crcr]{%
	-inf	0\\
	0.0655875205993652	0\\
	0.0655875205993652	0.1\\
	0.14312219619751	0.1\\
	0.14312219619751	0.2\\
	0.209662914276123	0.2\\
	0.209662914276123	0.3\\
	0.273205280303955	0.3\\
	0.273205280303955	0.4\\
	0.336818218231201	0.4\\
	0.336818218231201	0.5\\
	0.402720928192139	0.5\\
	0.402720928192139	0.6\\
	0.473352909088135	0.6\\
	0.473352909088135	0.7\\
	0.552421092987061	0.7\\
	0.552421092987061	0.8\\
	0.647600650787354	0.8\\
	0.647600650787354	0.9\\
	0.785600185394287	0.9\\
	0.785600185394287	1\\
	inf	1\\
};
\addlegendentry{$F_{P_s}(\theta, d_n)$}

\addplot[ycomb, color=mycolor1, line width=2.0pt, mark=o, mark options={solid, mycolor1}] table[row sep=crcr] {%
	0.0655875205993652	0.1\\
	0.14312219619751	0.1\\
	0.209662914276123	0.1\\
	0.273205280303955	0.1\\
	0.336818218231201	0.1\\
	0.402720928192139	0.1\\
	0.473352909088135	0.1\\
	0.552421092987061	0.1\\
	0.647600650787354	0.1\\
	0.785600185394287	0.1\\
};
\addplot[forget plot, color=white!15!black, line width=2.0pt] table[row sep=crcr] {%
	0	0\\
	1	0\\
};
\addlegendentry{$\text{PMF of }d_n$}

\end{axis}
\end{tikzpicture}%
	\fi 	
	\caption{Quantized meta distribution for $N=10$, hypothetical $\varTheta_{T} = 0.5$ and $\theta = 5$ dB.}
	\label{fig:quant_meta} 
\end{figure} 

\subsubsection{\textbf{$\mathbf{\varTheta_{T}}$ Characterization for TT Traffic}}

As mentioned earlier, for a given set of synchronized devices (i.e., with equal time offset), $\varTheta_{T}$ depicts the aggregate percentiles of retransmitting devices from all other distinct time offsets. The first step to characterize $\varTheta_{T}$ is the determine the set of always active devices, if any. Particularly, a \ac{QoS} class that impose a departure rate less than the packet arrival rate yield always active devices that continuously interfere with other devices irrespective of their relative time offsets. Stable and unstable \ac{QoS} classes are discriminated via a transmission success probability threshold equal to the packet arrival probability \cite{Loynes1962}.  Consequently,  $\mathcal{S} =\{d_n \geq \frac{1}{T-1} \;|\; n \in\{1,2,\cdots,N\} \}$ is the set of stable \ac{QoS} classes (i.e., devices belonging to this class can empty its queue within the duty cycle $T$). Visually, Fig. \ref{fig:interf_per} depicts an example scenario with three devices, each belong to a given \ac{QoS} class. Device 1, belonging to the lowest performing class (i.e., one with lowest $d_n$), requires more time slots to successfully transmit its packet. It is noteworthy to mention that transmission failures might occur due to the mutual interference between active devices or fading and path-loss effect. In accordance,  $\mathcal{U} =\{d_n < \frac{1}{T-1} \;|\; n \in\{1,2,\cdots,N\} \}$ is the set of unstable \ac{QoS} classes. Devices belonging to an unstable \ac{QoS} class are not able to  empty their queues within the packet generation duty cycle $T$. Thus, their queues will have infinite size and become unstable. For mathematical tractability, we adopt the following approximation in our work. 

\begin{approximation} \label{approxIID}
Queues employed at the devices are \ac{QoS}-aware but have  temporally-independent departures.
\end{approximation}
\begin{remark}
	The temporal correlation of interference is captured by the static QoS class of each device. That is, the departure probability of a device belonging to $n$-th QoS class remains $d_n$. Once the QoS class is fixed, the departures from the same device across different time slots are considered to be independent due to the randomness introduced by the channel fading and interfering devices activity profiles.
\end{remark}
Let $r_{\mathcal{S}_n,k}$ be the probability that a device belonging to a stable $n$-th \ac{QoS} is active for $k$-constitutive time slots. Recall that every device has a new generated packet every $T$ time slots and that stable devices, on average, are able to empty their packets within each duty cycle T. Leveraging the temporal independence between the time slots given by Approximation \ref{approxIID}, $r_{\mathcal{S}_n, k} = (1-d_n)^k $. The k-consecutive time slots activity due to transmission failures is illustrated in Fig. \ref{fig:interf_per}. The characterization of the spatially averaged aggregate percentiles of retransmitting devices $\varTheta_{T}$ for the \ac{TT} traffic is given in the following lemma.
\begin{lemma}\label{lem:chi_T}
Consider a TT traffic model with duty cycle T. For each set of synchronized devices, the spatially averaged aggregate percentiles of retransmitting devices from all other distinct time offsets is given by  
	\begin{equation} \label{eq:lemm2}
	\varTheta_{T}=\frac{1}{N}  \sum_{\tau=1}^{T-1} \Big(|\mathcal{U}| +  \sum_{j  = 1}^{|\mathcal{S}|} r_{\mathcal{S}_j,k} \Big),
	\end{equation}
	where $\mathcal{S}=\{d_n \geq \frac{1}{T-1} \;|\; n \in[1,2,\cdots,N] \}$  and $\mathcal{U}=\{d_n < \frac{1}{T-1} \;|\; n \in(1,N) \}$ denote the set of stable and unstable QoS classes, respectively.
\end{lemma}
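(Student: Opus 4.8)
The plan is to start from the definition $\varTheta_{T}=\sum_{\kappa\neq\tau}\mathbb{E}_{\tilde{\mathrm{\Phi}}_{\kappa}}\{p^{(m)}_{\kappa,\tau}\}$ underlying \eqref{eq:inten_TT}, and to evaluate each inner expectation by conditioning on the \ac{QoS} class of the generic interfering device $m$. By the uniform distribution of the offsets this quantity does not depend on the reference slot $\tau$, so I would fix the reference set of synchronized devices to be active at the epochs $\{0,T,2T,\ldots\}$. A device $m$ with offset $\kappa\in\{1,\ldots,T-1\}$ then last generated a packet $T-\kappa$ slots before the reference epoch, so at that epoch it is on its $(T-\kappa{+}1)$-th transmission attempt, and $p^{(m)}_{\kappa,\tau}$ equals the probability that device $m$ still holds that packet after $T-\kappa$ attempts.

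Next I would split on the two types of queue behaviour identified in the network categorization of Section~\ref{QosClass}. If the quantized departure probability $d_n$ of device $m$ lies in $\mathcal{U}$, its departure rate is below the \ac{TT} arrival rate and, by Loynes' criterion \cite{Loynes1962}, its queue never empties in steady state; hence its conditional activity probability is $1$. If $d_n\in\mathcal{S}$, the device clears its packet within the duty cycle and therefore enters each cycle with an empty queue, so by the temporally-independent departure model of Approximation~\ref{approxIID} the probability that it has not yet succeeded after $T-\kappa$ attempts of the current cycle is $(1-d_n)^{T-\kappa}=r_{\mathcal{S}_n,T-\kappa}$.

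Averaging over the $N$ equiprobable \ac{QoS} classes then gives $\mathbb{E}_{\tilde{\mathrm{\Phi}}_{\kappa}}\{p^{(m)}_{\kappa,\tau}\}=\tfrac{1}{N}\big(|\mathcal{U}|+\sum_{j=1}^{|\mathcal{S}|} r_{\mathcal{S}_j,T-\kappa}\big)$, each unstable class contributing $1$ and the $j$-th stable class contributing $r_{\mathcal{S}_j,T-\kappa}$. Finally, summing the $T-1$ terms and re-indexing via $\tau=T-\kappa$ (as $\kappa$ runs over $\{1,\ldots,T-1\}$ so does the exponent) yields $\varTheta_{T}=\tfrac{1}{N}\sum_{\tau=1}^{T-1}\big(|\mathcal{U}|+\sum_{j=1}^{|\mathcal{S}|} r_{\mathcal{S}_j,\tau}\big)$, which is exactly \eqref{eq:lemm2}.

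I expect the main obstacle to be the step that yields the geometric form $(1-d_n)^{T-\kappa}$ for stable classes: it relies on (i) the modelling fact that a stable device ($d_n\ge 1/(T-1)$) begins every duty cycle with an empty queue, so that residual packets from earlier failed cycles are negligible, and (ii) the temporal independence of departures granted by Approximation~\ref{approxIID}, which is what lets the $T-\kappa$ failed attempts multiply. The conditioning over the equiprobable classes, the Loynes argument for $\mathcal{U}$, and the re-indexing of the sum are then routine.
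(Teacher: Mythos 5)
Your proposal is correct and follows essentially the same route as the paper's proof: decompose each offset's contribution into the unstable classes (always active, contributing $|\mathcal{U}|/N$) and the stable classes (active only after $k$ consecutive failures, contributing $r_{\mathcal{S}_j,k}/N$ via Approximation~3), then sum over the $T-1$ distinct offsets. Your version merely makes explicit what the paper leaves implicit — the Loynes criterion for $\mathcal{U}$ and the re-indexing $k=T-\kappa$ — so no substantive difference.
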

\begin{proof}
	First, the devices belonging to a \ac{QoS} class that is unstable are always contributing to the aggregate interference. Accordingly, for each distinct time offset, $\frac{|\mathcal{U}|}{N}$ percentiles of the devices will always be interfering every time slot within the window $T$. Second, the set of stable devices with time offset $k$ slots away from a given transmission will only interfere if they have encountered $k$-consecutive transmission failures. Considering all stable QoS classes within each set of devices with distinct time offset, the percentiles of devices that are $k$-slots active can be characterized as $\frac{\sum_{j  = 1}^{|\mathcal{S}|} r_{\mathcal{S}_j,k}}{N}$. Combining the two components (i.e., stable and unstable devices) together and considering all other distinct $T-1$ time offsets within the duty cycle, the lemma is obtained. 
\end{proof}
Iterating through Lemmas \ref{lem:meta_lemma} and \ref{lem:chi_T}, one can evaluate $\Theta_T$ and the meta distribution $\bar{F}_T(\theta, \xi)$. In particular, for any feasible initial value of $\Theta_T$, the moments and the transmission success probabilities for each QoS class can be calculated via \eqref{eq:lemma1}, \eqref{eq:pmf1}, and \eqref{eq:pmf2}. Then, the value of $\Theta_T$ can be updated via~\eqref{eq:lemm2}. Repeating such steps, the aforementioned system of equations  converges to a unique solution by virtue of fixed point theorem \cite{Zhou2016}. After convergence to a unique solution, the waiting time, a generic packet spends in the system till its successful transmission, can be evaluated based on the analysis that will be provided in the next section.

\subsection{ET Traffic}\label{SecET}
Following the same methodology that was presented in the \ac{TT} traffic analysis, the $\text{SIR}_{o,\tau}$ of the $o$-th device at the $\tau$-th time slot  under \ac{ET} traffic is
\begin{equation}\label{eq:SIR_ET}
\text{SIR}_{o,\tau}^E = \frac{P_{o} h_{o}r_o^{\eta(1-\epsilon)}}{\sum_{u_{i} \in \mathrm{\Phi} \backslash u_o} \mathbbm{1}_{\{a_i\}} P_{i} h_{i}r_i^{-\eta}}, 
\end{equation}
where $a_i$ is the event that a generic device has a non-empty queue at steady state. Due to the randomized packet generation and departure, the interference in the ET traffic does not exhibit regular repetitive pattern as in the TT case. Hence, \eqref{eq:SIR_ET} is independent of the time slot index $\tau$, which will be dropped hereafter. 
\begin{figure*}[!t]
	\normalsize
	\ifCLASSOPTIONdraftcls
	\begin{align}\label{eq:lemma2}
	\tilde{M}_{b,E} &= \int_{0}^{\infty} \exp \Bigg\{-z- \frac{2z^{1-\epsilon}}{\eta } \int_{\mathbbm{1}\{\epsilon = 1\}}^{\infty} y^{\frac{2}{\eta}-1}\big( 1-\Big(\frac{y+\theta\varTheta_{E}}{y + \theta}\Big)^b\big) \gamma\Big(1+\epsilon, zy^{\frac{2}{\eta(1-\epsilon)}}\Big) dy\Big)\Bigg\} dz. 
	\end{align}
	\else
	\begin{align}\label{eq:lemma2}
	\tilde{M}_{b,E} &= \int_{0}^{\infty} \exp \Bigg\{-z- \frac{2z^{1-\epsilon}}{\eta } \int_{\mathbbm{1}\{\epsilon = 1\}}^{\infty} y^{\frac{2}{\eta}-1}\big( 1-\Big(\frac{y+\theta\varTheta_{E}}{y + \theta}\Big)^b\big) \gamma\Big(1+\epsilon, zy^{\frac{2}{\eta(1-\epsilon)}}\Big) dy\Big)\Bigg\} dz.
	\end{align}
	\fi
	\hrulefill
\end{figure*}
Analogous to Approximations~\ref{approx_DetInterf} and \ref{approxTT}, let $\hat{\mathrm{\Phi}}_E$ be a PPP with an intensity function $\lambda_E(x) =  \lambda(1-e^{-\pi\lambda x^2})$ that approximates the interference from $\{\mathrm{\Phi}\backslash b_o\}$. Exploiting the mapping and displacement theorems, the interfering PPP seen at a generic BS $b_o \in \mathrm{\Psi}$ can be mapped to a 1-D inhomogeneous PPP $\hat{\mathrm{\Phi}}_{E,o} = \{ s_i = \frac{||x_i-b_o||^{\eta}}{P_i},\; \forall x_i \in \hat{\mathrm{\Phi}}_E \}$ with the following intensity function 
\begin{equation}
\tilde{\lambda}_E(s) =\frac{2(\pi\lambda)^{1-\epsilon}\rho^{\frac{2}{\eta}}}{\eta s^{1-\frac{2}{\eta}}} \gamma\Big(1+\epsilon, \pi\lambda(s\rho)^{\frac{2}{\eta(1-\epsilon)}}\Big).
\end{equation}
Hence, the transmission success  probability for the ET traffic model  is expressed as 
\begin{align} \label{eq:PSET}
P_s(\theta) &= \prod_{s_i\in \tilde{\mathrm{\Phi}}_{E}}  \mathbb{E}^{!} \Big[ \Big(\frac{\bar{\varTheta}_{E}}{1 + \frac{a_i\theta r_o^{\eta(1-\epsilon)}}{\rho s_i}} + \varTheta_{E} \Big)\Big| \mathrm{\Phi}, \mathrm{\Psi} \Big],
\end{align}
where $\varTheta_{E}$ denotes the spatially averaged idle probability (i.e., the probability that a device has an empty queue) under the \ac{ET} traffic at steady state. Different from its TT counterpart in \eqref{eq:PS}, the transmission success probability for the ET in \eqref{eq:PSET} depicts the varying set of interfering devices through the probability of empty queue $\Theta_E$~\cite{Haenggi_meta, ElSawy_meta}. In particular, the higher probability of empty queues (i.e., higher value of $\Theta_E$), the less correlated interference across time slots, and vice versa.

The approximations of $P_s(\theta)$ moments for the \ac{ET} traffic model are given via the following lemma.
\begin{lemma}\label{lem:meta_lemma_2}
	The moments of the transmission success  probabilities in uplink network with ET traffic model with arrival probability $\alpha$ are approximated by $\tilde{M}_{b,E}$ given in (\ref{eq:lemma2}), where $\varTheta_{E}$ is the spatially averaged idle probability. 
\end{lemma}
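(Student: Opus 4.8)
The plan is to mirror the derivation of Lemma~\ref{lem:meta_lemma} in Appendix~\ref{se:Appendix_A}, adapting it to the ET conditional success probability in \eqref{eq:PSET}. First I would start from the definition of the $b$-th moment of the (Palm) random variable $P_s(\theta)$ under the ET traffic, namely $M_{b,E}=\mathbb{E}^{!}\{P_s(\theta)^b\}$, where $P_s(\theta)$ is the product over the displaced, mapped 1-D PPP $\tilde{\mathrm{\Phi}}_E$ given in \eqref{eq:PSET}. Because that expression is already a product over the points of a PPP, raising it to the $b$-th power and applying the probability generating functional (PGFL) of the PPP with intensity $\tilde{\lambda}_E(s)$ turns the moment into $\exp\{-\int_0^\infty (1 - \chi(s)^b)\,\tilde\lambda_E(s)\,ds\}$, where $\chi(s)=\tfrac{\bar\varTheta_E}{1+\theta r_o^{\eta(1-\epsilon)}/(\rho s)}+\varTheta_E$ is the per-interferer factor averaged over the Bernoulli activity $a_i$ (probability $\bar\varTheta_E$ of being active) and the exponential fading. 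Note that $1-\chi(s)^b = 1-\big(\tfrac{\,\bar\varTheta_E + \varTheta_E(1+\theta r_o^{\eta(1-\epsilon)}/(\rho s))\,}{1+\theta r_o^{\eta(1-\epsilon)}/(\rho s)}\big)^b = 1-\big(\tfrac{s + \varTheta_E\theta r_o^{\eta(1-\epsilon)}/\rho}{s + \theta r_o^{\eta(1-\epsilon)}/\rho}\big)^b$, which is exactly the shape that produces the $\big(\tfrac{y+\theta\varTheta_E}{y+\theta}\big)^b$ term in \eqref{eq:lemma2}.

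Next I would substitute $\tilde\lambda_E(s)$ from its closed form, rescale the interference integration variable by $s = y\,r_o^{\eta(1-\epsilon)}/\rho$ (so that $r_o$ factors cleanly), and then perform the final averaging over the association distance $r_o$, whose squared value $\pi\lambda r_o^2$ has the exponential-type density coming from the nearest-BS distribution in a PPP. Introducing $z=\pi\lambda r_o^2$ converts the outer average into the $\int_0^\infty e^{-z}(\cdots)\,dz$ in \eqref{eq:lemma2}, and collecting the constants from $\tilde\lambda_E(s)$ (the $(\pi\lambda)^{1-\epsilon}\rho^{2/\eta}$ prefactor and the lower incomplete gamma $\gamma(1+\epsilon,\cdot)$ arising from the uplink-association truncation) yields the stated $\tfrac{2z^{1-\epsilon}}{\eta}$ factor and the $\gamma(1+\epsilon, z y^{2/\eta(1-\epsilon)})$ inside the inner integral. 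The indicator $\mathbbm{1}\{\epsilon=1\}$ on the lower limit of the $y$-integral handles the boundary case where the path-loss compensation is full and the change of variables degenerates, exactly as in Lemma~\ref{lem:meta_lemma}.

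The main obstacle I expect is justifying the reduction of the activity-averaged product to an i.i.d. thinning form: the event $a_i$ that interferer $i$ has a non-empty queue is, in truth, spatially and temporally correlated with the success events, so replacing $\mathbb{E}\{\mathbbm{1}_{\{a_i\}}\}$ by the constant $\bar\varTheta_E$ and treating the $a_i$ as independent across interferers is precisely what Approximation~\ref{approxIID} (together with Approximations~\ref{approx_DetInterf}--\ref{approxTT}) licenses; the lemma is therefore an \emph{approximation}, and the proof must invoke these approximations at the step where the PGFL is applied, rather than claim an exact identity. A secondary technical point is the interchange of the $r_o$-expectation with the PGFL integral and the resulting nested integral — this is routine given the boundedness of $\chi(s)\in[0,1]$ and dominated convergence, so I would not belabour it. Once these points are in place, the computation is essentially the ET specialization of Appendix~\ref{se:Appendix_A} with the interference factor $\tfrac{1}{1+\theta/\cdots}$ replaced by $\tfrac{\bar\varTheta_E}{1+\theta/\cdots}+\varTheta_E$, so I would present it as such and refer back to that appendix for the shared algebra.
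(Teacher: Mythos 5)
Your proposal is correct and follows essentially the same route as the paper, which simply states that the proof ``follows similar steps as Lemma~1'' --- i.e., apply the PGFL of the mapped 1-D PPP with intensity $\tilde{\lambda}_E(s)$ to the product in \eqref{eq:PSET}, use the algebraic identity $\frac{\bar{\varTheta}_E}{1+c/s}+\varTheta_E=\frac{s+\varTheta_E c}{s+c}$ to obtain the $\big(\frac{y+\theta\varTheta_E}{y+\theta}\big)^b$ kernel, and average over the Rayleigh-distributed association distance via $z=\pi\lambda r_o^2$. Your explicit identification of the key algebraic step and of where Approximations~1--3 are invoked is, if anything, more detailed than the paper's own one-line proof.
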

\begin{proof}
	The proof follows similar steps as Lemma \ref{lem:meta_lemma}.
\end{proof}
After the computation of the approximated moments under \ac{ET} traffic $\tilde{M}_{b,E},;\; b=\{1,2\}$, the meta distribution $F_E(\theta, \xi)$ is evaluated based on (\ref{eq:metaPDF}) after plugging the computed $\tilde{M}_{b,E}$. In addition, the network categorization procedure is carried out in a similar way as explained in Section \ref{QosClass}.
\subsubsection{\textbf{$\mathbf{\varTheta_{E}}$ Characterization for ET Traffic}}
The spatially averaged interfering intensity for the \ac{ET} traffic is equivalent to the percentage of devices which have packets to be transmitted in their respective queues at steady state. To this end, the idle probability of the $n$-th QoS class $x_{0,n}$ captures such activity. Resorting to the mean field theory, $\varTheta_{E}$ is computed by averaging over the $N$ classes temporal idle probabilities  as 
\begin{equation}\label{eq:res}
\varTheta_{E} = \frac{1}{N}\sum_{n=1}^{N} x_{0,n}.
\end{equation}
It is clear that to evaluate $F_E(\theta, \xi)$, one needs first to compute $x_{0,n}$. Such inter-dependency between the network-wide aggregate interference and the queues characterization highlights the cross-relation between the microscopic and macroscopic scales in the network. The characterization of $x_{0,n}$, which is required to evaluate the waiting times and the \ac{PAoI} will be discussed in the following section.

	\thispagestyle{empty}
\section{microscopic queueing theory analysis}\label{sec:QT_anaylsis}
The mathematical model for the microscopic scale (i.e., queue evolution) will be presented in this section. As mentioned, the device's location-dependency is captured via its departure probability (i.e. QoS class dependent), which remains unchanged over long time horizon. In this work, a geometric departure process is adopted to model the packets departure from each device. It is important to note that the geometric departure is an approximation that capitalizes on the negligible temporal correlation of the departure probabilities once the location-dependent \ac{QoS} class is determined as mentioned in Approximation \ref{approxIID}.

\subsection{TT Traffic}
We utilize a degenerate PH type distribution to mimic the \ac{TT} traffic generation at every device.  In particular, the utilized PH type distribution works as a deterministic counter that generates a packet every $T$ time slots. A pictorial illustration of the DTMC with deterministic arrival of packets every $T=4$ time slots is shown in Fig. \ref{fig:DTMC}(a). The PH type distribution is defined as an absorbing Markov chain \cite{Alfa2015}. In the context of \ac{TT}, absorption denotes packet arrival. Mathematically, an absorbing Markov chain is defined as 
\begin{equation}
\mathbf{Q}=\left[ \begin{array}{ll}{1} & \mathbf{0} \\ \mathbf{s} & \mathbf{S} \end{array}\right],
\end{equation}
where $\mathbf{s}\in \mathbb{R}^{T \times 1}$ represents the absorption probability from a given transient phase and is given by $\mathbf{s} = \mathbf{1}_{T}-\mathbf{S}\mathbf{1}_{T}$. The utilized PH type distribution is represented by the tuple $(\bm{\zeta}, \mathbf{S})$, where  $\bm{\zeta} \in \mathbb{R}^{1 \times T}$ is the initialization vector and $\mathbf{S} \in \mathbb{R}^{T \times T}$ is the sub-stochastic transient matrix. In addition, the matrix $\mathbf{S}$ is constructed to count exactly $T$ time slots between two successive packet generations. Accordingly, there is no randomness in the packet generation process and the transition probabilities between the states equal 1. In order to mimic the periodic generation of a packet, $\mathbf{S}$ is formulated as 
\begin{equation}
\mathbf{S}=\left[ 
\begin{array}{lllll}
	0 & 1  & 0 & \cdots & 0  \\
	0 & 0  & 1 & \cdots & 0  \\
	\vdots & \vdots  & \cdots & 0 & 1\\
	0 & 0 & 0 & 0 & 0 
\end{array}\right].
\end{equation}
In addition, $\bm{\zeta} = [1 \; \bm{0}_{T-1}]$. Based on the proposed PH type distribution for the \ac{TT} arrival process, we model the temporal interactions via an PH/Geo/1 queue \cite{Alfa2015}. The departure process is captured via a geometric process due to the adoption of Approximation \ref{approxIID}. Fig. \ref{fig:DTMC}(a) shows the proposed \ac{DTMC} model for the \ac{TT} traffic, where the vertical and horizontal transitions depict transitions between levels and phases, respectively. Utilizing the previously mentioned PH type structure, one can provide a tractable model that captures the queueing temporal dynamics in the form of a \ac{QBD} process~\cite{G.Kulkarni1999}. The queue transitions for a device within the $n$-th class are captured through the \ac{QBD} characterized via the following probability transition matrix  
\begin{equation}\label{eq:QBD}
\mathbf{P}_n=\left[\begin{array}{lllll}{\mathbf{B}} & {\mathbf{C}} & {} & {} & {} \\ {\mathbf{E}_n} & {\mathbf{A}_{1,n}} & {\mathbf{A}_{0,n}} & {} & {} \\ {} & {\mathbf{A}_{2,n}} & {\mathbf{A}_{1.n}} & {\mathbf{A}_{0,n}} & {} \\ {} & {} & {\ddots} & {\ddots} & {\ddots}\end{array}\right],
\end{equation}
where $\mathbf{B}=\mathbf{S}, \mathbf{C}=\mathbf{s}\bm{\zeta}$ and $\mathbf{E}_n = d_n\mathbf{S} \in \mathbb{R}^{T \times T}$ are the boundary sub-stochastic matrices. In addition, $\mathbf{A}_{2,n} = d_n\mathbf{S}, \mathbf{A}_{0,n} = \bar{d}_n\mathbf{s}\bm{\zeta}$, and $\mathbf{A}_{1,n} = d_n\mathbf{s}\bm{\zeta} + \bar{d}_n\mathbf{S} \in \mathbb{R}^{T \times T}$ represent the sub-stochastic matrices that capture the transition down a level, up a level, and in a fixed level within the \ac{QBD}, respectively. In addition, $\mathbf{A}_{2,n}, \mathbf{A}_{1,n}$, and $\mathbf{A}_{0,n}$ are represented via the green, violet, and red arrows in Fig. \ref{fig:DTMC}(a). As mentioned in the previous section, for the \ac{DTMC} in (\ref{eq:QBD}) to be stable, the following condition must be satisfied \cite{Loynes1962}
\begin{equation}\label{eq:stability}
d_n \geq \frac{1}{T-1}.
\end{equation}
In addition, let $\mathbf{x} = [\mathbf{x}_{0} \; \mathbf{x}_{1} \; \mathbf{x}_{2} \; \cdots]$ be the steady state probability vector, where $\mathbf{x}_i = [\mathbf{x}_{i,1} \; \mathbf{x}_{i,2} \; \cdots \; \mathbf{x}_{i,N}]$ and  $\mathbf{x}_{i,j} = [x_{i,j,1} \; x_{i,j,2} \; \cdots x_{i,j,T}]$, where $x_{i,j,k}$ is the probability that a device has $i$ packets and belongs to the $j$-th class and is in the $k$-th arrival phase. In this context, the idle probability of device in the $j$-th class is evaluated as
\begin{equation}\label{eq:x_0}
x_{0,j} = \sum_{n=1}^{T} x_{0,j,n}.
\end{equation}
\begin{figure}
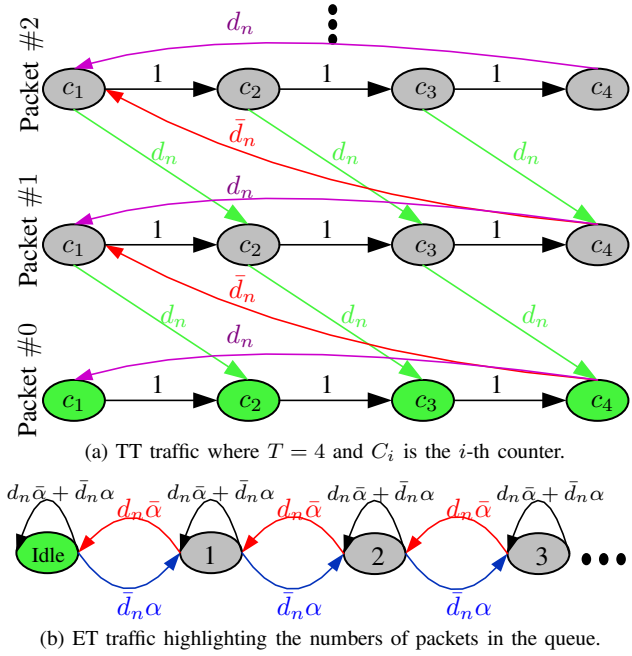

	\ifCLASSOPTIONdraftcls
	\subfloat[TT traffic where $T=4$ and $C_i$ is the $i$-th counter.]{%
		\input{QT_anaylsis/figures/DTMC_diagram/1col/DTMC_diagram_P.tex}
	}
	\subfloat[ET traffic highlighting  the numbers of packets in the queue.]{%
		\input{QT_anaylsis/figures/DTMC_diagram/1col/DTMC_diagram_B.tex}}		
	\else
	\subfloat[TT traffic where $T=4$ and $C_i$ is the $i$-th counter.]{%
		\input{QT_anaylsis/figures/DTMC_diagram/2col/DTMC_diagram_P.tex}
	}\qquad
	\subfloat[ET traffic highlighting the numbers of packets in the queue.]{%
		\input{QT_anaylsis/figures/DTMC_diagram/2col/DTMC_diagram_B.tex}}		
	\fi	
	\caption{DTMCs modeling the temporal evolution. Green states represent idle states.}
	\label{fig:DTMC}
\end{figure}

Through this work, a mathematically tractable solution is sought to address the aforementioned \ac{DTMC} employed at each device. Markov chains with QBD structure can be solved via utilizing the \ac{MAM} \cite{G.Kulkarni1999},\cite{Alfa2015}. Based on the state transition matrix defined in (\ref{eq:QBD}), the following lemma derives the steady state distribution of the queues temporal evolution. 
\begin{lemma}\label{lem:MAM_lemma}
	The steady state distribution of a device belonging to the $n$-th QoS class based on the state transition matrix $\mathbf{P}_n$ under TT traffic with cycle duty $T$ is given by 
	\begin{equation}\label{eq:MAM_SSP}
	\mathbf{x}_{i,n}= \begin{cases}
	\mathbf{x}_{i,n}\mathbf{B} + \mathbf{x}_{i+1,n}\mathbf{E}_n, & i = 0,  \\
	\mathbf{x}_{i-1,n}\mathbf{C} + \mathbf{x}_{i,n}(\mathbf{A}_{1,n} + \mathbf{R}_n\mathbf{A}_{2,n}),  & i =1, \\
	\mathbf{x}_{i-1,n}\mathbf{R}_n, &  i > 1,
	\end{cases}
	\end{equation}
	where $\mathbf{R}_n$ is the \ac{MAM} matrix and is given by $\mathbf{R}_n= \mathbf{A}_{0.n}(\mathbf{I}_{T}-\mathbf{A}_{1,n}-\omega\mathbf{A}_{02,n})^{-1}$. The term $\omega$ is the spectral radius of $\mathbf{R}$, which can be evaluated by solving for $z$ in $z=\mathbf{s}(\mathbf{I}_{T} - \mathbf{A}_{1,n}-z\mathbf{A}_{2,n})^{-1}\mathbf{I}_T$. In addition, (\ref{eq:MAM_SSP}) must  satisfy the  normalization $\mathbf{x}_{0,n}\bm{1}_T + \mathbf{x}_{1,n}(\mathbf{I}_T - \mathbf{R}_n)^{-1}]\bm{1}_T = 1$.
\end{lemma}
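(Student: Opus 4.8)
The plan is to recognize $\mathbf{P}_n$ in \eqref{eq:QBD} as the transition matrix of an irreducible, aperiodic \ac{QBD} process on the state space $\{(i,k): i\geq 0,\ k\in\{1,\dots,T\}\}$, where $i$ counts the packets in the queue and $k$ the phase of the degenerate PH arrival counter, and then to apply the standard matrix-geometric theory. First I would note that the drift condition for this \ac{QBD} (negative drift of the free process driven by $\mathbf{A}_n=\mathbf{A}_{0,n}+\mathbf{A}_{1,n}+\mathbf{A}_{2,n}$, whose stationary phase distribution is the uniform one induced by the cyclic counter) reduces precisely to the stability requirement \eqref{eq:stability}, $d_n\ge \tfrac{1}{T-1}$. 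Under this condition the chain is positive recurrent, the matrix equation $\mathbf{R}_n=\mathbf{A}_{0,n}+\mathbf{R}_n\mathbf{A}_{1,n}+\mathbf{R}_n^2\mathbf{A}_{2,n}$ has a minimal nonnegative solution $\mathbf{R}_n$ with $\mathrm{sp}(\mathbf{R}_n)<1$, and the level process is matrix-geometric, $\mathbf{x}_{i+1,n}=\mathbf{x}_{i,n}\mathbf{R}_n$ for all $i\ge 1$. This delivers the $i>1$ case of \eqref{eq:MAM_SSP} immediately.

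Next I would extract the boundary relations directly from $\mathbf{x}_n\mathbf{P}_n=\mathbf{x}_n$: the level-$0$ balance equation reads $\mathbf{x}_{0,n}=\mathbf{x}_{0,n}\mathbf{B}+\mathbf{x}_{1,n}\mathbf{E}_n$, which is the $i=0$ case; the level-$1$ balance equation reads $\mathbf{x}_{1,n}=\mathbf{x}_{0,n}\mathbf{C}+\mathbf{x}_{1,n}\mathbf{A}_{1,n}+\mathbf{x}_{2,n}\mathbf{A}_{2,n}$, and substituting $\mathbf{x}_{2,n}=\mathbf{x}_{1,n}\mathbf{R}_n$ gives the $i=1$ case $\mathbf{x}_{1,n}=\mathbf{x}_{0,n}\mathbf{C}+\mathbf{x}_{1,n}(\mathbf{A}_{1,n}+\mathbf{R}_n\mathbf{A}_{2,n})$. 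Together with the geometric tail, the finitely many unknowns $\mathbf{x}_{0,n},\mathbf{x}_{1,n}$ are pinned down up to scale by this linear system, and the scale is fixed by $\sum_{i\ge0}\mathbf{x}_{i,n}\mathbf{1}_T=1$; using $\mathrm{sp}(\mathbf{R}_n)<1$ to sum the Neumann series $\sum_{i\ge1}\mathbf{R}_n^{i-1}=(\mathbf{I}_T-\mathbf{R}_n)^{-1}$ collapses the normalization to $\mathbf{x}_{0,n}\mathbf{1}_T+\mathbf{x}_{1,n}(\mathbf{I}_T-\mathbf{R}_n)^{-1}\mathbf{1}_T=1$, as stated.

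The crux — and the step I expect to be the main obstacle — is replacing the matrix quadratic by the claimed closed form for $\mathbf{R}_n$. Here I would exploit the rank-one structure $\mathbf{A}_{0,n}=\bar{d}_n\,\mathbf{s}\bm{\zeta}$: from the $\mathbf{R}_n$-equation one has $\mathbf{R}_n=\mathbf{A}_{0,n}\bigl(\mathbf{I}_T-\mathbf{A}_{1,n}-\mathbf{R}_n\mathbf{A}_{2,n}\bigr)^{-1}$, so $\mathbf{R}_n$ is a left multiple of $\mathbf{s}\bm{\zeta}$ and hence itself has rank one, $\mathbf{R}_n=\mathbf{s}\,\mathbf{w}_n$ for some row vector $\mathbf{w}_n$. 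Consequently $\mathbf{R}_n^2=(\mathbf{w}_n\mathbf{s})\mathbf{R}_n=\omega\mathbf{R}_n$ with $\omega=\mathbf{w}_n\mathbf{s}=\mathrm{tr}(\mathbf{R}_n)=\mathrm{sp}(\mathbf{R}_n)\in(0,1)$, and the quadratic linearizes to $\mathbf{R}_n(\mathbf{I}_T-\mathbf{A}_{1,n}-\omega\mathbf{A}_{2,n})=\mathbf{A}_{0,n}$, i.e.\ $\mathbf{R}_n=\mathbf{A}_{0,n}(\mathbf{I}_T-\mathbf{A}_{1,n}-\omega\mathbf{A}_{2,n})^{-1}$. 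Self-consistency of the scalar $\omega$ (take the trace, or right-multiply by $\mathbf{1}_T$ and use that the arrival phase is deterministic) yields the fixed-point equation $z=\mathbf{s}(\mathbf{I}_T-\mathbf{A}_{1,n}-z\mathbf{A}_{2,n})^{-1}\mathbf{1}_T$. The remaining work is to verify, along the way, that $\mathbf{I}_T-\mathbf{A}_{1,n}-z\mathbf{A}_{2,n}$ is invertible for $z\in[0,1)$ (it is of strictly sub-stochastic type there) and that the scalar map is monotone on $[0,1]$ with a unique root coinciding with $\mathrm{sp}(\mathbf{R}_n)$ under \eqref{eq:stability}, which closes the argument.
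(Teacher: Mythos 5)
Your proposal is correct and follows essentially the same matrix-analytic route as the paper: identify \eqref{eq:QBD} as a QBD, invoke the matrix-geometric tail $\mathbf{x}_{i+1,n}=\mathbf{x}_{i,n}\mathbf{R}_n$ with $\mathbf{R}_n$ the minimal nonnegative solution of $\mathbf{R}_n=\mathbf{A}_{0,n}+\mathbf{R}_n\mathbf{A}_{1,n}+\mathbf{R}_n^2\mathbf{A}_{2,n}$, read off the boundary equations from $\mathbf{x}_n\mathbf{P}_n=\mathbf{x}_n$, and sum the Neumann series for the normalization. The only presentational difference is in how the closed form for $\mathbf{R}_n$ is obtained: the paper's appendix exploits the rank-one structure of $\mathbf{A}_{0,n}$ by routing through the $\mathbf{G}_n$-matrix identity $\mathbf{R}_n=\mathbf{A}_{0,n}(\mathbf{I}_T-\mathbf{A}_{1,n}-\mathbf{A}_{0,n}\mathbf{G}_n)^{-1}$ and then defers to \cite{Alfa2015}, whereas you linearize the quadratic directly by observing that $\mathbf{A}_{0,n}=\bar{d}_n\mathbf{s}\bm{\zeta}$ forces $\mathbf{R}_n=\mathbf{s}\mathbf{w}_n$ to be rank one, so $\mathbf{R}_n^2=\omega\mathbf{R}_n$ with $\omega=\mathbf{w}_n\mathbf{s}=\mathrm{sp}(\mathbf{R}_n)$, which immediately gives $\mathbf{R}_n=\mathbf{A}_{0,n}(\mathbf{I}_T-\mathbf{A}_{1,n}-\omega\mathbf{A}_{2,n})^{-1}$ together with the scalar fixed-point equation for $\omega$. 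Your version is actually the more direct match to the lemma as stated (which is phrased in terms of $\omega$, not $\mathbf{G}_n$), and it makes explicit the verification steps (invertibility of $\mathbf{I}_T-\mathbf{A}_{1,n}-z\mathbf{A}_{2,n}$ on $[0,1)$, uniqueness of the root) that the paper leaves to the cited textbook; the two derivations are equivalent standard facts from the matrix-analytic literature.
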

\begin{proof}
See Appendix \ref{se:Appendix_B}
\end{proof}
Once the queue distribution is characterized, one can proceed with evaluating the waiting time distribution of a generic packet residing in a queue, which is the major component in computing the \ac{PAoI} as explained in Section \ref{sec:AoI}. Let $\mathcal{W}_n^T$ be the waiting time of a generic packet at a device belonging to the $n$-th \ac{QoS} class in the queue under the \ac{TT} traffic and $\mathcal{W}_n^{m,T} = \mathbb{P}\{\mathcal{W}_n^T = m\}$. Also, let $\mathbf{q}^n_i = [q^n_{i,1} \; q^n_{i,2} \; \cdots \; q^n_{i,T}]$, where $q^n_{i,j}$ is the probability that an incoming packet at a device belonging to the $n$-th class will find $i$ packets waiting and the next packet arrival has phase $j$. In accordance, $\mathbf{q}$ is evaluated as \cite{Alfa2015}
\begin{equation}
\mathbf{q}^n_l= \begin{cases}
\sigma\Big(\mathbf{x}_{i,n}\mathbf{s}\bm{\zeta} + \mathbf{x}_{i+1,n}\mathbf{s}\bm{\zeta}d_n\Big),  &  l = 0 \\ 
\sigma\Big(\mathbf{x}_{i,n}\mathbf{s}\bm{\zeta}\bar{d}_n + \mathbf{x}_{i+1,n}\mathbf{s}\bm{\zeta}d_n\Big), &  l \ge 1,
\end{cases}
\end{equation}
where $\sigma = \bm{\zeta}(\mathbf{I}_T-\mathbf{S})^{-1} \mathbf{1}$. To this end, the waiting time distribution is calculated as 
\begin{equation}\label{eq:waiting_P}
\mathcal{W}_{n}^{m,T}= \begin{cases}
\mathbf{q}_0^n\bm{1}_T,  &  m = 0, \\ 
\sum_{v=1}^{i} \mathbf{q}_{v}^n \bm{1}_T\Big(\begin{array}{c}{i-1} \\ {v-1}\end{array}\Big) b^{v}(1-b)^{i-v}, &  m \ge 1.
\end{cases}
\end{equation}
After computing the waiting time distribution for the TT traffic model, which entails the macroscopic network scale, one can proceed with the \ac{PAoI} evaluation via the following theorem. 
\begin{theorem}\label{lem:PAoI_TT}
	The spatially averaged PAoI under TT traffic with duty cycle $T$ is given by 
	\begin{equation}
	\mathbb{E}\{\Delta_p|  \mathrm{\Phi}, \mathrm{\Psi}\} = T + \frac{1}{N}\Big(\sum_{\varrho=1}^{N}\sum_{j=0}^{\infty} j\mathcal{W}_{\varrho}^{j,T} \Big).
	\end{equation}
\end{theorem}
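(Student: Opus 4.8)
The plan is to start from the definition of the spatially averaged PAoI in \eqref{eq:peakAoI}, namely $\mathbb{E}\{\Delta_p \mid \mathrm{\Phi}, \mathrm{\Psi}\} = \mathbb{E}^{!}\{\mathcal{I}_o + \mathcal{W}_o \mid \mathrm{\Phi}, \mathrm{\Psi}\}$, and to exploit linearity of the reduced Palm expectation to split it as $\mathbb{E}^{!}\{\mathcal{I}_o \mid \mathrm{\Phi}, \mathrm{\Psi}\} + \mathbb{E}^{!}\{\mathcal{W}_o \mid \mathrm{\Phi}, \mathrm{\Psi}\}$, then evaluate the two terms separately.

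For the inter-arrival term, I would invoke the TT traffic model of Section~\ref{sec:system_model}: every device generates a packet exactly every $T$ slots, which is encoded by the degenerate PH-type counter $(\bm{\zeta},\mathbf{S})$ that absorbs (i.e., generates an arrival) deterministically after traversing the $T$ transient phases. Hence $\mathcal{I}_o = T$ almost surely, irrespective of the device's location, QoS class, or queue backlog, so $\mathbb{E}^{!}\{\mathcal{I}_o \mid \mathrm{\Phi},\mathrm{\Psi}\} = T$.

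For the waiting term, the idea is to condition on the QoS class of the tagged device. By the uniform network partitioning in \eqref{eq:pmf1}--\eqref{eq:pmf2}, the meta distribution of $P_s(\theta)$ is quantized into $N$ classes that are equiprobable over devices, and — because under TT traffic the packet-generation rate is the common value $1/T$ for every device regardless of its class — packet-averaging and device-averaging over the classes coincide, so a generic packet belongs to class $\varrho$ with probability $1/N$. Conditioned on class $\varrho$, Approximation~\ref{approxIID} lets me treat the queue as the PH/Geo/1 chain of Lemma~\ref{lem:MAM_lemma} with geometric departure parameter $d_\varrho$, whose stationary waiting-time PMF is $\mathcal{W}_{\varrho}^{j,T}$ given by \eqref{eq:waiting_P}; hence the class-$\varrho$ mean waiting time is $\sum_{j=0}^{\infty} j\,\mathcal{W}_\varrho^{j,T}$. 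Applying the law of total expectation yields $\mathbb{E}^{!}\{\mathcal{W}_o \mid \mathrm{\Phi},\mathrm{\Psi}\} = \tfrac{1}{N}\sum_{\varrho=1}^{N}\sum_{j=0}^{\infty} j\,\mathcal{W}_\varrho^{j,T}$, and adding the deterministic term $T$ gives the claim.

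The main obstacle is not the algebra but the two modelling reductions that make this conditioning legitimate: (i) identifying the reduced Palm expectation over the typical device with a $1/N$-weighted mixture over the QoS classes, which relies on the equiprobable partitioning and on the rate-conservation observation that the generation rate $1/T$ is class-independent; and (ii) replacing the genuinely location- and time-slot-dependent queue driven by \eqref{eq:SIR_TT} with the per-class PH/Geo/1 chain having temporally independent geometric departures, which is precisely the content of Approximations~\ref{approx_DetInterf}--\ref{approxIID} together with the $\varTheta_{T}$ characterization feeding $d_\varrho$. Once these are granted, the per-class waiting-time statistics come directly from Lemma~\ref{lem:MAM_lemma} and \eqref{eq:waiting_P}, and the theorem follows by linearity of expectation.
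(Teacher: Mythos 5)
Your proposal is correct and follows essentially the same route as the paper: split the PAoI of \eqref{eq:peakAoI} into the deterministic inter-arrival term $\mathbb{E}^{!}\{\mathcal{I}_o\mid\mathrm{\Phi},\mathrm{\Psi}\}=T$ and the mean waiting time obtained by averaging \eqref{eq:waiting_P} uniformly over the $N$ equiprobable QoS classes. The extra care you take in justifying the $1/N$ class-mixture under the reduced Palm expectation and in flagging the reliance on Approximations~\ref{approx_DetInterf}--\ref{approxIID} is a sound elaboration of steps the paper leaves implicit, not a different argument.
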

\begin{proof}
The theorem is proven by plugging  (\ref{eq:waiting_P}) into (\ref{eq:peakAoI}) and noting that $\mathbb{E}^{!}\Big\{\mathcal{I}_o |  \mathrm{\Phi}, \mathrm{\Psi} \Big\} = T$. 
\end{proof}
\subsection{ET Traffic}

The queue evolution for a device within the $n$-th class is captured through a \ac{DTMC} represented in Fig. \ref{fig:DTMC}(b), and characterized via the probability transition matrix $\mathbf{P}_n$ as 
\begin{equation}\label{eq:QBD_ET}
\mathbf{P}_n=\left[\begin{array}{lllll}{\bar{\alpha}} & {\alpha} & {} & {} & {} \\ {\bar{\alpha}d_n} & {\alpha d_n + \bar{\alpha}\bar{d}_n} & {\alpha \bar{d}_n} & {} & {} \\ {} & {\bar{\alpha}d_n} & {\alpha d_n + \bar{\alpha}\bar{d}_n} & {\alpha \bar{d}_n} & {} \\ {} & {} & {\ddots} & {\ddots} & {\ddots}\end{array}\right].
\end{equation}
For the \ac{ET} traffic, the \ac{DTMC} in (\ref{eq:QBD_ET}) is stable if the inequality $\frac{\alpha}{d_n} < 1$ is satisfied. For unstable \acp{DTMC}, the idle probability is naturally 0. To this end, let $\mathbf{x}_n = [x_{0,n} \; x_{1,n} \; x_{2,n} \; \cdots]$ be the steady state probability vector of the $n$-th class, where $x_{i,n}$ is the probability that a device belonging to the $n$-th class has $i$ packets residing in its queue. The idle probability of device in the $j$-th class is evaluated as \cite{Alfa2015}
\begin{equation}\label{eq:x_0}
x_{i,n} = R_n^i\frac{x_{0,n}}{\bar{d}_n}, \;\; \text{where } R_n = \frac{\alpha \bar{d}_n}{\bar{\alpha}d_n}, \;\; \text{and } x_{0,n} = \frac{d_n - \alpha}{d_n}.
\end{equation}
Once the queue distribution is characterized, one can proceed with evaluating the spatially averaged idle probability $\varTheta_{E}$ and the waiting time distribution of a generic packet within the considered queue. Similar to the \ac{TT}  traffic, an inter-dependency exists between the network-wide aggregate interference (i.e., $F_E(\theta, \alpha)$) and the queues characterization (i.e., $\varTheta_{E}$). To solve such interdependency, Algorithm 1 is presented which provides a uniquely determined solution by virtue of fixed point theorem. 

As mentioned earlier, let $\mathcal{W}_n^E$ be the waiting time of a generic packet at a device belonging to the $n$-th \ac{QoS} class in the queue under \ac{ET} traffic and $\mathcal{W}_n^{m,E} = \mathbb{P}\{\mathcal{W}_n^E = m\}$. The waiting time for the $n$-th class is \cite{Alfa2015}
\begin{equation}\label{eq:waiting_B}
\mathcal{W}_{n}^{m,E}= \begin{cases}
\frac{d_n -\alpha}{d_n},  &  m = 0, \\ 
\sum_{v=1}^{i} x_{v,n}\left(\begin{array}{c}{i-1} \\ {v-1}\end{array}\right) d_n^{v}(1-d_n)^{i-v}, &  m \ge 1.
\end{cases}
\end{equation}

\begin{figure}
	\begin{algorithm}[H]\label{alg:iterative}
		\caption{Computation of ${F}_E(\theta, \delta)$}\label{euclid}
		\begin{algorithmic}
			\Procedure{}{$\alpha, \epsilon, \theta, N, \varphi$} 
			\LState initialize  $\varTheta_{E}$ 
			\While {$||\varTheta_{E}^k - \varTheta_{E}^{k-1}|| \geq \varphi$}  
			\LState Compute the moments $\tilde{M}_{b,E}$ from Lemma \ref{lem:meta_lemma_2}				 
			\LState Evaluate $F_E(\theta, \xi)$ based on (\ref{eq:metaPDF})
			\LState Compute $d_i,\forall i=\{1,2,\cdots,N\}$ from the Discretized 
			\LState $F_E(\theta, \xi)$ based on (\ref{eq:pmf1}) and (\ref{eq:pmf2}) 
			\For {$n=\{1,2,\cdots,N\}$} 
			\If {$\alpha < d_n$} \Comment{Stability condition}
			\LState Compute $x_{0,n}$ based on (\ref{eq:x_0})
			\Else
			\LState Set $x_{0,n} = 0$
			\EndIf
			\EndFor
			\LState Compute $\varTheta_{E}$ based on (\ref{eq:res})
			\LState Increment k
			\EndWhile
			\LState \Return  ${F}_E(\theta, \delta)$
			\EndProcedure
		\end{algorithmic}
	\end{algorithm}
\end{figure}
Finally, the following theorem characterizes the \ac{PAoI} under \ac{ET} traffic.
\begin{theorem}\label{lem:PAoI_ET}
	The spatially averaged PAoI under ET traffic with cycle duty $T$ is given by 
	\begin{equation}
	\mathbb{E}\{\Delta_p|  \mathrm{\Phi}, \mathrm{\Psi}\} = \frac{1}{\alpha} + \frac{1}{N}\Big(\sum_{\varrho=1}^{N}\sum_{j=0}^{\infty} j\mathcal{W}_{\varrho}^{j,E} \Big).
	\end{equation}
\end{theorem}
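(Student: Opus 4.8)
The plan is to mirror the proof of Theorem~\ref{lem:PAoI_TT}, substituting the ET-specific arrival statistics and queueing quantities into the Palm expression \eqref{eq:peakAoI}. Starting from $\mathbb{E}\{\Delta_p\mid\mathrm{\Phi},\mathrm{\Psi}\}=\mathbb{E}^{!}\{\mathcal{I}_o+\mathcal{W}_o\mid\mathrm{\Phi},\mathrm{\Psi}\}$ and using linearity of the reduced Palm expectation, I would split the right-hand side into the mean inter-arrival time $\mathbb{E}^{!}\{\mathcal{I}_o\mid\mathrm{\Phi},\mathrm{\Psi}\}$ and the mean waiting time $\mathbb{E}^{!}\{\mathcal{W}_o\mid\mathrm{\Phi},\mathrm{\Psi}\}$, and evaluate each term separately.

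For the inter-arrival term, recall that under the ET model a new packet is generated at each device in every slot independently with probability $\alpha$. Hence the number of slots between the generation epochs of two consecutive packets at the tagged device is geometrically distributed on $\{1,2,\dots\}$ with success probability $\alpha$, and this law is unaffected by the Palm conditioning on the tagged packet's own generation epoch because the per-slot Bernoulli arrivals are i.i.d. Therefore $\mathbb{E}^{!}\{\mathcal{I}_o\mid\mathrm{\Phi},\mathrm{\Psi}\}=1/\alpha$, which is the first summand in the claimed expression.

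For the waiting-time term, I would condition on the QoS class of the tagged device. By the uniform network partitioning of Section~\ref{QosClass}, the typical device belongs to each of the $N$ quantized classes with probability $1/N$, and conditioned on membership in class $n$ its waiting-time distribution is exactly $\mathcal{W}_n^{m,E}=\mathbb{P}\{\mathcal{W}_n^E=m\}$ from \eqref{eq:waiting_B}, whose conditional mean is $\sum_{j=0}^{\infty} j\,\mathcal{W}_n^{j,E}$. Taking the expectation over the equiprobable classes then yields $\mathbb{E}^{!}\{\mathcal{W}_o\mid\mathrm{\Phi},\mathrm{\Psi}\}=\frac{1}{N}\sum_{\varrho=1}^{N}\sum_{j=0}^{\infty} j\,\mathcal{W}_{\varrho}^{j,E}$; adding the two contributions gives the stated formula.

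The main obstacle is conceptual rather than computational: one must be careful that the reduced Palm averaging over a generic packet of a generic device is consistent with (i) the mean-field decoupling of the queue from the network-wide interference (Approximation~\ref{approxIID} and the value of $\varTheta_{E}$ obtained at convergence of Algorithm~\ref{alg:iterative}), so that the class-$n$ waiting-time law \eqref{eq:waiting_B} is indeed the correct conditional law, and (ii) the observation that $\mathcal{I}_o$ and $\mathcal{W}_o$ need only be combined through their means, so no joint distribution of the two is required. Once these points are granted — exactly as in the TT case — the substitution is immediate and the proof is complete.
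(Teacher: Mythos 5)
Your proposal is correct and follows essentially the same route as the paper's proof, which simply plugs \eqref{eq:waiting_B} into \eqref{eq:peakAoI} and notes that $\mathbb{E}^{!}\{\mathcal{I}_o\mid\mathrm{\Phi},\mathrm{\Psi}\}=1/\alpha$; your additional detail (the geometric inter-arrival law under Bernoulli arrivals and the equiprobable averaging over the $N$ QoS classes) just makes explicit what the paper leaves implicit.
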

\begin{proof}
	The theorem is proven by plugging  (\ref{eq:waiting_B}) into (\ref{eq:peakAoI}) and noting that $\mathbb{E}^{!}\Big\{\mathcal{I}_o |  \mathrm{\Phi}, \mathrm{\Psi} \Big\} = \frac{1}{\alpha}$.
\end{proof}

	\thispagestyle{empty}
\section{Numerical Results}\label{sec:simulation_results}

In this section, different numerical insights are presented for the purpose of (a) validating the proposed mathematical framework for the two traffic models, (b) characterizing the information freshness within a large scale uplink \ac{IoT} network, and (c) highlighting the influence of the system parameters on the network's stability. First, discussion of the simulation environment is presented to establish a clear understanding of the simulation framework. 

\subsection{Simulation Methodology}

The established simulation framework involves deployment of \acp{BS} and devices as discussed in Section \ref{sec:system_model}. Ergodicity is ensured via microscopic averaging, in which the temporal steady state statistics of the queues at each device are collected. The simulation area is $10 \times 10 \text{ km}^2$ with a wrapped-around boundaries to eliminate the effect of the boundary devices within the network. Discretized, synchronized, and time-slotted system is considered, where during each time slot (i.e., microscopic run), independent channel gains are instantiated and packets are generated deterministically or probabilistically, depending on the traffic model. At the start of the simulation, for the \ac{TT} traffic, all the devices within the network are assigned an i.i.d. transmission offset $\beta_i$ from the distribution $f_\beta(\tau)=\frac{1}{T}$ for $\tau \in \{0,1,\cdots,T-1\}$, which depicts the time index of a packet generation event. A new packet is generated periodically following $\beta_o + \ell T, \; \forall \ell = 1,2,\cdots$. For the \ac{ET} traffic, a new packet is generated at each device every time slot with the probability $\alpha$. Every device with packets residing in its queue attempts the communication of such packets with its serving \ac{BS} based on a \ac{FCFS} strategy. A packet is dropped from its queue if the realized uplink \ac{SIR} at the serving \ac{BS} is greater than the detection threshold $\theta$.
\begin{figure}
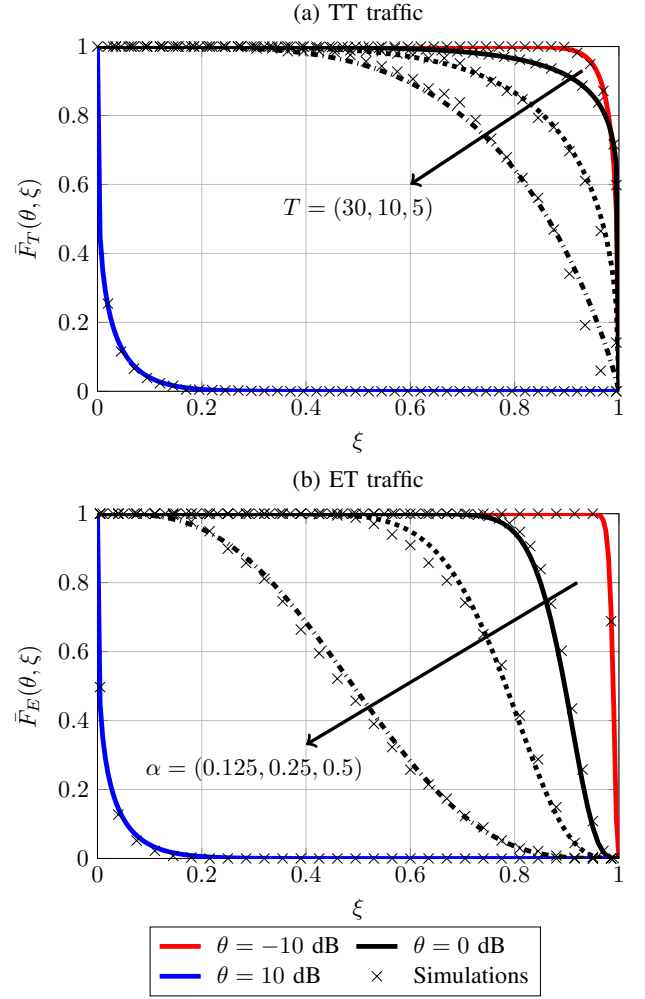

	\centering
	\ifCLASSOPTIONdraftcls
	\input{simulation_results/figures/meta_sim/1col/meta_sim_com.tex}	
	\else
	\input{simulation_results/figures/meta_sim/2col/meta_sim_com.tex}	
	\fi
	\caption{Meta distribution for TT and ET traffic models.}
	\label{fig:meta_sim} 
\end{figure}
To ensure a steady state operation of the queues, each queue's occupancy at each device is monitored. For initialization, all queues at the devices are initiated as being empty and then simulation runs for a sufficiently large number of time slots till steady-state is realized. Let $\hat{x}_0^t$ denotes the average idle steady state probability across all the devices within the network for the $t$-th iteration. Mathematically, the steady state behavior is reached once $||\hat{x}_0^k-\hat{x}_0^{k-1}|| < \varphi$, where $\varphi$ is some predefined tolerance (e.g. $10^{-4}$). Once steady state is reached, all temporal statistics are then collected based on adequately large number of microscopic realizations (e.g., $10000$). Unless otherwise stated, we consider the following parameters: $\eta = 4$, $\rho = -90$ dBm, $\epsilon = 1$, $T=8$ and $\alpha = 0.125$. 

In Fig. \ref{fig:meta_sim}, we consider the framework verification via the meta distribution of the transmission success probability for each traffic model with different traffic loads and detection thresholds. First, for the two considered traffic models, one can observe a close match between the simulation and the proposed analytical framework, which confirms the accuracy of the proposed mathematical model and shows that the interdependency between the network-wide aggregate interference and the queues temporal evolution is captured. For low values of $\theta$, the devices are able to empty their queues and become idle. This leads to a lower network-wide aggregate interference, and thus increased percentile of devices achieving a given reliability $\xi$. As $\theta$ increases, the probability of successful transmission attempts for a generic device decreases, which aggravates the aggregate network interference. Consequently, more devices are active within the network and the achieved reliability to meet the targeted $\theta$ decreases. Fig. \ref{fig:meta_sim}(a) presents the \ac{TT} traffic patters for different values of cyclic duration. It is observed that as $T$ decreases, the percentile of active devices increases within the network. Decreasing $T$ increases the packet generation rate, shortens the time required to dispatch generated packets, and increases the number of synchronized devices. Accordingly, the network interference increases, which deteriorates the transmission success probabilities. Such a consequential effect of increased traffic load affects the percentile of devices within the network to achieve a given transmission success probability, as illustrated via the meta distribution. In addition, Fig. \ref{fig:meta_sim}(b) presents the meta distribution for the \ac{ET} traffic model with different arrival probabilities. Similar to \ac{TT} case, as $\alpha$ increases, the percentile of active devices increases within the network, thus affecting the reliability to achieve a targeted decoding threshold $\theta$. More insights comparing the \ac{TT} to the \ac{ET} models will be discussed in Fig. \ref{fig:det_geo_comp}.

\begin{figure} 
	\centering
	\ifCLASSOPTIONdraftcls
	\definecolor{mycolor1}{rgb}{0.0, 0.0, 1.0}
\definecolor{mycolor2}{rgb}{1.0, 0.01, 0.24}
\begin{tikzpicture}
\pgfplotsset{
	width=0.75\columnwidth,
	height=1.8in,
	scale only axis,
	xmin=3,
	xmax=20,
	legend style={%
		legend columns =2,
		at={(0.7,0.99)},
		anchor=north east
	},
	legend cell align=left,
	ticklabel shift={0.05cm},
	tick label style={/pgf/number format/1000 sep=}
}
\begin{axis}[%
every outer y axis line/.append style={mycolor2},
every y tick label/.append style={font=\color{mycolor2}},
every y tick/.append style={mycolor2},
ymin=2,
ymax=20,
ylabel style={font=\color{mycolor2}},
ylabel={Waiting time},
xlabel={$T$ [Time slots]},
yticklabel pos=right
]
\addlegendimage{style={color=black,dotted, line width=2.0pt}}
\addlegendentry{{ $\theta = -5$ dB}}
\addlegendimage{style={color=black,line width=2.0pt}}
\addlegendentry{{ $\theta = 0$ dB}}
\addlegendimage{style={color=black,
		mark=*, mark options={solid} , line width=2.0pt]}}
\addlegendentry{{ $\theta = 5$ dB}}
\addlegendimage{color=black, only marks, mark size=3pt, mark=*, mark options={solid, fill=green}}
\addlegendentry{{Stability point}}

\addplot [color=mycolor2, mark=*, mark options={solid}, line width=2.0pt]
  table[row sep=crcr]{%
4	inf\\
5	inf\\
6	inf\\
7	inf\\
8	9.17434048693152\\
9	6.36737835939597\\
10	5.4673291105167\\
11	4.4365041494544\\
12	4.08934080651838\\
13	3.68418165499784\\
14	3.53356664147004\\
15	3.3378138007776\\
16	3.36828385364703\\
17	2.99316988074475\\
18	3.00650681365453\\
19	3.0187664918623\\
20	3.03006035547106\\
};

\addplot [color=black, only marks, mark size=3pt, mark=*, mark options={solid, fill=green}]
table[row sep=crcr]{%
	8	9.17434048693152\\
};

\addplot [color=mycolor2, line width=2.0pt]
  table[row sep=crcr]{%
4	19.1376557304941\\
5	9.23544385041168\\
6	6.35558458062669\\
7	5.2495330627464\\
8	4.14954580520336\\
9	3.91369310628236\\
10	3.40975552736508\\
11	3.42259932390413\\
12	3.17992057419832\\
13	3.18633472987705\\
14	3.19215477494114\\
15	3.19745226951302\\
16	2.97706585333942\\
17	2.97857074685161\\
18	2.97995183335213\\
19	2.98122143356389\\
20	2.98238996391325\\
};
\addplot [color=black, only marks, mark size=3pt, mark=*, mark options={solid, fill=green}]
table[row sep=crcr]{%
	4	19.1376557304941\\
};

\addplot [color=mycolor2, dotted, line width=2.0pt]
  table[row sep=crcr]{%
4	7.38980586639601\\
5	5.7183815004394\\
6	4.79174277939332\\
7	3.88406284061822\\
8	3.59779260768809\\
9	3.31674068315681\\
10	3.31333793036225\\
11	3.04713111466424\\
12	3.04315358256758\\
13	3.03984496428581\\
14	3.03704734721854\\
15	3.03465264171304\\
16	3.03257876214666\\
17	3.03076490420858\\
18	3.02916521339915\\
19	3.02774375069987\\
20	3.02647143123135\\
};
\addplot [color=black, only marks, mark size=3pt, mark=*, mark options={solid, fill=green}]
table[row sep=crcr]{%
4	7.38980586639601\\
};

\end{axis}

\begin{axis}[%
every y tick label/.append style={font=\color{mycolor1}},
every y tick/.append style={mycolor1},
ymin=10,
ymax=25,
ylabel style={font=\color{mycolor1}},
ylabel={PAoI},
yticklabel pos=left,
xtick=\empty
]
\addplot [color=mycolor1, line width=2.0pt]
table[row sep=crcr]{%
	4	23.1376557304941\\
	5	14.2354438504117\\
	6	12.3555845806267\\
	7	12.2495330627464\\
	8	12.1495458052034\\
	9	12.9136931062824\\
	10	13.4097555273651\\
	11	14.4225993239041\\
	12	15.1799205741983\\
	13	16.1863347298771\\
	14	17.1921547749411\\
	15	18.197452269513\\
	16	18.9770658533394\\
	17	19.9785707468516\\
	18	20.9799518333521\\
	19	21.9812214335639\\
	20	22.9823899639133\\
};

\addplot [color=black, only marks, mark size=3pt, mark=*, mark options={solid, fill=green}]
table[row sep=crcr]{%
	4	23.1376557304941\\
};
\addplot [color=mycolor1, mark=*, mark options={solid}, line width=2.0pt]
table[row sep=crcr]{%
	4	inf\\
	5	inf\\
	6	inf\\
	7	inf\\
	8	17.1743404869315\\
	9	15.367378359396\\
	10	15.4673291105167\\
	11	15.4365041494544\\
	12	16.0893408065184\\
	13	16.6841816549978\\
	14	17.53356664147\\
	15	18.3378138007776\\
	16	19.368283853647\\
	17	19.9931698807448\\
	18	21.0065068136545\\
	19	22.0187664918623\\
	20	23.0300603554711\\
};

\addplot [color=black, only marks, mark size=3pt, mark=*, mark options={solid, fill=green}]
table[row sep=crcr]{%
	8	17.1743404869315\\
};

\addplot [color=mycolor1, dotted, line width=2.0pt]
table[row sep=crcr]{%
	4	11.389805866396\\
	5	10.7183815004394\\
	6	10.7917427793933\\
	7	10.8840628406182\\
	8	11.5977926076881\\
	9	12.3167406831568\\
	10	13.3133379303623\\
	11	14.0471311146642\\
	12	15.0431535825676\\
	13	16.0398449642858\\
	14	17.0370473472185\\
	15	18.034652641713\\
	16	19.0325787621467\\
	17	20.0307649042086\\
	18	21.0291652133991\\
	19	22.0277437506999\\
	20	23.0264714312313\\
};

\addplot [color=black, only marks, mark size=3pt, mark=*, mark options={solid, fill=green}]
table[row sep=crcr]{%
	4	11.389805866396\\
};

\draw [black] (rel axis cs:0.7,0.55) ellipse [x radius=0.5, y radius=1];
\node at (rel axis cs:0.65,0.65) {\textcolor{blue}{PAoI}};

\draw [black] (rel axis cs:0.7,0.08) ellipse [x radius=0.5, y radius=1];
\node at (rel axis cs:0.7,0.2) {\textcolor{red}{Waiting time}};

\end{axis}
\end{tikzpicture}%
	\else
	\definecolor{mycolor1}{rgb}{0.0, 0.0, 1.0}
\definecolor{mycolor2}{rgb}{1.0, 0.01, 0.24}
\begin{tikzpicture}
\pgfplotsset{
	width=0.75\columnwidth,
	height=2in,
	scale only axis,
	xmin=3,
	xmax=20,
	legend style={%
		legend columns =2,
		at={(0.9,0.99)},
		anchor=north east
	},
	legend cell align=left,
	ticklabel shift={0.05cm},
	tick label style={/pgf/number format/1000 sep=}
}
\begin{axis}[%
every outer y axis line/.append style={mycolor2},
every y tick label/.append style={font=\color{mycolor2}},
every y tick/.append style={mycolor2},
ymin=2,
ymax=20,
ylabel style={font=\color{mycolor2}},
ylabel={Waiting time},
xlabel={$T$ [Time slots]},
yticklabel pos=right
]
\addlegendimage{style={color=black,dotted, line width=2.0pt}}
\addlegendentry{\footnotesize{ $\theta = -5$ dB}}
\addlegendimage{style={color=black,line width=2.0pt}}
\addlegendentry{\footnotesize{ $\theta = 0$ dB}}
\addlegendimage{style={color=black,
		mark=*, mark options={solid} , line width=2.0pt]}}
\addlegendentry{\footnotesize{ $\theta = 5$ dB}}
\addlegendimage{color=black, only marks, mark size=3pt, mark=*, mark options={solid, fill=green}}
\addlegendentry{\footnotesize{Stability point}}

\addplot [color=mycolor2, mark=*, mark options={solid}, line width=2.0pt]
  table[row sep=crcr]{%
4	inf\\
5	inf\\
6	inf\\
7	inf\\
8	9.17434048693152\\
9	6.36737835939597\\
10	5.4673291105167\\
11	4.4365041494544\\
12	4.08934080651838\\
13	3.68418165499784\\
14	3.53356664147004\\
15	3.3378138007776\\
16	3.36828385364703\\
17	2.99316988074475\\
18	3.00650681365453\\
19	3.0187664918623\\
20	3.03006035547106\\
};

\addplot [color=black, only marks, mark size=3pt, mark=*, mark options={solid, fill=green}]
table[row sep=crcr]{%
	8	9.17434048693152\\
};

\addplot [color=mycolor2, line width=2.0pt]
  table[row sep=crcr]{%
4	19.1376557304941\\
5	9.23544385041168\\
6	6.35558458062669\\
7	5.2495330627464\\
8	4.14954580520336\\
9	3.91369310628236\\
10	3.40975552736508\\
11	3.42259932390413\\
12	3.17992057419832\\
13	3.18633472987705\\
14	3.19215477494114\\
15	3.19745226951302\\
16	2.97706585333942\\
17	2.97857074685161\\
18	2.97995183335213\\
19	2.98122143356389\\
20	2.98238996391325\\
};
\addplot [color=black, only marks, mark size=3pt, mark=*, mark options={solid, fill=green}]
table[row sep=crcr]{%
	4	19.1376557304941\\
};

\addplot [color=mycolor2, dotted, line width=2.0pt]
  table[row sep=crcr]{%
4	7.38980586639601\\
5	5.7183815004394\\
6	4.79174277939332\\
7	3.88406284061822\\
8	3.59779260768809\\
9	3.31674068315681\\
10	3.31333793036225\\
11	3.04713111466424\\
12	3.04315358256758\\
13	3.03984496428581\\
14	3.03704734721854\\
15	3.03465264171304\\
16	3.03257876214666\\
17	3.03076490420858\\
18	3.02916521339915\\
19	3.02774375069987\\
20	3.02647143123135\\
};
\addplot [color=black, only marks, mark size=3pt, mark=*, mark options={solid, fill=green}]
table[row sep=crcr]{%
4	7.38980586639601\\
};

\end{axis}

\begin{axis}[%
every y tick label/.append style={font=\color{mycolor1}},
every y tick/.append style={mycolor1},
ymin=10,
ymax=25,
ylabel style={font=\color{mycolor1}},
ylabel={PAoI},
yticklabel pos=left,
xtick=\empty
]
\addplot [color=mycolor1, line width=2.0pt]
table[row sep=crcr]{%
	4	23.1376557304941\\
	5	14.2354438504117\\
	6	12.3555845806267\\
	7	12.2495330627464\\
	8	12.1495458052034\\
	9	12.9136931062824\\
	10	13.4097555273651\\
	11	14.4225993239041\\
	12	15.1799205741983\\
	13	16.1863347298771\\
	14	17.1921547749411\\
	15	18.197452269513\\
	16	18.9770658533394\\
	17	19.9785707468516\\
	18	20.9799518333521\\
	19	21.9812214335639\\
	20	22.9823899639133\\
};

\addplot [color=black, only marks, mark size=3pt, mark=*, mark options={solid, fill=green}]
table[row sep=crcr]{%
	4	23.1376557304941\\
};
\addplot [color=mycolor1, mark=*, mark options={solid}, line width=2.0pt]
table[row sep=crcr]{%
	4	inf\\
	5	inf\\
	6	inf\\
	7	inf\\
	8	17.1743404869315\\
	9	15.367378359396\\
	10	15.4673291105167\\
	11	15.4365041494544\\
	12	16.0893408065184\\
	13	16.6841816549978\\
	14	17.53356664147\\
	15	18.3378138007776\\
	16	19.368283853647\\
	17	19.9931698807448\\
	18	21.0065068136545\\
	19	22.0187664918623\\
	20	23.0300603554711\\
};

\addplot [color=black, only marks, mark size=3pt, mark=*, mark options={solid, fill=green}]
table[row sep=crcr]{%
	8	17.1743404869315\\
};

\addplot [color=mycolor1, dotted, line width=2.0pt]
table[row sep=crcr]{%
	4	11.389805866396\\
	5	10.7183815004394\\
	6	10.7917427793933\\
	7	10.8840628406182\\
	8	11.5977926076881\\
	9	12.3167406831568\\
	10	13.3133379303623\\
	11	14.0471311146642\\
	12	15.0431535825676\\
	13	16.0398449642858\\
	14	17.0370473472185\\
	15	18.034652641713\\
	16	19.0325787621467\\
	17	20.0307649042086\\
	18	21.0291652133991\\
	19	22.0277437506999\\
	20	23.0264714312313\\
};

\addplot [color=black, only marks, mark size=3pt, mark=*, mark options={solid, fill=green}]
table[row sep=crcr]{%
	4	11.389805866396\\
};
\draw [black] (rel axis cs:0.7,0.55) ellipse [x radius=0.5, y radius=1];
\node at (rel axis cs:0.65,0.65) {\textcolor{blue}{PAoI}};

\draw [black] (rel axis cs:0.7,0.08) ellipse [x radius=0.5, y radius=1];
\node at (rel axis cs:0.7,0.2) {\textcolor{red}{Waiting time}};

\end{axis}
\end{tikzpicture}%
	\fi
	\caption{PAoI (left) and average waiting time (right) for TT traffic with increasing duty cycle $T$ and different $\theta$.}
	\label{fig:AoI__WT_P} 
\end{figure}
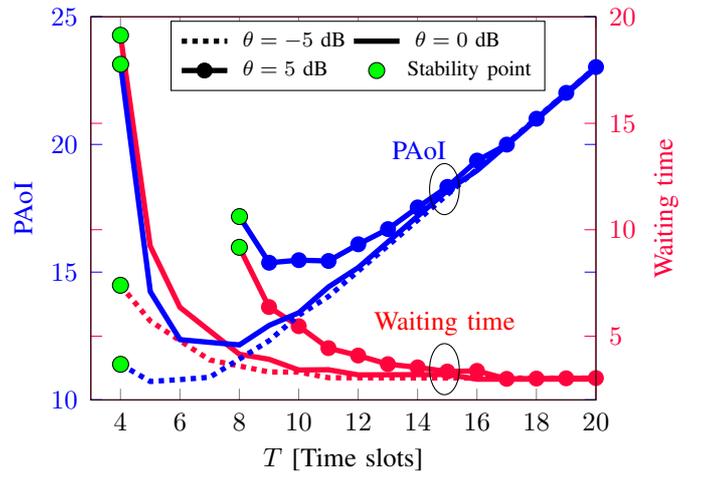 

Fig. \ref{fig:AoI__WT_P} plots the spatially averaged \ac{PAoI} along with average waiting time for versus the cycle duration $T$ for the \ac{TT} traffic model. As explained in Section \ref{sec:AoI}, the  \ac{PAoI} is sensitive to the inter-arrival and system waiting times of a randomly selected packet within the queue. First we investigate the effect of $\theta$. As $\theta$ increases, packets transmission success is subjected to a more stringent requirement on the achieved SIR. This leads to increased retransmissions, thus, increasing the mutual interference due to lower idle probabilities. The increased mutual interference hinders the successful departure of the packets from their respective queues and lead to queue instability in some devices, yielding instability (i.e., infinite waiting times and \ac{PAoI}). The figure also shows the effect of the cycle times. For high values of $T$, the large inter-arrival times is the dominant factor, yielding high values of \ac{PAoI}, while the waiting time is low. Low values of waiting times are the result of having sufficient time to transmit a residing packet, before the event of a new packet arrival. As $T$ decreases, the waiting times dominates, yielding an increase in the \ac{PAoI} till point of queue instability, as indicated by the stability point. Consequently, adopting a \ac{TT} traffic with duty cycle $T<4$ results in an unstable system and infinite \ac{PAoI}. The effect of $\theta$ on the stability frontiers can be explained in a similar fashion to that of Fig. \ref{fig:meta_sim}, where increasing $\theta$ diminishes the stability region due to the increased network-wide aggregate interference. While reduced traffic arrivals reliefs network interference and reduces delay, it is not the case for AoI because it prolongs the updates duty cycle.  Hence, there is an optimal duty cycle that minimizes the PAoI by balancing the tradeoff between frequency of updates and the aggregate network interference.

\begin{figure} 
	\centering
	\ifCLASSOPTIONdraftcls
	\definecolor{mycolor1}{rgb}{0.0, 0.0, 1.0}
\definecolor{mycolor2}{rgb}{1.0, 0.01, 0.24}
\begin{tikzpicture}
\pgfplotsset{
	width=0.75\columnwidth,
	height=1.8in,
	scale only axis,
	xmin=0.05,
	xmax=0.7,
	xticklabel style={
		/pgf/number format/fixed,
		/pgf/number format/precision=4
	},
	legend style={legend columns =2,
		at={(0.7,0.99)},
		anchor=north east
	},
	legend cell align=left,
	ticklabel shift={0.05cm},
	tick label style={/pgf/number format/1000 sep=}
}
\begin{axis}[%
every outer y axis line/.append style={mycolor2},
every y tick label/.append style={font=\color{mycolor2}},
every y tick/.append style={mycolor2},
ymin=1,
ymax=6,
ylabel style={font=\color{mycolor2}},
ylabel={Waiting time},
xlabel={$\alpha$ [packets/slot]},
yticklabel pos=right
]
\addlegendimage{legend columns=2, style={color=black, dotted, line width=2.0pt]}}
\addlegendentry{$\theta = -5 \text{ dB}$}
\addlegendimage{style={color=black, line width=2.0pt}}
\addlegendentry{$\theta = 0 \text{ dB}$}
\addlegendimage{style={color=black,
		mark=*, mark options={solid}, line width=2.0pt}}
\addlegendentry{$\theta = 5 \text{ dB}$}

\addplot [color=black, only marks, mark size=3pt, mark=*, mark options={solid, fill=green}]
table[row sep=crcr]{%
	0.15	2.13601130109271\\
};\addlegendentry{{Stability point}}

\addplot [color=mycolor2, mark=*, mark options={solid},line width=2.0pt]
table[row sep=crcr]{%
	0.05	1.17360520490003\\
	0.07	1.26722974181006\\
	0.09	1.38401587897303\\
	0.11	1.53748731610666\\
	0.13	1.75772224453968\\
	0.15	2.13601130109271\\
	0.17	inf\\
	0.19	inf\\
	0.21	inf\\
	0.23	inf\\
	0.25	inf\\
	0.27	inf\\
	0.29	inf\\
	0.31	inf\\
	0.33	inf\\
	0.35	inf\\
	0.37	inf\\
	0.39	inf\\
	0.41	inf\\
	0.43	inf\\
	0.45	inf\\
	0.47	inf\\
	0.49	inf\\
	0.51	inf\\
	0.53	inf\\
	0.55	inf\\
	0.57	inf\\
	0.59	inf\\
	0.61	inf\\
	0.63	inf\\
	0.65	inf\\
	0.67	inf\\
	0.69	inf\\
	0.71	inf\\
	0.73	inf\\
	0.75	inf\\
	0.77	inf\\
	0.79	inf\\
};
\addplot [color=black, only marks, mark size=3pt, mark=*, mark options={solid, fill=green}]
table[row sep=crcr]{%
	0.35	4.48216917628451\\
};
\addplot [color=black, only marks, mark size=3pt, mark=*, mark options={solid, fill=green}]
table[row sep=crcr]{%
	0.15	2.13601130109271\\
};

\addplot [color=mycolor2, line width=2.0pt]
table[row sep=crcr]{%
	0.05	1.09615375938418\\
	0.07	1.13935830606912\\
	0.09	1.18592187637567\\
	0.11	1.23641776944209\\
	0.13	1.29158608006148\\
	0.15	1.35235526369897\\
	0.17	1.42002093006921\\
	0.19	1.49631010569694\\
	0.21	1.58362167583047\\
	0.23	1.68563005643157\\
	0.25	1.80794772258687\\
	0.27	1.9599426257718\\
	0.29	2.15950128915614\\
	0.31	2.44583173683114\\
	0.33	2.93680803069121\\
	0.35	4.48216917628451\\
	0.37	inf\\
	0.39	inf\\
	0.41	inf\\
	0.43	inf\\
	0.45	inf\\
	0.47	inf\\
	0.49	inf\\
	0.51	inf\\
	0.53	inf\\
	0.55	inf\\
	0.57	inf\\
	0.59	inf\\
	0.61	inf\\
	0.63	inf\\
	0.65	inf\\
	0.67	inf\\
	0.69	inf\\
	0.71	inf\\
	0.73	inf\\
	0.75	inf\\
	0.77	inf\\
	0.79	inf\\
};

\addplot [color=mycolor2, dotted, line width=2.0pt]
table[row sep=crcr]{%
	0.05	1.06612918647197\\
	0.07	1.09372447480384\\
	0.09	1.12204582214216\\
	0.11	1.15115125126206\\
	0.13	1.18110523391927\\
	0.15	1.21198065279054\\
	0.17	1.24386170039379\\
	0.19	1.27684568542407\\
	0.21	1.311041837097\\
	0.23	1.34658104244248\\
	0.25	1.38360930045667\\
	0.27	1.42231570170467\\
	0.29	1.46291350035142\\
	0.31	1.50565963631341\\
	0.33	1.5508728051319\\
	0.35	1.59894494001488\\
	0.37	1.65036779796704\\
	0.39	1.70577751842421\\
	0.41	1.76600675985318\\
	0.43	1.83213814005848\\
	0.45	1.90568622713971\\
	0.47	1.98889854047503\\
	0.49	2.08497552375595\\
	0.51	2.1991781703361\\
	0.53	2.34049783957242\\
	0.55	2.5266598304001\\
	0.57	2.79998995512125\\
	0.59	3.30689162929792\\
	0.61	5.47158987551092\\
	0.63	inf\\
	0.65	inf\\
	0.67	inf\\
	0.69	inf\\
	0.71	inf\\
	0.73	inf\\
	0.75	inf\\
	0.77	inf\\
	0.79	inf\\
};

\addplot [color=black, only marks, mark size=3pt, mark=*, mark options={solid, fill=green}]
table[row sep=crcr]{%
	0.61	5.47158987551092\\
};

\end{axis}

\begin{axis}[%
every y tick label/.append style={font=\color{mycolor1}},
every y tick/.append style={mycolor1},
ymin=4,
ymax=22,
ylabel style={font=\color{mycolor1}},
ylabel={PAoI},
yticklabel pos=left,
xtick=\empty
]
\addplot [color=mycolor1, mark=*, mark options={solid}, line width=2.0pt]
table[row sep=crcr]{%
	0.05	21.1736052049\\
	0.07	15.5529440275243\\
	0.09	12.4951269900841\\
	0.11	10.6283964070157\\
	0.13	9.45002993684737\\
	0.15	8.80267796775938\\
	0.17	inf\\
	0.19	inf\\
	0.21	inf\\
	0.23	inf\\
	0.25	inf\\
	0.27	inf\\
	0.29	inf\\
	0.31	inf\\
	0.33	inf\\
	0.35	inf\\
	0.37	inf\\
	0.39	inf\\
	0.41	inf\\
	0.43	inf\\
	0.45	inf\\
	0.47	inf\\
	0.49	inf\\
	0.51	inf\\
	0.53	inf\\
	0.55	inf\\
	0.57	inf\\
	0.59	inf\\
	0.61	inf\\
	0.63	inf\\
	0.65	inf\\
	0.67	inf\\
	0.69	inf\\
	0.71	inf\\
	0.73	inf\\
	0.75	inf\\
	0.77	inf\\
	0.79	inf\\
};

\addplot [color=black, only marks, mark size=3pt, mark=*, mark options={solid, fill=green}]
table[row sep=crcr]{%
	0.15	8.80267796775938\\
};
\addplot [color=mycolor1, line width=2.0pt]
table[row sep=crcr]{%
	0.05	21.0961537593842\\
	0.07	15.4250725917834\\
	0.09	12.2970329874868\\
	0.11	10.3273268603512\\
	0.13	8.98389377236917\\
	0.15	8.01902193036564\\
	0.17	7.30237387124569\\
	0.19	6.75946800043378\\
	0.21	6.34552643773523\\
	0.23	6.03345614338809\\
	0.25	5.80794772258687\\
	0.27	5.66364632947551\\
	0.29	5.6077771512251\\
	0.31	5.67163818844404\\
	0.33	5.96711106099424\\
	0.35	7.33931203342737\\
	0.37	inf\\
	0.39	inf\\
	0.41	inf\\
	0.43	inf\\
	0.45	inf\\
	0.47	inf\\
	0.49	inf\\
	0.51	inf\\
	0.53	inf\\
	0.55	inf\\
	0.57	inf\\
	0.59	inf\\
	0.61	inf\\
	0.63	inf\\
	0.65	inf\\
	0.67	inf\\
	0.69	inf\\
	0.71	inf\\
	0.73	inf\\
	0.75	inf\\
	0.77	inf\\
	0.79	inf\\
};

\addplot [color=black, only marks, mark size=3pt, mark=*, mark options={solid, fill=green}]
table[row sep=crcr]{%
	0.35	7.33931203342737\\
};

\addplot [color=mycolor1, dotted, line width=2.0pt]
table[row sep=crcr]{%
	0.05	21.066129186472\\
	0.07	15.3794387605181\\
	0.09	12.2331569332533\\
	0.11	10.2420603421712\\
	0.13	8.87341292622696\\
	0.15	7.87864731945721\\
	0.17	7.12621464157026\\
	0.19	6.54000358016092\\
	0.21	6.07294659900176\\
	0.23	5.694407129399\\
	0.25	5.38360930045667\\
	0.27	5.12601940540837\\
	0.29	4.91118936242039\\
	0.31	4.73146608792632\\
	0.33	4.58117583543493\\
	0.35	4.45608779715774\\
	0.37	4.35307050066974\\
	0.39	4.26988008252678\\
	0.41	4.20503115009709\\
	0.43	4.15771953540732\\
	0.45	4.12790844936193\\
	0.47	4.11655811494311\\
	0.49	4.12579185028657\\
	0.51	4.15996248406159\\
	0.53	4.22729029240261\\
	0.55	4.34484164858192\\
	0.57	4.55437592003353\\
	0.59	5.00180688353521\\
	0.61	7.110934137806\\
	0.63	inf\\
	0.65	inf\\
	0.67	inf\\
	0.69	inf\\
	0.71	inf\\
	0.73	inf\\
	0.75	inf\\
	0.77	inf\\
	0.79	inf\\
};

\addplot [color=black, only marks, mark size=3pt, mark=*, mark options={solid, fill=green}]
table[row sep=crcr]{%
	0.61	7.110934137806\\
};
\end{axis}
\end{tikzpicture}%
	\else
	\definecolor{mycolor1}{rgb}{0.0, 0.0, 1.0}
\definecolor{mycolor2}{rgb}{1.0, 0.01, 0.24}
\begin{tikzpicture}
\pgfplotsset{
	width=0.75\columnwidth,
	height=2in,
	scale only axis,
	xmin=0.05,
	xmax=0.7,
	legend style={legend columns =2,
		at={(0.8,0.99)},
		anchor=north east
	},
	legend cell align=left,
	ticklabel shift={0.05cm},
	tick label style={/pgf/number format/1000 sep=}
}
\begin{axis}[%
every outer y axis line/.append style={mycolor2},
every y tick label/.append style={font=\color{mycolor2}},
every y tick/.append style={mycolor2},
ymin=1,
ymax=6,
ylabel style={font=\color{mycolor2}},
ylabel={Waiting time},
xlabel={$\alpha$ [packets/slot]},
yticklabel pos=right
]
\addlegendimage{legend columns=2, style={color=black, dotted, line width=2.0pt]}}
\addlegendentry{\footnotesize$\theta = -5 \text{ dB}$}
\addlegendimage{style={color=black, line width=2.0pt}}
\addlegendentry{\footnotesize$\theta = 0 \text{ dB}$}
\addlegendimage{style={color=black,
		mark=*, mark options={solid}, line width=2.0pt}}
\addlegendentry{\footnotesize$\theta = 5 \text{ dB}$}

\addplot [color=black, only marks, mark size=3pt, mark=*, mark options={solid, fill=green}]
table[row sep=crcr]{%
	0.15	2.13601130109271\\
};\addlegendentry{\footnotesize{Stability point}}

\addplot [color=mycolor2, mark=*, mark options={solid},line width=2.0pt]
table[row sep=crcr]{%
	0.05	1.17360520490003\\
	0.07	1.26722974181006\\
	0.09	1.38401587897303\\
	0.11	1.53748731610666\\
	0.13	1.75772224453968\\
	0.15	2.13601130109271\\
	0.17	inf\\
	0.19	inf\\
	0.21	inf\\
	0.23	inf\\
	0.25	inf\\
	0.27	inf\\
	0.29	inf\\
	0.31	inf\\
	0.33	inf\\
	0.35	inf\\
	0.37	inf\\
	0.39	inf\\
	0.41	inf\\
	0.43	inf\\
	0.45	inf\\
	0.47	inf\\
	0.49	inf\\
	0.51	inf\\
	0.53	inf\\
	0.55	inf\\
	0.57	inf\\
	0.59	inf\\
	0.61	inf\\
	0.63	inf\\
	0.65	inf\\
	0.67	inf\\
	0.69	inf\\
	0.71	inf\\
	0.73	inf\\
	0.75	inf\\
	0.77	inf\\
	0.79	inf\\
};
\addplot [color=black, only marks, mark size=3pt, mark=*, mark options={solid, fill=green}]
table[row sep=crcr]{%
	0.35	4.48216917628451\\
};
\addplot [color=black, only marks, mark size=3pt, mark=*, mark options={solid, fill=green}]
table[row sep=crcr]{%
	0.15	2.13601130109271\\
};

\addplot [color=mycolor2, line width=2.0pt]
table[row sep=crcr]{%
	0.05	1.09615375938418\\
	0.07	1.13935830606912\\
	0.09	1.18592187637567\\
	0.11	1.23641776944209\\
	0.13	1.29158608006148\\
	0.15	1.35235526369897\\
	0.17	1.42002093006921\\
	0.19	1.49631010569694\\
	0.21	1.58362167583047\\
	0.23	1.68563005643157\\
	0.25	1.80794772258687\\
	0.27	1.9599426257718\\
	0.29	2.15950128915614\\
	0.31	2.44583173683114\\
	0.33	2.93680803069121\\
	0.35	4.48216917628451\\
	0.37	inf\\
	0.39	inf\\
	0.41	inf\\
	0.43	inf\\
	0.45	inf\\
	0.47	inf\\
	0.49	inf\\
	0.51	inf\\
	0.53	inf\\
	0.55	inf\\
	0.57	inf\\
	0.59	inf\\
	0.61	inf\\
	0.63	inf\\
	0.65	inf\\
	0.67	inf\\
	0.69	inf\\
	0.71	inf\\
	0.73	inf\\
	0.75	inf\\
	0.77	inf\\
	0.79	inf\\
};

\addplot [color=mycolor2, dotted, line width=2.0pt]
table[row sep=crcr]{%
	0.05	1.06612918647197\\
	0.07	1.09372447480384\\
	0.09	1.12204582214216\\
	0.11	1.15115125126206\\
	0.13	1.18110523391927\\
	0.15	1.21198065279054\\
	0.17	1.24386170039379\\
	0.19	1.27684568542407\\
	0.21	1.311041837097\\
	0.23	1.34658104244248\\
	0.25	1.38360930045667\\
	0.27	1.42231570170467\\
	0.29	1.46291350035142\\
	0.31	1.50565963631341\\
	0.33	1.5508728051319\\
	0.35	1.59894494001488\\
	0.37	1.65036779796704\\
	0.39	1.70577751842421\\
	0.41	1.76600675985318\\
	0.43	1.83213814005848\\
	0.45	1.90568622713971\\
	0.47	1.98889854047503\\
	0.49	2.08497552375595\\
	0.51	2.1991781703361\\
	0.53	2.34049783957242\\
	0.55	2.5266598304001\\
	0.57	2.79998995512125\\
	0.59	3.30689162929792\\
	0.61	5.47158987551092\\
	0.63	inf\\
	0.65	inf\\
	0.67	inf\\
	0.69	inf\\
	0.71	inf\\
	0.73	inf\\
	0.75	inf\\
	0.77	inf\\
	0.79	inf\\
};

\addplot [color=black, only marks, mark size=3pt, mark=*, mark options={solid, fill=green}]
table[row sep=crcr]{%
	0.61	5.47158987551092\\
};

\end{axis}

\begin{axis}[%
every y tick label/.append style={font=\color{mycolor1}},
every y tick/.append style={mycolor1},
ymin=4,
ymax=22,
ylabel style={font=\color{mycolor1}},
ylabel={PAoI},
yticklabel pos=left,
xtick=\empty
]
\addplot [color=mycolor1, mark=*, mark options={solid}, line width=2.0pt]
table[row sep=crcr]{%
	0.05	21.1736052049\\
	0.07	15.5529440275243\\
	0.09	12.4951269900841\\
	0.11	10.6283964070157\\
	0.13	9.45002993684737\\
	0.15	8.80267796775938\\
	0.17	inf\\
	0.19	inf\\
	0.21	inf\\
	0.23	inf\\
	0.25	inf\\
	0.27	inf\\
	0.29	inf\\
	0.31	inf\\
	0.33	inf\\
	0.35	inf\\
	0.37	inf\\
	0.39	inf\\
	0.41	inf\\
	0.43	inf\\
	0.45	inf\\
	0.47	inf\\
	0.49	inf\\
	0.51	inf\\
	0.53	inf\\
	0.55	inf\\
	0.57	inf\\
	0.59	inf\\
	0.61	inf\\
	0.63	inf\\
	0.65	inf\\
	0.67	inf\\
	0.69	inf\\
	0.71	inf\\
	0.73	inf\\
	0.75	inf\\
	0.77	inf\\
	0.79	inf\\
};

\addplot [color=black, only marks, mark size=3pt, mark=*, mark options={solid, fill=green}]
table[row sep=crcr]{%
	0.15	8.80267796775938\\
};
\addplot [color=mycolor1, line width=2.0pt]
table[row sep=crcr]{%
	0.05	21.0961537593842\\
	0.07	15.4250725917834\\
	0.09	12.2970329874868\\
	0.11	10.3273268603512\\
	0.13	8.98389377236917\\
	0.15	8.01902193036564\\
	0.17	7.30237387124569\\
	0.19	6.75946800043378\\
	0.21	6.34552643773523\\
	0.23	6.03345614338809\\
	0.25	5.80794772258687\\
	0.27	5.66364632947551\\
	0.29	5.6077771512251\\
	0.31	5.67163818844404\\
	0.33	5.96711106099424\\
	0.35	7.33931203342737\\
	0.37	inf\\
	0.39	inf\\
	0.41	inf\\
	0.43	inf\\
	0.45	inf\\
	0.47	inf\\
	0.49	inf\\
	0.51	inf\\
	0.53	inf\\
	0.55	inf\\
	0.57	inf\\
	0.59	inf\\
	0.61	inf\\
	0.63	inf\\
	0.65	inf\\
	0.67	inf\\
	0.69	inf\\
	0.71	inf\\
	0.73	inf\\
	0.75	inf\\
	0.77	inf\\
	0.79	inf\\
};

\addplot [color=black, only marks, mark size=3pt, mark=*, mark options={solid, fill=green}]
table[row sep=crcr]{%
	0.35	7.33931203342737\\
};

\addplot [color=mycolor1, dotted, line width=2.0pt]
table[row sep=crcr]{%
	0.05	21.066129186472\\
	0.07	15.3794387605181\\
	0.09	12.2331569332533\\
	0.11	10.2420603421712\\
	0.13	8.87341292622696\\
	0.15	7.87864731945721\\
	0.17	7.12621464157026\\
	0.19	6.54000358016092\\
	0.21	6.07294659900176\\
	0.23	5.694407129399\\
	0.25	5.38360930045667\\
	0.27	5.12601940540837\\
	0.29	4.91118936242039\\
	0.31	4.73146608792632\\
	0.33	4.58117583543493\\
	0.35	4.45608779715774\\
	0.37	4.35307050066974\\
	0.39	4.26988008252678\\
	0.41	4.20503115009709\\
	0.43	4.15771953540732\\
	0.45	4.12790844936193\\
	0.47	4.11655811494311\\
	0.49	4.12579185028657\\
	0.51	4.15996248406159\\
	0.53	4.22729029240261\\
	0.55	4.34484164858192\\
	0.57	4.55437592003353\\
	0.59	5.00180688353521\\
	0.61	7.110934137806\\
	0.63	inf\\
	0.65	inf\\
	0.67	inf\\
	0.69	inf\\
	0.71	inf\\
	0.73	inf\\
	0.75	inf\\
	0.77	inf\\
	0.79	inf\\
};

\addplot [color=black, only marks, mark size=3pt, mark=*, mark options={solid, fill=green}]
table[row sep=crcr]{%
	0.61	7.110934137806\\
};
\end{axis}
\end{tikzpicture}%
	\fi
	\caption{PAoI (left) and average waiting time (right) for ET traffic with increasing arrival probability ($\alpha$) and  $\theta$.}
	\label{fig:AoI_WT_B} 
\end{figure}
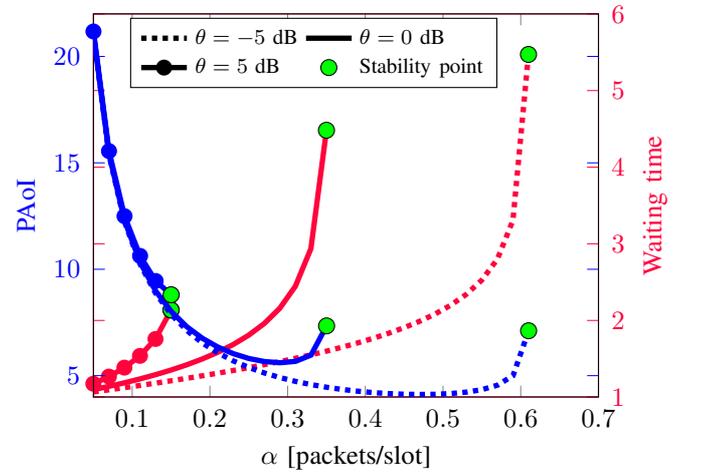

Similar to the \ac{TT} traffic, Fig. \ref{fig:AoI_WT_B} shows the \ac{PAoI} along with average waiting time for the \ac{ET} traffic with increasing arrival probability $\alpha$.
For low values of $\alpha$, the inter-arrival component dominates, yielding high values of \ac{PAoI}, while the waiting time is low. For low arrival probabilities, the network-wide aggregate interference is low, yielding higher probabilities for a packet to be successfully transmitted without large number of retransmissions. However, as $\alpha$ increases, the waiting times dominates, yielding an increase in the PAoI till point of queue instability, as indicated by the stability point.
\begin{figure} 
	\centering
	\ifCLASSOPTIONdraftcls
	\definecolor{mycolor2}{rgb}{0.24220,0.15040,0.66030}%
\definecolor{mycolor1}{rgb}{1,0.0,1.0}%
\definecolor{mycolor3}{rgb}{0.09640,0.75000,0.71204}%

\begin{tikzpicture}[scale=0.9]
\begin{groupplot}[group style={
	group name=myplot2,
	group size= 2 by 1, , horizontal sep=1.6cm}, height=2in,width=3in]

\nextgroupplot[title={{(a) TT traffic}},
scale only axis,
bar shift auto,
xmin=0.511111111111111,
xmax=10.4888888888889,
xtick={ 1,  2,  3,  4,  5,  6,  7,  8,  9, 10},
xlabel style={font=\color{white!15!black}},
xlabel={QoS class},
ymin=0,
ymax=24,
ylabel style={font=\color{white!15!black}},
ylabel={PAoI},
axis background/.style={fill=white},
legend style={legend columns=-1, legend cell align=left, align=left, draw=white!15!black}
]
\addplot[ybar, bar width=0.178, fill=mycolor1, draw=black, area legend] table[row sep=crcr] {%
	1	18.6765574223332\\
	2	18.43898864065\\
	3	18.2791523549356\\
	4	18.2710223820915\\
	5	18.2681357706962\\
	6	18.2671048015103\\
	7	18.2667699626074\\
	8	18.2666824240599\\
	9	18.2666676137498\\
	10	18.2666666690103\\
};\addlegendentry{$T = 15$}

\addplot[ybar, bar width=0.178, fill=red, draw=black, area legend] table[row sep=crcr] {%
	1	15.227690185811\\
	2	14.2003065971877\\
	3	13.918055595058\\
	4	13.6532020948606\\
	5	13.6550633019502\\
	6	13.6633685792002\\
	7	13.4034234088582\\
	8	13.4011104381432\\
	9	13.4002192744829\\
	10	13.4000076071209\\
};\addlegendentry{$T=10$}

\addplot[ybar, bar width=0.178, fill=blue, draw=black, area legend] table[row sep=crcr] {%
	1	inf\\
	2	inf\\
	3	inf\\
	4	inf\\
	5	inf\\
	6	inf\\
	7	inf\\
	8	inf\\
	9	19.9897897073938\\
	10	19.30271442752326\\
};\addlegendentry{$T=5$}

\nextgroupplot[title={{(b) ET traffic}},
scale only axis,
bar shift auto,
xmin=0.511111111111111,
xmax=10.4888888888889,
xtick={ 1,  2,  3,  4,  5,  6,  7,  8,  9, 10},
xlabel style={font=\color{white!15!black}},
xlabel={QoS class},
ymin=0,
ymax=20,
ylabel style={font=\color{white!15!black}},
ylabel={PAoI},
axis background/.style={fill=white},
legend style={legend columns=-1, legend cell align=left, align=left, draw=white!15!black}
]
\addplot[ybar, bar width=0.178, fill=mycolor1, draw=black, area legend] table[row sep=crcr] {%
	1	16.4058844959822\\
	2	16.3334436890329\\
	3	16.2965773596966\\
	4	16.2700993890142\\
	5	16.2483995128242\\
	6	16.2291417057973\\
	7	16.2109292752769\\
	8	16.192529358105\\
	9	16.1720784207814\\
	10	16.143497957223\\
};\addlegendentry{$\alpha =  1/15$}

\addplot[ybar, bar width=0.178, fill=red, draw=black, area legend] table[row sep=crcr] {%
	1	11.7898144315164\\
	2	11.526925992605\\
	3	11.3473243039206\\
	4	11.1914965487241\\
	5	11.0465752126802\\
	6	11.0073265450541\\
	7	10.99707440296838\\
	8	10.90343065629983\\
	9	11.8544326759826\\
	10	11.8298070293561\\
};\addlegendentry{$\alpha =  1/10$}

\addplot[ybar, bar width=0.178, fill=blue, draw=black, area legend] table[row sep=crcr] {%
	1	inf\\
	2	inf\\
	3	inf\\
	4	inf\\
	5	inf\\
	6	inf\\
	7	inf\\
	8	15.3505321737371\\
	9	10.7968604626857\\
	10	8.29961092528308\\
};\addlegendentry{$\alpha =  1/5$}

\end{groupplot}
\end{tikzpicture}	
	\vspace{-1cm}
	\else
	\definecolor{mycolor2}{rgb}{0.24220,0.15040,0.66030}%
\definecolor{mycolor1}{rgb}{1,0.0,1.0}%
\definecolor{mycolor3}{rgb}{0.09640,0.75000,0.71204}%

\begin{tikzpicture}[scale=0.9]
\begin{groupplot}[group style={
	group name=myplot2,
	group size= 1 by 2, , vertical sep=1.8cm}, height=2in,width=0.8\columnwidth,]

\nextgroupplot[title={{(a) TT traffic}},
scale only axis,
bar shift auto,
xmin=0.511111111111111,
xmax=10.4888888888889,
xtick={ 1,  2,  3,  4,  5,  6,  7,  8,  9, 10},
xlabel style={font=\color{white!15!black}},
xlabel={QoS class},
ymin=0,
ymax=23.5,
ylabel style={font=\color{white!15!black}},
ylabel={PAoI},
axis background/.style={fill=white},
legend style={legend columns=-1, legend cell align=left, align=left, draw=white!15!black}
]
\addplot[ybar, bar width=0.178, fill=mycolor1, draw=black, area legend] table[row sep=crcr] {%
	1	18.6765574223332\\
	2	18.43898864065\\
	3	18.2791523549356\\
	4	18.2710223820915\\
	5	18.2681357706962\\
	6	18.2671048015103\\
	7	18.2667699626074\\
	8	18.2666824240599\\
	9	18.2666676137498\\
	10	18.2666666690103\\
};\addlegendentry{$T = 15$}

\addplot[ybar, bar width=0.178, fill=red, draw=black, area legend] table[row sep=crcr] {%
	1	15.227690185811\\
	2	14.2003065971877\\
	3	13.918055595058\\
	4	13.6532020948606\\
	5	13.6550633019502\\
	6	13.6633685792002\\
	7	13.4034234088582\\
	8	13.4011104381432\\
	9	13.4002192744829\\
	10	13.4000076071209\\
};\addlegendentry{$T=10$}

\addplot[ybar, bar width=0.178, fill=blue, draw=black, area legend] table[row sep=crcr] {%
	1	inf\\
	2	inf\\
	3	inf\\
	4	inf\\
	5	inf\\
	6	inf\\
	7	inf\\
	8	inf\\
	9	19.9897897073938\\
	10	19.30271442752326\\
};\addlegendentry{$T=5$}

\nextgroupplot[title={{(b) ET traffic}},
scale only axis,
bar shift auto,
xmin=0.511111111111111,
xmax=10.4888888888889,
xtick={ 1,  2,  3,  4,  5,  6,  7,  8,  9, 10},
xlabel style={font=\color{white!15!black}},
xlabel={QoS class},
ymin=0,
ymax=20,
ylabel style={font=\color{white!15!black}},
ylabel={PAoI},
axis background/.style={fill=white},
legend style={legend columns=-1, legend cell align=left, align=left, draw=white!15!black}
]
\addplot[ybar, bar width=0.178, fill=mycolor1, draw=black, area legend] table[row sep=crcr] {%
	1	16.4058844959822\\
	2	16.3334436890329\\
	3	16.2965773596966\\
	4	16.2700993890142\\
	5	16.2483995128242\\
	6	16.2291417057973\\
	7	16.2109292752769\\
	8	16.192529358105\\
	9	16.1720784207814\\
	10	16.143497957223\\
};\addlegendentry{$\alpha =  1/15$}

\addplot[ybar, bar width=0.178, fill=red, draw=black, area legend] table[row sep=crcr] {%
	1	11.7898144315164\\
	2	11.526925992605\\
	3	11.3473243039206\\
	4	11.1914965487241\\
	5	11.0465752126802\\
	6	11.0073265450541\\
	7	10.99707440296838\\
	8	10.90343065629983\\
	9	11.8544326759826\\
	10	11.8298070293561\\
};\addlegendentry{$\alpha =  1/10$}

\addplot[ybar, bar width=0.178, fill=blue, draw=black, area legend] table[row sep=crcr] {%
	1	inf\\
	2	inf\\
	3	inf\\
	4	inf\\
	5	inf\\
	6	inf\\
	7	inf\\
	8	15.3505321737371\\
	9	10.7968604626857\\
	10	8.29961092528308\\
};\addlegendentry{$\alpha =  1/5$}

\end{groupplot}
\end{tikzpicture}	
	\fi
	\caption{\ac{PAoI} for $N = 10$ QoS classes and $\theta = 5$ dB.}
	\label{fig:AoI_perClass} 
\end{figure}
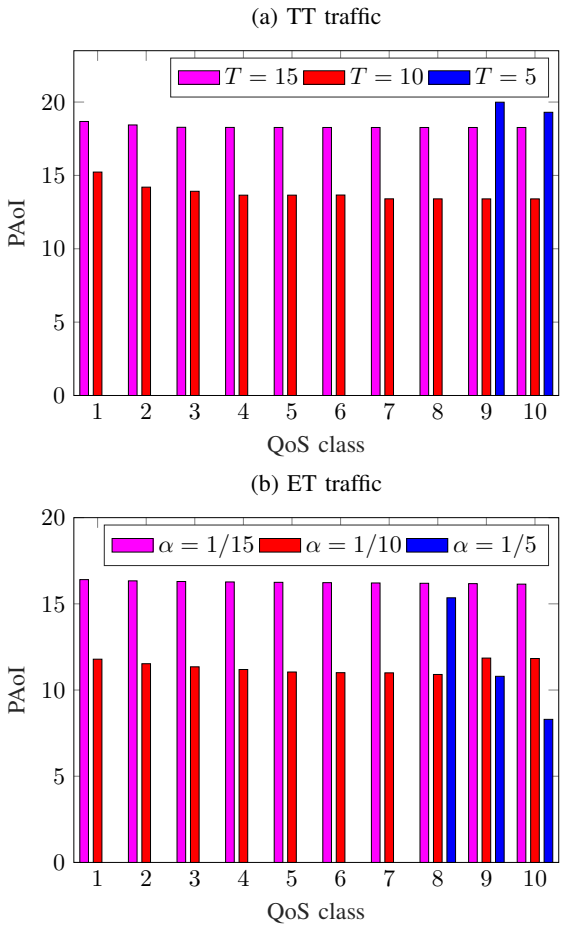 

Fig. \ref{fig:AoI_perClass} presents the per-QoS class \ac{PAoI} among the different \ac{QoS} classes within the network. The shown classes are sorted in an ascending order with respect to $d_n$ (i.e., a device belonging to class $i$ is spatially located closer to its serving \ac{BS} compared to a device belonging to class $j$, such that $i>j$). The \ac{TT} and \ac{ET} traffic models are shown in Fig. \ref{fig:AoI_perClass}(a) and Fig. \ref{fig:AoI_perClass}(b), respectively. For $T =15 \;(\alpha=1/15)$, the inter-arrival times dominates the \ac{PAoI}, leading to a nearly-constant \ac{PAoI} over all the classes. The location-dependency is more clear as $T (\alpha)$ decreases (increases).  for $\alpha=0.15$ and $\alpha=0.25$. Consequently, classes with lower indices experience large \ac{PAoI} due to their larger waiting times (i.e., effect of the location dependency captured via the meta distribution). For large traffic load (i.e., $ T = 5 \; (\alpha = 1/5$)), all except last two and three classes are unstable, for the \ac{TT} and \ac{ET} traffic models, respectively. As mentioned earlier, unstable queues results in infinite \ac{PAoI}.

\begin{figure*}
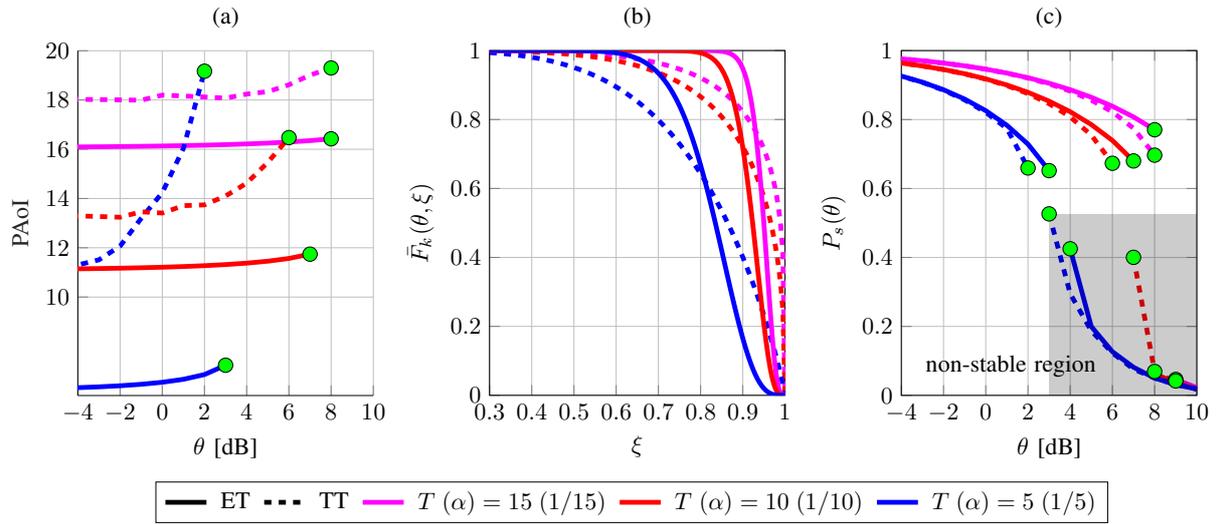

	\centering
	\ifCLASSOPTIONdraftcls
	\input{simulation_results/figures/det_geo_comp/1col/det_geo_comp.tex}	
	\vspace{-0.1in}
	\else
	\input{simulation_results/figures/det_geo_comp/1col/det_geo_comp.tex}	
	\fi
	\caption{TT and ET traffic models comparison based on (a) PAoI (b) meta distribution for $\theta = 1$ (c) TSP.}
	\label{fig:det_geo_comp} 
\end{figure*}
\begin{figure} 
	\centering
	\ifCLASSOPTIONdraftcls
	\definecolor{mycolor1}{rgb}{0.00000,1.00000,1.00000}%
\definecolor{mycolor2}{rgb}{0.00000,0.44700,0.74100}%
\definecolor{mycolor3}{rgb}{0.85000,0.32500,0.09800}%
\definecolor{mycolor4}{rgb}{0.49020,0.18039,0.56078}%
\definecolor{mycolor5}{rgb}{0.55,0.71,0.0}

\begin{tikzpicture}
\begin{groupplot}[group style={
	group name=myplot2,
	group size= 2 by 1, , horizontal sep=2cm}, height=2in,width=2.5in]
\nextgroupplot[title={{\smash{(a) TT traffic}}},
scale only axis,
xmin=-2,
xmax=10,
xlabel={$\theta$ [dB]},
ymin=4,
ymax=18,
y dir=reverse,
ylabel={$T$},
axis x line*=bottom,
axis y line*=left,
ytick={18,16,14,12,10,8,6,4}, 
grid=both]
\addplot[draw=none,name path=B] {18};     

\addplot[color=black, name path=data1] table[row sep=crcr]{%
	-5	4\\
	-4	4\\
	-3	4\\
	-2	4\\
	-1	4\\
	0	4\\
	1	4\\
	2	4\\
	3	4\\
	4	4.2\\
	5	4.8\\
	6	5.6\\
	7	7\\
	8	9.4\\
	9	11.6666666666667\\
	10	16\\
	12   18\\
};
\addplot[fill=mycolor1] fill between[of=data1 and B, soft clip={domain=-4:12}]; 

\addplot[color=black, name path=data2] table[row sep=crcr]{%
	-5	4\\
	-4	4\\
	-3	4\\
	-2	4\\
	-1	4\\
	0	4\\
	1	4\\
	2	4.2\\
	3	4.6\\
	4	5.2\\
	5	6.2\\
	6	7.6\\
	7	9.6\\
	8	12\\
	9	14.6666666666667\\
	10	18\\
};

\addplot[fill=mycolor2] fill between[of=data2 and B, soft clip={domain=-4:12}]; 

\addplot[color=black, name path=data3] table[row sep=crcr]{%
	-5	4\\
	-4	4\\
	-3	4\\
	-2	4\\
	-1	4\\
	0	4\\
	1	4.2\\
	2	4.4\\
	3	4.8\\
	4	5.6\\
	5	6.8\\
	6	8.2\\
	7	10.2\\
	8	12.6\\
	9	15\\
	10	18\\
};
\addplot[fill=mycolor3] fill between[of=data3 and B, soft clip={domain=-4:12}]; 

\addplot[color=black, name path=data4] table[row sep=crcr]{%
	-5	4\\
	-4	4\\
	-3	4\\
	-2	4\\
	-1	4\\
	0	4.2\\
	1	4.4\\
	2	4.8\\
	3	5.4\\
	4	6.2\\
	5	7.2\\
	6	8.6\\
	7	10.4\\
	8	12.6\\
	9	15\\
	10	18\\
};
\addplot[fill=mycolor4] fill between[of=data4 and B, soft clip={domain=-4:12}]; 

\addplot[color=black, dotted, line width=2.0pt, name path=data5] table[row sep=crcr]{%
	-5	4\\
	-4	4\\
	-3	4\\
	-2	4\\
	-1	4\\
	0	4.2\\
	1	4.6\\
	2	5\\
	3	5.6\\
	4	6.4\\
	5	7.4\\
	6	8.6\\
	7	10.4\\
	8	12.6\\
	9	15\\
	10	18\\
};
\addplot[fill=green,fill opacity=0.5] fill between[of=data5 and B]; 

\draw[->, line width=0.5mm](rel axis cs:0.5,0.5) -- (rel axis cs:0.8,0.9);
\node [] at (rel axis cs:0.48,0.44) {Increasing QoS class};

\nextgroupplot[title={{(b) ET traffic}},
scale only axis,
xmin=-5,
xmax=10,
xlabel={$\theta$ [dB]},
ymin=0.05,
ymax=0.8,
ylabel={$\alpha$},
ytick={0.1,0.2,0.3,0.4,0.5,0.6,0.7,0.8}, 
ylabel style={font=\color{white!15!black}},
xlabel style={font=\color{white!15!black}},
axis x line*=bottom,
axis y line*=left,
xmajorgrids,
ymajorgrids,
grid=both]

\addplot[fill=red,fill opacity=0.3, draw=black] table[row sep=crcr]{%
	-5	0.91\\
	-4	0.876666666666667\\
	-3	0.842\\
	-2	0.802\\
	-1	0.758\\
	0	0.702\\
	1	0.642\\
	2	0.574\\
	3	0.502\\
	4	0.426\\
	5	0.354\\
	6	0.282\\
	7	0.218\\
	8	0.162\\
	9	0.11\\
	10	0.07\\
}
\closedcycle;

\addplot[fill=red,fill opacity=0.4, draw=black] table[row sep=crcr]{%
	-5	0.83\\
	-4	0.79\\
	-3	0.746\\
	-2	0.694\\
	-1	0.638\\
	0	0.578\\
	1	0.51\\
	2	0.442\\
	3	0.374\\
	4	0.306\\
	5	0.242\\
	6	0.186\\
	7	0.138\\
	8	0.102\\
	9	0.07\\
	10	0.05\\
}
\closedcycle;

\addplot[fill=red,fill opacity=0.5, draw=black] table[row sep=crcr]{%
	-5	0.79\\
	-4	0.736666666666667\\
	-3	0.682\\
	-2	0.626\\
	-1	0.566\\
	0	0.502\\
	1	0.438\\
	2	0.374\\
	3	0.31\\
	4	0.254\\
	5	0.202\\
	6	0.158\\
	7	0.122\\
	8	0.094\\
	9	0.07\\
	10	0.05\\
}
\closedcycle;

\addplot[fill=red,fill opacity=0.6, draw=black] table[row sep=crcr]{%
	-5	0.73\\
	-4	0.676666666666667\\
	-3	0.622\\
	-2	0.566\\
	-1	0.51\\
	0	0.45\\
	1	0.39\\
	2	0.33\\
	3	0.274\\
	4	0.226\\
	5	0.182\\
	6	0.146\\
	7	0.118\\
	8	0.094\\
	9	0.07\\
	10	0.05\\
}
\closedcycle;

\addplot[fill=red,fill opacity=0.7, draw=black] table[row sep=crcr]{%
	-5	0.65\\
	-4	0.603333333333333\\
	-3	0.546\\
	-2	0.494\\
	-1	0.438\\
	0	0.386\\
	1	0.338\\
	2	0.294\\
	3	0.25\\
	4	0.214\\
	5	0.178\\
	6	0.146\\
	7	0.118\\
	8	0.094\\
	9	0.07\\
	10	0.05\\
}
\closedcycle;

\addplot [color=black, dotted, line width=2.0pt]table[row sep=crcr]{%
	-5	0.65\\
	-4	0.603333333333333\\
	-3	0.546\\
	-2	0.494\\
	-1	0.438\\
	0	0.386\\
	1	0.338\\
	2	0.294\\
	3	0.25\\
	4	0.214\\
	5	0.178\\
	6	0.146\\
	7	0.118\\
	8	0.094\\
	9	0.07\\
	10	0.05\\
};

\draw[->, line width=0.5mm](rel axis cs:0.4,0.23) -- (rel axis cs:0.7,0.5);
\node [] at (rel axis cs:0.35,0.17) {Increasing QoS class};

\end{groupplot}

\end{tikzpicture}		
	\else
	\definecolor{mycolor1}{rgb}{0.00000,1.00000,1.00000}%
\definecolor{mycolor2}{rgb}{0.00000,0.44700,0.74100}%
\definecolor{mycolor3}{rgb}{0.85000,0.32500,0.09800}%
\definecolor{mycolor4}{rgb}{0.49020,0.18039,0.56078}%
\definecolor{mycolor5}{rgb}{0.55,0.71,0.0}

\begin{tikzpicture}
\begin{groupplot}[group style={
	group name=myplot2,
	group size= 1 by 2, , vertical sep=2.4cm}, height=2in,width=0.7\columnwidth]
\nextgroupplot[title={{\smash{(a) TT traffic}}},
scale only axis,
xmin=-2,
xmax=10,
xlabel={$\theta$ [dB]},
ymin=4,
ymax=18,
y dir=reverse,
ylabel={$T$},
axis x line*=bottom,
axis y line*=left,
ytick={18,16,14,12,10,8,6,4}, 
grid=both]
\addplot[draw=none,name path=B] {18};     

\addplot[color=black, name path=data1] table[row sep=crcr]{%
	-5	4\\
	-4	4\\
	-3	4\\
	-2	4\\
	-1	4\\
	0	4\\
	1	4\\
	2	4\\
	3	4\\
	4	4.2\\
	5	4.8\\
	6	5.6\\
	7	7\\
	8	9.4\\
	9	11.6666666666667\\
	10	16\\
	12   18\\
};
\addplot[fill=mycolor1] fill between[of=data1 and B, soft clip={domain=-4:12}]; 

\addplot[color=black, name path=data2] table[row sep=crcr]{%
	-5	4\\
	-4	4\\
	-3	4\\
	-2	4\\
	-1	4\\
	0	4\\
	1	4\\
	2	4.2\\
	3	4.6\\
	4	5.2\\
	5	6.2\\
	6	7.6\\
	7	9.6\\
	8	12\\
	9	14.6666666666667\\
	10	18\\
};

\addplot[fill=mycolor2] fill between[of=data2 and B, soft clip={domain=-4:12}]; 

\addplot[color=black, name path=data3] table[row sep=crcr]{%
	-5	4\\
	-4	4\\
	-3	4\\
	-2	4\\
	-1	4\\
	0	4\\
	1	4.2\\
	2	4.4\\
	3	4.8\\
	4	5.6\\
	5	6.8\\
	6	8.2\\
	7	10.2\\
	8	12.6\\
	9	15\\
	10	18\\
};
\addplot[fill=mycolor3] fill between[of=data3 and B, soft clip={domain=-4:12}]; 

\addplot[color=black, name path=data4] table[row sep=crcr]{%
	-5	4\\
	-4	4\\
	-3	4\\
	-2	4\\
	-1	4\\
	0	4.2\\
	1	4.4\\
	2	4.8\\
	3	5.4\\
	4	6.2\\
	5	7.2\\
	6	8.6\\
	7	10.4\\
	8	12.6\\
	9	15\\
	10	18\\
};
\addplot[fill=mycolor4] fill between[of=data4 and B, soft clip={domain=-4:12}]; 

\addplot[color=black, dotted, line width=2.0pt, name path=data5] table[row sep=crcr]{%
	-5	4\\
	-4	4\\
	-3	4\\
	-2	4\\
	-1	4\\
	0	4.2\\
	1	4.6\\
	2	5\\
	3	5.6\\
	4	6.4\\
	5	7.4\\
	6	8.6\\
	7	10.4\\
	8	12.6\\
	9	15\\
	10	18\\
};
\addplot[fill=green,fill opacity=0.5] fill between[of=data5 and B]; 

\draw[->, line width=0.5mm](rel axis cs:0.5,0.5) -- (rel axis cs:0.8,0.9);
\node [] at (rel axis cs:0.48,0.44) {Increasing QoS class};

\nextgroupplot[title={{(b) ET traffic}},
scale only axis,
xmin=-5,
xmax=10,
xlabel={$\theta$ [dB]},
ymin=0.05,
ymax=0.8,
ylabel={$\alpha$},
ytick={0.1,0.2,0.3,0.4,0.5,0.6,0.7,0.8}, 
ylabel style={font=\color{white!15!black}},
xlabel style={font=\color{white!15!black}},
axis x line*=bottom,
axis y line*=left,
xmajorgrids,
ymajorgrids,
grid=both]

\addplot[fill=red,fill opacity=0.3, draw=black] table[row sep=crcr]{%
	-5	0.91\\
	-4	0.876666666666667\\
	-3	0.842\\
	-2	0.802\\
	-1	0.758\\
	0	0.702\\
	1	0.642\\
	2	0.574\\
	3	0.502\\
	4	0.426\\
	5	0.354\\
	6	0.282\\
	7	0.218\\
	8	0.162\\
	9	0.11\\
	10	0.07\\
}
\closedcycle;

\addplot[fill=red,fill opacity=0.4, draw=black] table[row sep=crcr]{%
	-5	0.83\\
	-4	0.79\\
	-3	0.746\\
	-2	0.694\\
	-1	0.638\\
	0	0.578\\
	1	0.51\\
	2	0.442\\
	3	0.374\\
	4	0.306\\
	5	0.242\\
	6	0.186\\
	7	0.138\\
	8	0.102\\
	9	0.07\\
	10	0.05\\
}
\closedcycle;

\addplot[fill=red,fill opacity=0.5, draw=black] table[row sep=crcr]{%
	-5	0.79\\
	-4	0.736666666666667\\
	-3	0.682\\
	-2	0.626\\
	-1	0.566\\
	0	0.502\\
	1	0.438\\
	2	0.374\\
	3	0.31\\
	4	0.254\\
	5	0.202\\
	6	0.158\\
	7	0.122\\
	8	0.094\\
	9	0.07\\
	10	0.05\\
}
\closedcycle;

\addplot[fill=red,fill opacity=0.6, draw=black] table[row sep=crcr]{%
	-5	0.73\\
	-4	0.676666666666667\\
	-3	0.622\\
	-2	0.566\\
	-1	0.51\\
	0	0.45\\
	1	0.39\\
	2	0.33\\
	3	0.274\\
	4	0.226\\
	5	0.182\\
	6	0.146\\
	7	0.118\\
	8	0.094\\
	9	0.07\\
	10	0.05\\
}
\closedcycle;

\addplot[fill=red,fill opacity=0.7, draw=black] table[row sep=crcr]{%
	-5	0.65\\
	-4	0.603333333333333\\
	-3	0.546\\
	-2	0.494\\
	-1	0.438\\
	0	0.386\\
	1	0.338\\
	2	0.294\\
	3	0.25\\
	4	0.214\\
	5	0.178\\
	6	0.146\\
	7	0.118\\
	8	0.094\\
	9	0.07\\
	10	0.05\\
}
\closedcycle;

\addplot [color=black, dotted, line width=2.0pt]table[row sep=crcr]{%
	-5	0.65\\
	-4	0.603333333333333\\
	-3	0.546\\
	-2	0.494\\
	-1	0.438\\
	0	0.386\\
	1	0.338\\
	2	0.294\\
	3	0.25\\
	4	0.214\\
	5	0.178\\
	6	0.146\\
	7	0.118\\
	8	0.094\\
	9	0.07\\
	10	0.05\\
};

\draw[->, line width=0.5mm](rel axis cs:0.4,0.23) -- (rel axis cs:0.7,0.5);
\node [] at (rel axis cs:0.37,0.17) {Increasing QoS class};

\end{groupplot}

\end{tikzpicture}		
	\fi
	\caption{Pareto frontiers between detection threshold and traffic load for $N=5$. Dashed lines represent the spatially averaged frontiers.}
	\label{fig:Pareto} 
\end{figure}
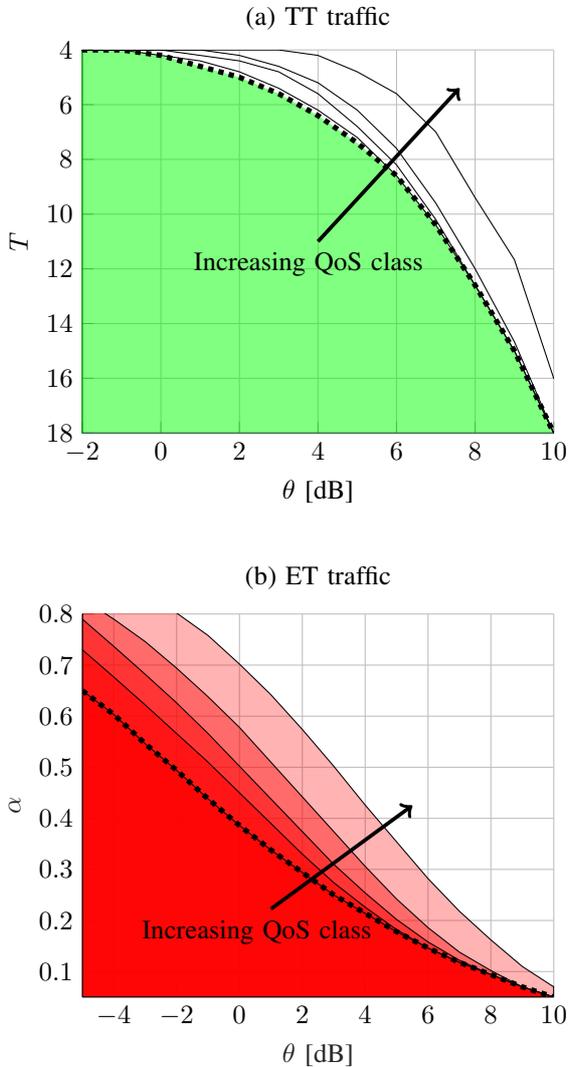 
Next we assess the \ac{TT} and \ac{ET} traffic based on their \ac{PAoI}, meta distribution, and \acp{TSP} for three different traffic loads. Fig. \ref{fig:det_geo_comp} presents different performance comparisons between the two traffic models. First, Fig. \ref{fig:det_geo_comp}(a) shows the \ac{PAoI} as a function of increasing detection threshold $\theta$. It is observed that the \ac{ET} traffic provides lower \ac{PAoI} for all the considered set of traffic loads. Although, it was shown in \cite{Talak2018} that periodic packet generation  minimizes the age for the \ac{FCFS} queues, considering the network-wide aggregate interference into the age analysis provides another perspective. As mentioned in Section \ref{sec:sg_analysis}, the \ac{TT} traffic model imposes a spatial and temporal correlation between the devices. In particular, each device sees the same set of active (i.e., interfering) devices in each transmission cycle $T$. Such correlation is alleviated in the \ac{ET} traffic model, in which the activity profiles are diversified among different time slots. This performance gap between the \ac{TT} and \ac{ET} traffic is larger for low duty cycles (or high arrival probabilities), due to the stronger interference correlation in such scenarios. As the activity profiles are more relaxed (i.e., $T$ $(\alpha)$ increases (decreases)), the gap between the two traffic models decreases. Next, Fig. \ref{fig:det_geo_comp}(b) presents the meta distribution for the considered traffic loads. As the traffic load increases for the two traffic models, the percentile of devices achieving a given reliability (i.e., $\xi$) decreases as explained in Fig. \ref{fig:meta_sim}. In addition, one can observe the discrepancies  between the \ac{TT} and \ac{ET} traffic considering the similar traffic load. Such discrepancies are hardly captured by the spatially averages $P_s(\theta)$, which emphasizes the importance of the meta distribution as shown in \ref{fig:det_geo_comp}(c). In addition, a sharper transition in the meta distribution implies less location-dependent performance (i.e., less temporal interference correlation) and that all devices tend to operate as a typical device. Due to the aforementioned explained correlation between the active devices, the \ac{TT} traffic provides lower \acp{TSP} for  all the considered traffic loads. The stability point, depicted by green circles, represent the point at which the queues are unstable. Any operation beyond such a point yields in operating in the non-stable region. 

Finally, Fig. \ref{fig:Pareto} presents the Pareto frontiers for the arrival intensity of the \ac{ET} and \ac{TT} traffic with the detection threshold over the $N$ QoS classes. Pareto frontiers define regions where the queues are guaranteed to be operating within a stable region. First, Fig. \ref{fig:Pareto}(a) shows the relation between the arrival probability and the detection threshold $\theta$ for the existing five QoS classes. Due to retransmissions, a higher $\theta$ implies lower idle probability, and hence, higher aggregate network interference allowing lower values of $\alpha$ to ensure stability. In addition, due to the favorable spatial locations of the higher QoS classes compared to the lower ones, the Pareto frontiers for those higher classes are covering a larger set of $(\theta, \alpha)$ values. Similarly, Fig. \ref{fig:Pareto}(b) presents the Pareto frontiers between $\theta$ and the cyclic time $T$. The curves explanation follows that of the \ac{ET} traffic, since $T$ represents the arrival events, comparable to $\alpha$.

	\thispagestyle{empty}
\section{Conclusion}\label{sec:Conclusion}

This paper presents a mathematical spatiotemporal framework to characterize the peak age of information (\ac{PAoI}) in \ac{IoT} uplink networks for time-triggered (TT) and event-triggered (ET) traffic. First, we leverage tools from stochastic geometry to analyze the location-dependent performance of the network under the two traffic models. Expressions for the network-wide aggregate interference are presented in the context of the location-aware meta distribution. Furthermore, we analyze the interdependency between the aggregate network wide-interference and the queues evolution at each device.  Additionally, a novel PH/Geo/1 queueing model is proposed to model the periodic traffic generation for the \ac{TT} traffic at each device. Expressions for the average waiting time and \ac{PAoI} are derived to the TT and ET traffic models. To this end, simulation results are presented to validate the proposed framework. The results unveil the counterintuitive lower PAoI of ET traffic over the TT traffic, which is due to the higher temporal interference correlations of the TT traffic. In addition, the stability frontiers coupling the network's traffic load and decoding threshold are presented and their effect on the PAoI are discussed.
	\thispagestyle{empty}
\bibliographystyle{./lib/IEEEtran.cls}
\bibliography{./literature/Literature_Local}

\begin{thebibliography}{10}
\providecommand{\url}[1]{#1}
\csname url@samestyle\endcsname
\providecommand{\newblock}{\relax}
\providecommand{\bibinfo}[2]{#2}
\providecommand{\BIBentrySTDinterwordspacing}{\spaceskip=0pt\relax}
\providecommand{\BIBentryALTinterwordstretchfactor}{4}
\providecommand{\BIBentryALTinterwordspacing}{\spaceskip=\fontdimen2\font plus
\BIBentryALTinterwordstretchfactor\fontdimen3\font minus
  \fontdimen4\font\relax}
\providecommand{\BIBforeignlanguage}[2]{{%
\expandafter\ifx\csname l@#1\endcsname\relax
\typeout{** WARNING: IEEEtran.bst: No hyphenation pattern has been}%
\typeout{** loaded for the language `#1'. Using the pattern for}%
\typeout{** the default language instead.}%
\else
\language=\csname l@#1\endcsname
\fi
#2}}
\providecommand{\BIBdecl}{\relax}
\BIBdecl

\bibitem{Emara2020}
M.~{Emara}, H.~{ElSawy}, and G.~{Bauch}, ``A spatiotemporal framework for
  information freshness in {IoT} uplink networks,'' \emph{Submitted to 2020
  IEEE International Conference on Communications (ICC)}, 2019.

\bibitem{3GPP2018}
3GPP, ``{TS} 22.261 service requirements for next generation new services and
  markets,'' \emph{3rd Generation Partnership Project (3GPP), v16.8.0}, 2019,.

\bibitem{Kim2012}
K.~{Kim} and P.~R. {Kumar}, ``Cyber–physical systems: A perspective at the
  centennial,'' \emph{Proceedings of the IEEE}, vol. 100, no. Special
  Centennial Issue, pp. 1287--1308, May 2012.

\bibitem{NGMNA2016}
NGMNA, ``Recommendations for {NGMN KPIs} and requirements for 5{G},''
  \emph{Next Generation Mobile Networks Alliance}, 2016.

\bibitem{Fuqaha2015}
A.~A. {et al}, ``Internet of things: A survey on enabling technologies,
  protocols, and applications,'' \emph{IEEE Communications Surveys Tutorials},
  vol.~17, no.~4, pp. 2347--2376, Fourthquarter 2015.

\bibitem{Soltanmohammadi2016}
E.~{Soltanmohammadi}, K.~{Ghavami}, and M.~{Naraghi-Pour}, ``A survey of
  traffic issues in machine-to-machine communications over {LTE},'' \emph{IEEE
  Internet of Things Journal}, vol.~3, no.~6, pp. 865--884, Dec 2016.

\bibitem{Metzger2019}
F.~{Metzger} \emph{et~al.}, ``Modeling of aggregated {IoT} traffic and its
  application to an {IoT} cloud,'' \emph{Proceedings of the IEEE}, vol. 107,
  no.~4, pp. 679--694, April 2019.

\bibitem{Palattella2016}
M.~R. {Palattella} \emph{et~al.}, ``Internet of things in the {5G} era:
  Enablers, architecture, and business models,'' \emph{IEEE Journal on Selected
  Areas in Communications}, vol.~34, no.~3, pp. 510--527, March 2016.

\bibitem{Gupta2017}
V.~{Gupta} \emph{et~al.}, ``Modelling of iot traffic and its impact on
  lorawan,'' in \emph{GLOBECOM 2017 - 2017 IEEE Global Communications
  Conference}, Dec 2017.

\bibitem{3GPP2018vert}
3GPP, ``{TR} 22.804 study on communication for automation in vertical
  domains,'' \emph{3rd Generation Partnership Project (3GPP), v16.2.0}, 2018,.

\bibitem{Kaul2012}
S.~{Kaul}, R.~{Yates}, and M.~{Gruteser}, ``Real-time status: How often should
  one update?'' in \emph{2012 Proceedings IEEE INFOCOM}, March 2012, pp.
  2731--2735.

\bibitem{Yates2019}
R.~D. {Yates} and S.~K. {Kaul}, ``The age of information: Real-time status
  updating by multiple sources,'' \emph{IEEE Transactions on Information
  Theory}, vol.~65, no.~3, pp. 1807--1827, March 2019.

\bibitem{Talak2018}
\BIBentryALTinterwordspacing
R.~{Talak}, S.~{Karaman}, and E.~{Modiano}, ``Can determinacy minimize age of
  information?'' \emph{CoRR}, vol. abs/1810.04371, 2018. [Online]. Available:
  \url{http://arxiv.org/abs/1810.04371}
\BIBentrySTDinterwordspacing

\bibitem{Ceran2018}
E.~T. {Ceran}, D.~{Gündüz}, and A.~{György}, ``Average age of information
  with hybrid {ARQ} under a resource constraint,'' in \emph{2018 IEEE Wireless
  Communications and Networking Conference (WCNC)}, April 2018.

\bibitem{Kam2013}
C.~{Kam}, S.~{Kompella}, and A.~{Ephremides}, ``Age of information under random
  updates,'' in \emph{2013 IEEE International Symposium on Information Theory},
  July 2013, pp. 66--70.

\bibitem{Kaul2012b}
S.~K. {Kaul}, R.~D. {Yates}, and M.~{Gruteser}, ``Status updates through
  queues,'' in \emph{2012 46th Annual Conference on Information Sciences and
  Systems (CISS)}, March 2012, pp. 1--6.

\bibitem{Kaul2018}
S.~K. {Kaul} and R.~D. {Yates}, ``Age of information: Updates with priority,''
  in \emph{2018 IEEE International Symposium on Information Theory (ISIT)},
  June 2018, pp. 2644--2648.

\bibitem{Pappas2015}
N.~{Pappas} \emph{et~al.}, ``Age of information of multiple sources with queue
  management,'' in \emph{2015 IEEE International Conference on Communications
  (ICC)}, June 2015, pp. 5935--5940.

\bibitem{Huang2015}
L.~{Huang} and E.~{Modiano}, ``Optimizing age-of-information in a multi-class
  queueing system,'' in \emph{2015 IEEE International Symposium on Information
  Theory (ISIT)}, June 2015, pp. 1681--1685.

\bibitem{Costa2016}
M.~{Costa}, M.~{Codreanu}, and A.~{Ephremides}, ``On the age of information in
  status update systems with packet management,'' \emph{IEEE Transactions on
  Information Theory}, vol.~62, no.~4, pp. 1897--1910, April 2016.

\bibitem{He2016}
Q.~{He}, D.~{Yuan}, and A.~{Ephremides}, ``On optimal link scheduling with
  min-max peak age of information in wireless systems,'' in \emph{2016 IEEE
  International Conference on Communications (ICC)}, May 2016.

\bibitem{AoIDhillon}
M.~A. {Abd-Elmagid} and H.~S. {Dhillon}, ``Average peak age-of-information
  minimization in {UAV}-assisted {IoT} networks,'' \emph{IEEE Transactions on
  Vehicular Technology}, vol.~68, no.~2, Feb 2019.

\bibitem{Xu2019}
C.~{Xu} \emph{et~al.}, ``Optimizing information freshness in computing enabled
  {IoT} networks,'' \emph{IEEE Internet of Things Journal}, pp. 1--1, 2019.

\bibitem{Ayoub2018}
W.~{Ayoub} \emph{et~al.}, ``Internet of mobile things: Overview of {LoRaWAN,
  DASH7, and NB-IoT in LPWANs} standards and supported mobility,'' \emph{IEEE
  Communications Surveys Tutorials}, pp. 1--1, 2018.

\bibitem{Andrews2011}
J.~G. {Andrews}, F.~{Baccelli}, and R.~K. {Ganti}, ``A tractable approach to
  coverage and rate in cellular networks,'' \emph{IEEE Transactions on
  Communications}, vol.~59, no.~11, pp. 3122--3134, November 2011.

\bibitem{Elsawy_tutorial}
H.~{ElSawy} \emph{et~al.}, ``Modeling and analysis of cellular networks using
  stochastic geometry: A tutorial,'' \emph{IEEE Communications Surveys
  Tutorials}, vol.~19, no.~1, pp. 167--203, Firstquarter 2017.

\bibitem{Haenggi2012}
M.~Haenggi, \emph{Stochastic Geometry for Wireless Networks}.\hskip 1em plus
  0.5em minus 0.4em\relax New York, NY, USA: Cambridge University Press, 2012.

\bibitem{Zhong2017}
Y.~{Zhong}, T.~Q.~S. {Quek}, and X.~{Ge}, ``Heterogeneous cellular networks
  with spatio-temporal traffic: Delay analysis and scheduling,'' \emph{IEEE
  Journal on Selected Areas in Communications}, vol.~35, no.~6, pp. 1373--1386,
  June 2017.

\bibitem{Gharbieh2018}
M.~{Gharbieh} \emph{et~al.}, ``Spatiotemporal model for uplink {IoT} traffic:
  Scheduling and random access paradox,'' \emph{IEEE Transactions on Wireless
  Communications}, vol.~17, no.~12, pp. 8357--8372, Dec 2018.

\bibitem{Yang2019}
H.~H. {Yang} and T.~Q.~S. {Quek}, ``Spatiotemporal analysis for {SINR} coverage
  in small cell networks,'' \emph{IEEE Transactions on Communications}, pp.
  1--1, 2019.

\bibitem{Chisci2019}
G.~{Chisci} \emph{et~al.}, ``Uncoordinated massive wireless networks:
  Spatiotemporal models and multiaccess strategies,'' \emph{IEEE/ACM
  Transactions on Networking}, 2019.

\bibitem{YangAoI2019}
Y.~H. {Yang} \emph{et~al.}, ``Locally adaptive scheduling policy for optimizing
  information freshness in wireless networks,'' in \emph{2019 IEEE Global
  Communications Conference (GLOBECOM)}, Dec 2019, pp. 1--6.

\bibitem{Hu2018}
Y.~{Hu}, Y.~{Zhong}, and W.~{Zhang}, ``Age of information in {P}oisson
  networks,'' in \emph{2018 10th International Conference on Wireless
  Communications and Signal Processing (WCSP)}, Oct 2018, pp. 1--6.

\bibitem{ElSawy2014}
H.~{ElSawy} and E.~{Hossain}, ``On stochastic geometry modeling of cellular
  uplink transmission with truncated channel inversion power control,''
  \emph{IEEE Transactions on Wireless Communications}, vol.~13, no.~8, pp.
  4454--4469, Aug 2014.

\bibitem{Singh2015}
S.~{Singh}, X.~{Zhang}, and J.~G. {Andrews}, ``Joint rate and sinr coverage
  analysis for decoupled uplink-downlink biased cell associations in hetnets,''
  \emph{IEEE Transactions on Wireless Communications}, vol.~14, no.~10, pp.
  5360--5373, Oct 2015.

\bibitem{ElSawy_meta}
H.~{ElSawy} and M.~{Alouini}, ``On the meta distribution of coverage
  probability in uplink cellular networks,'' \emph{IEEE Communications
  Letters}, vol.~21, no.~7, pp. 1625--1628, July 2017.

\bibitem{Vega2016}
F.~J. {Martin-Vega} \emph{et~al.}, ``Analytical modeling of interference aware
  power control for the uplink of heterogeneous cellular networks,'' \emph{IEEE
  Transactions on Wireless Communications}, vol.~15, no.~10, pp. 6742--6757,
  Oct 2016.

\bibitem{Renzo2016}
M.~{Di Renzo} and P.~{Guan}, ``Stochastic geometry modeling and system-level
  analysis of uplink heterogeneous cellular networks with multi-antenna base
  stations,'' \emph{IEEE Transactions on Communications}, vol.~64, no.~6, pp.
  2453--2476, June 2016.

\bibitem{Haenggi_meta}
M.~{Haenggi}, ``The meta distribution of the {SIR} in {P}oisson bipolar and
  cellular networks,'' \emph{IEEE Transactions on Wireless Communications},
  vol.~15, no.~4, pp. 2577--2589, April 2016.

\bibitem{Haenggi_meta2}
Y.~{Wang}, M.~{Haenggi}, and Z.~{Tan}, ``The meta distribution of the {SIR} for
  cellular networks with power control,'' \emph{IEEE Transactions on
  Communications}, vol.~66, no.~4, pp. 1745--1757, April 2018.

\bibitem{Loynes1962}
R.~M. Loynes, ``The stability of a queue with non-independent inter-arrival and
  service times,'' \emph{Mathematical Proceedings of the Cambridge
  Philosophical Society}, vol.~58, no.~3, p. 497–520, 1962.

\bibitem{Zhou2016}
Y.~{Zhou} and W.~{Zhuang}, ``Performance analysis of cooperative communication
  in decentralized wireless networks with unsaturated traffic,'' \emph{IEEE
  Transactions on Wireless Communications}, vol.~15, no.~5, pp. 3518--3530, May
  2016.

\bibitem{Alfa2015}
A.~S.~Alfa, \emph{Applied discrete-time queues, second edition}.\hskip 1em plus
  0.5em minus 0.4em\relax Springer-New York USA, 01 2015.

\bibitem{G.Kulkarni1999}
V.~G.~Kulkarni, ``Introduction to matrix analytic methods in stochastic
  modeling,'' \emph{Journal of Applied Mathematics and Stochastic Analysis},
  vol.~12, 01 1999.

\bibitem{ElSawy2017}
H.~{ElSawy} \emph{et~al.}, ``Modeling and analysis of cellular networks using
  stochastic geometry: A tutorial,'' \emph{IEEE Communications Surveys
  Tutorials}, vol.~19, no.~1, pp. 167--203, Firstquarter 2017.

\end{thebibliography}

	\thispagestyle{empty}
	\appendices

\section{Proof of Lemma 1}\label{se:Appendix_A}

The $b$-th moment of the \ac{TSP} can be derived from eq.(\ref{eq:PS}) as
\begin{equation}\label{app1}
M_b = \mathbb{E}^{!}_{r_i,P_i,r_o} \Big[ \prod_{\omega_i\in \tilde{\mathrm{\Phi}}_{T}}  \Big(\frac{1}{1 + \frac{\theta r_o^{\eta(1-\epsilon)}}{\rho \omega_i}} \Big)^b\Big| \hat{\mathrm{\Phi}}, \mathrm{\Psi} \Big], 
\end{equation}
where, the uplink transmission power $P_i$ of the $i$-th device is a random variable due to the employed fractional path-loss power control \cite{ElSawy2017}. In (\ref{app1}), the average is first conditioned on $r_o$ then evaluated via the probability generating functional of the \ac{PPP} with the intensity function $\tilde{\lambda}_T(\omega)$. The distribution of $r_o$ is given by $f_{r_o}(r) = 2\pi\lambda r e^{-\pi \lambda r^2}$. With some mathematical operations following \cite{ElSawy_meta}, the lemma is proved.  

\section{Proof of Lemma 4}\label{se:Appendix_B}
Based on \cite{G.Kulkarni1999},\cite{Alfa2015}, $\mathbf{R}_n$ is the minimal non-negative solution to the quadratic equation $\mathbf{R}_n = \mathbf{A}_{0,n} + \mathbf{R}_n\mathbf{A}_{1,n} + \mathbf{R}_n^2\mathbf{A}_{2,n}$. Let $\mathbf{x}_{0,n}$ and $\mathbf{x}_{1,n}$ be the solution to 
\begin{equation}\label{eq:xox1_MAM}
\Big[\mathbf{x}_{0,n} \;\; \mathbf{x}_{1,n}\Big] = \Big[\mathbf{x}_{i,n} \;\; \mathbf{x}_{i,n}\Big] \left[ \begin{array}{ll} \mathbf{B}_{1,n} & \mathbf{C} \\ \mathbf{A}_{2,n} & \mathbf{A}_{1,n}+\mathbf{R}_n\mathbf{A}_{2,n} \end{array}\right].
\end{equation}
Since $\mathbf{A}_{0,n}$ is rank 1, $\mathbf{R}_n$ can be rewritten as 
\begin{equation}
\mathbf{R}_n= \mathbf{A}_{0,n}(\mathbf{I}_{T}-\mathbf{A}_{1,n}-\mathbf{A}_{2,n}\mathbf{G}_2)^{-1},
\end{equation}
where $\mathbf{G}_n$ is the minimal non-negative solution to $\mathbf{G}_n = \mathbf{A}_{2,n} + \mathbf{A}_{1,n}\mathbf{G}_n + \mathbf{A}_{0,n}\mathbf{G}_n^2$. Following \cite{Alfa2015}[Chapter 5.9], the lemma can be proved.
	
\end{document}